\newtheorem{theorem}{Theorem}
\newtheorem{lemma}[theorem]{Lemma}
\newtheorem{observation}[theorem]{Observation}
\newtheorem{corollary}[theorem]{Corollary}
\newtheorem{remark}[theorem]{Remark}
\newtheorem{definition}[theorem]{Definition}
\newtheorem*{lem:treespathPotts}{Lemma~\ref{lem:treespathPotts}}
\newtheorem*{lem:treespath}{Lemma~\ref{lem:treespath}}
\newtheorem*{lem:corrdecayIsing}{Lemma~\ref{lem:corrdecayIsing}}
\newtheorem*{thm:sampleising}{Theorem~\ref{thm:sampleising}}
\def\ra{\rightarrow}
\def\conn{\leftrightarrow}
\def\open{\mathrm{op}}
\def\closed{\mathrm{cl}}
\def\TV{\mathrm{TV}}
\def\emm{\mathrm{e}}
\def\Gc{{\mathcal{G}}}
\def\Eb{{\mathbf{E}}}
\def\E{{\mathbb{E}}}
\newcommand{\cons}{{10}}
\newcommand{\TreeD}{\mathbb{T}_{\Delta}}
\newcommand{\norm}[1]{\left\lVert#1\right\rVert}
\newcommand{\SampleRC}{\mbox{\textsc{SampleRC}}}
\newcommand{\SamplePotts}{\mbox{\textsc{SampleAntiPotts}}}
\newcommand{\ReSample}{\mbox{\textsc{ReSample}}}
\newcommand{\IdealReSample}{\mbox{\textsc{IdealReSample}}}
\newcommand{\Corr}{\mathsf{Corr}}
\newcommand{\dist}{\mathsf{dist}}
\newcommand{\noteq}{\mathsf{neq}}
\newcommand{\eq}{\mathsf{eq}}
\title{Sampling in Uniqueness from the Potts and Random-Cluster Models on Random Regular Graphs\footnote{A preliminary short version of the
manuscript (without the proofs) appeared in the proceedings of RANDOM/APPROX 2018.}}
\author{Antonio Blanca\thanks{School of Computer Science, Georgia
Institute of Technology, Atlanta GA 30332.
Research supported in part by NSF grants CCF-1617306 and CCF-1563838. \texttt{\{ablanca,vigoda\}@cc.gatech.edu}} \and Andreas Galanis\thanks{University of Oxford,
  Wolfson Building, Parks Road, Oxford, OX1~3QD, UK.
The research leading to these results has received funding from the European Research Council under
the European Union's Seventh Framework Programme (FP7/2007-2013) ERC grant agreement no. 334828. The paper
reflects only the authors' views and not the views of the ERC or the European Commission. The European Union is not liable for any use that may be made of the information contained therein. \texttt{\{andreas.galanis,leslie.goldberg,kuan.yang\}@cs.ox.ac.uk}
}
\and Leslie Ann Goldberg$^\ddag$ \and 
 Daniel \v{S}tefankovi\v{c}\thanks{Department of Computer Science, University of Rochester,
Rochester, NY 14627.   
Research supported in part by NSF grant CCF-1563757. \texttt{stefanko@cs.rochester.edu}} 
 \and Eric Vigoda$^\dag$ \and Kuan Yang$^\ddag$
}
\begin{document}

\maketitle

\begin{abstract}
We consider the problem of sampling from the Potts model on random regular graphs. It is conjectured that sampling is possible when the temperature of the model is in the so-called uniqueness regime of the regular tree, but positive algorithmic results have been for the most part elusive. In this paper, for all integers $q\geq 3$ and $\Delta\geq 3$,  we develop algorithms that produce samples within error $o(1)$ from the $q$-state Potts model on random $\Delta$-regular graphs,  whenever the temperature is in uniqueness, for both the ferromagnetic and antiferromagnetic cases.  

The algorithm for the antiferromagnetic Potts model is based on iteratively adding the edges of the graph and resampling a bichromatic class that contains the endpoints of the newly added edge. Key to the algorithm is how to perform the resampling step efficiently since  bichromatic classes can potentially induce linear-sized components. To this end, we exploit the tree uniqueness to show that the average growth of bichromatic components is typically small, which allows us  to  use correlation decay algorithms for the resampling step. While the precise uniqueness threshold on the tree is not known for general values of $q$ and $\Delta$ in the antiferromagnetic case, our algorithm works throughout uniqueness regardless of its value. 

In the case of the ferromagnetic Potts model, we are able to simplify the algorithm significantly by utilising the random-cluster representation of the model. In particular, we demonstrate that a percolation-type algorithm succeeds in sampling from the random-cluster model with parameters $p,q$ on random $\Delta$-regular graphs for all values of $q\geq 1$ and $p<p_c(q,\Delta)$, where $p_c(q,\Delta)$ corresponds to a uniqueness threshold for the model on the $\Delta$-regular tree. When restricted to integer values of $q$, this yields a simplified algorithm for the ferromagnetic Potts model on random $\Delta$-regular graphs. 
\end{abstract}

\thispagestyle{empty}

\newpage

\setcounter{page}{1}
\section{Introduction}

Random constraint satisfaction problems have been thoroughly studied in computer science in an effort to analyse the limits of satisfiability algorithms and understand the structure of hard instances. Analogously, understanding  spin systems on random graphs \cite{MosselSly, Mossel2009, YZ, BCP, MS22, Dembo2014, GSVY16, Efthymiou, EHSV} gives insights about the complexity of counting and the efficiency of approximate sampling algorithms. In this paper, we design approximate sampling algorithms for the Potts model on \mbox{random regular graphs.}

The Potts model is a fundamental spin system studied in statistical physics and computer science. The model has two parameters: an integer $q\geq 3$, which represents the number of states/colours of the model, and a real parameter $B>0$, which corresponds to the so-called ``temperature''. We denote the set of colours by $[q]:=\{1,\hdots,q\}$. For a graph $G=(V,E)$, configurations of the model are all possible  assignments of colours to the vertices of the graph. Each assignment $\sigma:V\rightarrow [q]$  has a weight $w_G(\sigma)$ which is determined by the number $m(\sigma)$ of monochromatic edges under $\sigma$; namely, $w_G(\sigma)=B^{m(\sigma)}$. The Gibbs distribution $\mu_G$ is defined on the space of all configurations $\sigma$ and is  given by
\[\mu_G(\sigma)=B^{m(\sigma)}/Z_G,\mbox{ where } Z_G=\sum_\sigma B^{m(\sigma)}.\]
We also refer to $\mu_G$ as the Potts distribution; the quantity $Z_G$ is known as the partition function. Well-known models closely related to the Potts model are the  Ising and colourings models. The Ising model is the special case $q=2$ of the Potts model, while the $q$-colourings model is the ``zero-temperature'' case $B=0$ of the Potts model, where the distribution is supported on the set of proper $q$-colourings.

The behaviour of the Potts model has significant differences depending on whether $B$ is less or larger than~1. When $B<1$, configurations where most neighbouring vertices have different colours have large weight and the model is  called \emph{antiferromagnetic}; in contrast, when $B>1$, configurations where most neighbouring vertices have the same colours have large weight and the model is  called \emph{ferromagnetic}. One difference between the two cases that will be relevant later is that the ferromagnetic Potts model admits a random-cluster representation -- the details of this representation are given in Section~\ref{sec:rcdef}.

Sampling from the Potts model is a problem that is frequently encountered in running simulations in statistical physics or inference tasks in computer science.  To determine the efficiency and accuracy of sampling methods, it is relevant to consider the underlying phase transitions, which signify abrupt changes in the properties of the Gibbs distribution when the underlying parameter changes. The so-called uniqueness phase transition captures the sensitivity of the state of a vertex to fixing far-away boundary conditions. As an example, in the case of the ferromagnetic Potts model on the  $\Delta$-regular  tree, uniqueness holds when root-to-leaves correlations  in the Potts distribution vanish as the height of the tree goes to infinity; it is known that this holds  iff $B<B_c(q,\Delta)$, where $B_c(q,\Delta)$ is the  ``uniqueness threshold'' (cf. \eqref{eq:BcqDelta} for its value).  Connecting the uniqueness phase transition with the performance of algorithms is a difficult task that is largely under development. This connection is well-understood on the grid, where it is known that the mixing time of  local Markov chains, such as the Glauber dynamics, switches from polynomial to exponential at the corresponding uniqueness threshold, see for example \cite{MartinelliOlivieri2, MartinelliOlivieri1, ThomasIsing, Alexander1, MOS, Beffara2012}. 

\enlargethispage{\baselineskip}

For random $\Delta$-regular graphs or, more generally, graphs with maximum degree $\Delta$, the uniqueness threshold on the $\Delta$-regular tree becomes relevant. For certain two-state models, such as the ferromagnetic Ising model and the hard-core model, it has been proved that  Glauber dynamics mixes rapidly when the underlying parameter is in the uniqueness regime of the regular tree, and that the dynamics mixes slowly otherwise (see  \cite{MosselSly, Mossel2009, EHSV}).  The same picture is conjectured to hold for the Potts model as well, but this remains open. For the ferromagnetic case in particular, Bordewich, Greenhill, and Patel \cite{BCP} prove rapid and slow mixing results for Glauber dynamics on random regular graphs and graphs with maximum degree $\Delta$ when the parameter $B$ is within a constant factor from the uniqueness threshold on the regular tree. More generally, there has been significant progress the last years in understanding the complexity of sampling from the Gibbs distribution in  two-state systems, but for multi-state systems progress has been slower, especially on the algorithmic side. 

In this paper, for all integers $q\geq 3$ and $\Delta\geq 3$, we design approximate sampling algorithms for the $q$-state Potts model on random $\Delta$-regular graphs (regular graphs with $n$ vertices chosen uniformly at random), when the parameter $B$ lies in the uniqueness regime of the regular tree, for both the ferromagnetic and antiferromagnetic cases. Our algorithms are not based on a Markov chain approach  but proceed by iteratively adding the edges of the graph and performing a resampling step at each stage. As such, our  algorithms can produce samples that are within  error $1/n^{\delta}$ from the Potts distribution for some fixed constant $\delta>0$ (which depends on $B,q,\Delta$). 
\begin{remark}\label{rem:4rf4rr32221}
There are certain ``bad'' $\Delta$-regular graphs where the  algorithms will fail to produce samples with the desired accuracy;  saying that the algorithms work on random $\Delta$-regular graphs means that the number of these ``bad'' graphs with $n$ vertices is a vanishing fraction of all $\Delta$-regular graphs with $n$ vertices for large $n$. Moreover, we can recognise the ``good'' graphs (where our algorithms will successfully produce samples with the desired accuracy) in polynomial time.
\end{remark} 

Our approach is inspired by Efthymiou's algorithm \cite{Efthymiou22, Efthymiou} for sampling $q$-colourings on $G(n,d/n)$; the algorithm there also proceeds by iteratively adding the edges of the graph and exploits the uniqueness on the tree to show that the sampling error is small. However, for the  antiferromagnetic Potts model,  the resampling step turns out to be significantly more involved and we need substantial amount of work to ensure that it can be carried out efficiently, as we explain in detail in Section~\ref{sec:proofapproach}. Nevertheless, for the ferromagnetic case, we manage to give a far simpler algorithm by utilising the random-cluster representation of the model (see Section~\ref{sec:rcdef}). In particular, we demonstrate that a percolation-type algorithm succeeds in sampling approximately from the random-cluster model with parameters $p,q$ on random $\Delta$-regular graphs for all values of $q\geq 1$ and $p<p_c(q,\Delta)$, where $p_c(q,\Delta)$ corresponds to a uniqueness threshold for the model on the $\Delta$-regular tree. When restricted to integer values of $q$, this yields a simple algorithm for the ferromagnetic Potts model on random $\Delta$-regular graphs.

To conclude this introductory section, we remark that, for many antiferromagnetic spin systems 
on random graphs, typical configurations in the Gibbs distribution display absence of long-range correlations 
even beyond the uniqueness threshold, up to the so-called reconstruction threshold \cite{Mezard, GM07}. Note that uniqueness guarantees the absence of long-range correlations under a ``worst-case'' boundary, while non-reconstruction only asserts the absence of long-range correlations under ``typical'' boundaries; it is widely open whether this weaker notion is in fact sufficient for sampling on random graphs. On an analogous note, for the ferromagnetic Potts model on random regular graphs, the structure of typical configurations can be fairly well understood using probabilistic arguments for all temperatures (see, e.g., \cite{Dembo2014,GSVY16}) and it would be very interesting to exploit this structure for the design of sampling algorithms beyond the uniqueness threshold.  In this direction, Jenssen, Keevash and Perkins \cite{JKP} very recently designed such an algorithm for all sufficiently large $B$ that works more generally on expander graphs (see also \cite{HPR} for similar-flavored results on the grid).

\enlargethispage{\baselineskip}

\section{Definitions and Main Results}
We first review in Section~\ref{sec:rcdef} the definition of the random-cluster model. In Section~\ref{sec:uniqueness}, we state results from the literature about uniqueness on the regular tree for the Potts and random-cluster models. Then, in Section~\ref{sec:resultsferro}, we state our algorithmic results for the ferromagnetic Potts and random-cluster models  and, in Section~\ref{sec:resultsantiferro}, our result for the antiferromagnetic Potts model.

\subsection{The random-cluster model}\label{sec:rcdef} 
The random-cluster model has two parameters $p\in [0,1]$ and $q>0$; note that $q$ in this case can take non-integer values. For a graph $G=(V,E)$, we denote the random-cluster distribution on $G$ by $\varphi_G$; this distribution is supported on the set of all edge subsets. In particular, for $S\subseteq E$, let   $k(S)$ be the number of connected components in the graph $G'=(V,S)$ (isolated vertices do count).  Then, the weight of the configuration $S$ is given by $w_G(S)=p^{|S|}(1-p)^{|E\backslash S|}q^{k(S)}$ and 
\[
\varphi_G(S) = w_G(S)/Z^{\mathrm{rc}}_G, \mbox{ where } Z^{\mathrm{rc}}_G=\sum_{S\subseteq E} w_G(S)=\sum_{S\subseteq E}p^{|S|}(1-p)^{|E\backslash S|}q^{k(S)}.
\]
Following standard terminology, each edge in $S$ will be called open, while each edge in $E\backslash S$ closed. 

For integer values of $q$, there is a well-known connection between the  random-cluster and ferromagnetic Potts model, as detailed below. 
\begin{lemma}[see, e.g., \cite{Grimmettbook}]\label{lem:rctopotts}
Let $q\geq 2$ be an integer, $B>1$, and $p=1-1/B$. Then, the following hold for any graph  $G=(V,E)$.
\begin{itemize}
\item Let $S\subseteq E$ be distributed according to the RC distribution $\varphi_G$ with parameters $p,q$. Consider the configuration $\sigma$ obtained from $S$ by assigning each component in the graph $(V,S)$ a random colour from $[q]$ independently. Then, $\sigma$ is distributed according to the Potts distribution $\mu_G$ with parameter $B$. 
\item Conversely, suppose that $\sigma:V\rightarrow [q]$ is distributed according to the Potts distribution $\mu_G$ with parameter $B$.  Consider $S\subseteq E$ obtained by adding to $S$ each monochromatic edge under $\sigma$ with probability $p$ independently. Then,  $S$ is distributed according to the RC distribution $\varphi_G$ with parameters $p,q$.
\end{itemize}
\end{lemma}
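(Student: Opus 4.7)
The plan is to prove both directions simultaneously via the standard Edwards--Sokal coupling, by introducing a joint distribution on pairs $(\sigma,S)\in[q]^V\times 2^E$ and reading off both marginals together with the conditional distributions in each direction.

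First, I would define the joint measure
\[
\nu_G(\sigma,S) \;=\; \frac{1}{Z}\prod_{e=uv\in E}\Bigl[(1-p)\,\mathbf{1}\{e\notin S\} \;+\; p\,\mathbf{1}\{e\in S\}\mathbf{1}\{\sigma(u)=\sigma(v)\}\Bigr],
\]
where $Z$ is the appropriate normalising constant. The factor for each edge vanishes if $e\in S$ but $\sigma(u)\neq\sigma(v)$, so $\nu_G$ is supported on pairs where every open edge is monochromatic.

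Next I would compute the two marginals by a straightforward factorisation. Summing over $S$ with $\sigma$ fixed, each edge contributes $1$ if monochromatic and $1-p$ otherwise, so
\[
\sum_{S}\nu_G(\sigma,S)\;\propto\;(1-p)^{|E|-m(\sigma)}\;=\;(1-p)^{|E|}B^{m(\sigma)},
\]
using $1-p=1/B$. This is proportional to $B^{m(\sigma)}$, so the $\sigma$-marginal is exactly $\mu_G$. Summing over $\sigma$ with $S$ fixed, the constraint is precisely that $\sigma$ is constant on each connected component of $(V,S)$; there are $q^{k(S)}$ such colourings, each contributing $p^{|S|}(1-p)^{|E\setminus S|}$, so
\[
\sum_{\sigma}\nu_G(\sigma,S)\;\propto\;p^{|S|}(1-p)^{|E\setminus S|}q^{k(S)},
\]
giving the random-cluster distribution $\varphi_G$ as the $S$-marginal.

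Finally I would read off the two conditionals. Given $S$, the conditional on $\sigma$ is uniform over the $q^{k(S)}$ colourings constant on components, which is equivalent to assigning each component an independent uniform colour from $[q]$; this is exactly the first construction in the lemma, so composing the $S$-marginal ($\varphi_G$) with this conditional yields $\mu_G$. Given $\sigma$, any non-monochromatic edge must be closed, while each monochromatic edge is independently open with probability $p/(p+(1-p))=p$; this is exactly the second construction, so composing the $\sigma$-marginal ($\mu_G$) with this conditional yields $\varphi_G$. There is essentially no obstacle here beyond carefully tracking the edge-by-edge factorisation; the main care is only in noting that $p=1-1/B$ is precisely what makes the constants $B^{m(\sigma)}$ and $p^{|S|}(1-p)^{|E\setminus S|}$ compatible, which is where the hypothesis $B>1$ enters (ensuring $p\in[0,1]$).
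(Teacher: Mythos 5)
Your proof is correct: it is the standard Edwards--Sokal coupling argument, which is precisely the proof given in the reference \cite{Grimmettbook} that the paper cites for this lemma (the paper itself does not reprove it). The edge-by-edge factorisation of both marginals and the identification of the two conditionals are all accurate, and the role of $p=1-1/B$ in matching $(1-p)^{|E|-m(\sigma)}$ with $B^{m(\sigma)}$ is exactly the right observation.
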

Lemma~\ref{lem:rctopotts} allows us to translate our sampling algorithm for the random-cluster model to a sampling algorithm for the ferromagnetic Potts model. The benefit of working with the random-cluster model (instead of the Potts model) is that the random-cluster distribution satisfies certain monotonicity properties (cf. Lemma~\ref{lem:monotonicity}) which simplifies significantly the analysis of the algorithm.

\subsection{Uniqueness for Potts and random-cluster models on the tree}\label{sec:uniqueness}
In this section, we review uniqueness on the tree for the Potts and random-cluster models. We start with the Potts model.  

For a configuration $\sigma$ and a set $U$, we denote by $\sigma_U$ the restriction of $\sigma$ to the set $U$; in the case of a single vertex $u$, we simply write $\sigma_u$ to denote the colour of $u$. Denote by $\TreeD$  the infinite $(\Delta-1)$-ary tree with root vertex $\rho$ and, for an integer $h\geq 0$, denote  by $T_h$ the subtree of $\TreeD$ induced by the vertices at distance $\leq h$ from $\rho$. Let $L_h$ be the set of leaves of $T_h$.
\begin{definition} \label{def:uniqueness}
Let $B>0$ and $q,\Delta\geq 3$ be integers. The $q$-state Potts model with parameter $B>0$ has  \emph{uniqueness}  on the infinite $(\Delta-1)$-ary tree if, for all colours $c\in [q]$,  it holds that
\begin{equation}\label{eq:4vvft4vtg}
\limsup_{h\to\infty} \max_{\tau: L_h \to [q]}\Big|\mu_{T_h}\big(\sigma_\rho = c\mid\sigma_{L_h}=\tau\big)-\frac{1}{q}\Big|=0.
\end{equation}
It has \emph{non-uniqueness} otherwise.
\end{definition}

For the ferromagnetic $q$-state Potts model $(B>1)$, it is known that uniqueness holds on the $(\Delta-1)$-ary tree iff $B<B_c(q,\Delta)$, where
\begin{equation}\label{eq:BcqDelta}
B_c(q,\Delta)=1+\inf_{y> 1} h(y),\mbox{ where } h(y):=\frac{(y - 1) (y^{\Delta-1} + q - 1)}{y^{\Delta-1} - y}.
\end{equation}

For the antiferromagnetic Potts model ($B<1$), the uniqueness threshold on the tree is not yet known in full generality. It is a folklore conjecture that the model has uniqueness iff $q\geq \Delta$ and $B\in (0,1)$, or  $q<\Delta$ and  $B\geq \frac{\Delta-q}{\Delta}$. It is known that the model has non-uniqueness when $B<\frac{\Delta-q}{\Delta}$ \cite{Galanis}. Establishing the uniqueness side of the conjecture is more difficult; this has been established recently in \cite{GGY} for small values of $q$ and $\Delta$. In the case $q=3$, \cite{GGY} also established the uniqueness threshold for all $\Delta$: for $\Delta\geq 4$, uniqueness holds iff  $B\in[ (\Delta-3)/\Delta,1)$ and, for $\Delta=3$ uniqueness holds iff $B\in (0,1)$.  For the $q$-colourings model $(B=0)$, Jonasson \cite{Jonasson}, building on work of Brightwell and Winkler \cite{Brightwell}, established  that the model has uniqueness iff $q>\Delta$.

\begin{remark}\label{rem:6gtf5q}
To summarise the above, a necessary condition for uniqueness on $\TreeD$ in the antiferromagnetic $q$-state Potts model with parameter $B\in (0,1)$ is that $B\geq (\Delta-q)/\Delta$. It is also conjectured that this condition is sufficient but this has only been established for $q=3$. 
\end{remark}

Uniqueness for the random-cluster model on the tree is less straightforward to define. H{\"a}ggstr{\"o}m \cite{Haggstrom} studied uniqueness of random-cluster measures on the infinite $(\Delta-1)$-ary tree where all infinite components are connected ``at infinity'' -- we review his results in more detail in Section~\ref{sec:unique}. He showed that, for all $q\geq 1$,  a sufficient condition for uniqueness is that $p<p_c(q,\Delta)$, where the critical value $p_c(q,\Delta)$ is given by\footnote{In \cite{Haggstrom}, $p_c(q, \Delta)$ is defined in a different way, but the two definitions are equivalent for all $q\geq 1$.}:
\begin{equation}\label{eq:pcq}
p_c(q,\Delta)=1-\frac{1}{1+\inf_{y> 1} h(y)},\mbox{ where } h(y):=\frac{(y - 1) (y^{\Delta-1} + q - 1)}{y^{\Delta-1} - y}.
\end{equation}
Note that the critical values in \eqref{eq:BcqDelta} and \eqref{eq:pcq} are connected for integer values of $q$ via  $p_c(q,\Delta)=1-\frac{1}{B_c(q,\Delta)}$. H{\"a}ggstr{\"o}m \cite{Haggstrom} also conjectured that uniqueness for the random-cluster model holds on $\TreeD$ when $p>\frac{q}{q+\Delta-2}$ for all $q\geq 1$; this remains open but progress has been made in \cite{JonassonRC}.
\begin{remark}\label{rem:6gtf5qb}
Using that $h(y)\ra \frac{q}{\Delta-2}$ as $y\downarrow 1$, we  obtain that $p_c(q,\Delta)\leq \frac{q}{q+\Delta-2}$ for all $q\geq 1$. It can further be shown that $p_c(q,\Delta)= \frac{q}{q+\Delta-2}$ iff $q\leq 2$. 
\end{remark}

We should note here that the bounds appearing in Remarks~\ref{rem:6gtf5q} and~\ref{rem:6gtf5qb} will be useful to simplify some of our arguments. However, the stronger assumption that the parameters are in the uniqueness region of the $\Delta$-regular tree are crucial for the analysis of our algorithm and, in particular, the proofs of the upcoming Lemmas~\ref{lem:treespath} and Lemmas~\ref{lem:treespathPotts} hinge on this assumption.

\subsection{Sampling ferro Potts and random-cluster models on random regular graphs}\label{sec:resultsferro}
We begin by stating our result for the random-cluster model on random regular graphs.
\begin{theorem}\label{thm:main2}
Let $\Delta\geq 3$, $q\geq 1$ and $p<p_c(q,\Delta)$. Then, there exists a constant $\delta>0$ such that, as $n\rightarrow \infty$, the following holds   with probability $1-o(1)$ over the choice of a random $\Delta$-regular graph $G=(V,E)$ with $n$ vertices.

There is a polynomial-time algorithm which, on input the graph $G$, outputs a random set $S\subseteq E$ whose distribution $\nu_S$ is within total variation distance $O(1/n^{\delta})$ from the RC distribution $\varphi_G$ with parameters $p,q$, i.e., $\norm{\nu_S-\varphi_{G}}_{\TV}=O(1/n^{\delta})$.
\end{theorem}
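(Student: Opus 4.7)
The plan is to design an iterative algorithm that processes the edges of $G=(V,E)$ in a convenient order $e_1,\ldots,e_m$ (for instance, a breadth-first ordering from an arbitrary vertex), maintaining at step $i$ an approximate sample $S_i$ of the random-cluster distribution $\varphi_i$ on the current subgraph $G_i=(V,\{e_1,\ldots,e_i\})$. At each step, the transition $S_{i-1}\to S_i$ is obtained by a local resampling around the newly added edge $e_i=\{u_i,v_i\}$: compute the random-cluster marginal on a ball $B_i$ of radius $C\log n$ centered at $e_i$ exactly via dynamic programming on the locally tree-like structure, conditioning on the current state of $S_{i-1}$ on the exterior $E\setminus B_i$. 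The size of $B_i$ is polylogarithmic in $n$, which keeps each step polynomial time.

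The correctness of localizing to a small $B_i$ rests on a correlation-decay property for the random-cluster model on the $(\Delta-1)$-ary tree: under $p<p_c(q,\Delta)$, the probability that the open cluster of the root reaches depth $h$ decays exponentially in $h$. This is the content of the forward-referenced Lemma~\ref{lem:treespath}, which I would derive from the uniqueness characterization in~\eqref{eq:pcq} combined with H\"aggstr\"om's analysis~\cite{Haggstrom}. Since a random $\Delta$-regular graph has girth $\Omega(\log n)$ at all but a vanishing fraction of its vertices, the tree estimate transfers to most edges of $G$ and yields a per-step total variation error of $1/\mathrm{poly}(n)$.

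For the error analysis, I would couple the actual algorithm with an ``ideal'' algorithm (\IdealReSample) that samples $S_i$ exactly from $\varphi_i$ at each step, in such a way that the two processes disagree only when the resampling inside $B_i$ fails to match the true conditional distribution. The monotonicity of the random-cluster measure for $q\geq 1$ (Lemma~\ref{lem:monotonicity}) permits a monotone grand coupling that translates per-step local errors into a telescoping bound on $\norm{\nu_{S_m}-\varphi_G}_{\TV}$. Summing over the $m=O(n)$ edges and choosing $C$ sufficiently large gives a total error of $O(1/n^\delta)$ for some $\delta>0$ depending on $p,q,\Delta$.

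The main obstacle I foresee is controlling the rare event that the open cluster of $u_i$ or $v_i$ in $S_{i-1}$ reaches outside $B_i$, in which case the local computation is invalid since the connectivity structure determining the cluster weights is no longer captured by $B_i$. The uniqueness assumption $p<p_c(q,\Delta)$ is precisely what controls the probability of such clusters through Lemma~\ref{lem:treespath}; I plan to flag the affected edges as ``bad'' and bound their contribution through a union bound, showing that the total $\Bad$ contribution over all $i$ is $O(1/n^\delta)$. The ``bad'' graphs of Remark~\ref{rem:4rf4rr32221} will correspond to those $\Delta$-regular graphs whose local structure deviates too far from a tree (for example, having too many short cycles near a typical vertex), and standard estimates for random $\Delta$-regular graphs (as in~\cite{FK,janson2011random}) give that these form an $o(1)$ fraction of all $n$-vertex $\Delta$-regular graphs and are recognisable in polynomial time.
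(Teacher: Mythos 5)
There is a genuine gap, and it comes in two places. First, your edge ordering. With a breadth-first ordering, every non-tree edge $e_i=\{u_i,v_i\}$ joins two already-discovered vertices, and these can be at distance $2$ in $G_{i-1}$ (e.g.\ BFS siblings). For such an edge $\varphi_{G_{i-1}}(u_i\conn v_i)$ is bounded below by a constant (both edges to the common neighbour are open with constant probability), and there are $\Theta(n)$ non-tree edges, so the per-step errors do not telescope to $o(1)$ and no union bound over ``bad'' edges can rescue this. The paper's ordering is doing real work here: it first samples \emph{exactly} on the union of all short cycles (which are whp vertex-disjoint, Lemma~\ref{lem:disjointcycles}), and then adds only the edges that lie on no short cycle; for those, every $u_j$--$v_j$ path in $G_j$ has length $\Omega(\log n)$, which is exactly what makes $\varphi_{G_j}(u_j\conn v_j)$ polynomially small via Lemma~\ref{lem:treespath} (a per-path bound $K^\ell$ with $K<1/(\Delta-1)$, needed to survive the union bound over $(\Delta-1)^\ell$ paths --- H\"aggstr\"om's qualitative $\varphi_h^*(\rho\conn\infty)\to 0$ alone gives no rate), Lemma~\ref{lem:monotonicity}, and the cycle-count Lemma~\ref{lem:numpaths}.

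Second, your update step is both heavier than necessary and not justified as stated. Resampling a ball $B_i$ from the conditional law of $\varphi_i$ given the exterior of $S_{i-1}$ only produces (approximately) $\varphi_i$ if the marginal of $\varphi_{i-1}$ on $E\setminus B_i$ is close to that of $\varphi_i$ on $E\setminus B_i$; that is a spatial-mixing statement about the effect of adding $e_i$ on faraway edges which you never prove, and the exact DP itself is delicate for the random-cluster model because the conditional weight inside $B_i$ depends on the connectivity pattern that the exterior induces on $\partial B_i$ (also, a radius-$C\log n$ ball has polynomial, not polylogarithmic, size). The paper sidesteps all of this: the update for a new edge is a \emph{single independent coin flip} with probability $p/(p+(1-p)q)$, and Lemma~\ref{lem:edgewise} shows by a direct computation with the weights \eqref{eq:Sclosed}--\eqref{eq:Sopen} and the partition-function ratio \eqref{eq:ZratioGGp} that this incurs total variation error at most $2q\,\varphi_{G}(u\conn v)$ --- no localization, coupling, or DP is required. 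You would need either to adopt that update or to supply the missing spatial-mixing argument for yours.
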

We remark that a simple implementation of the algorithm in Theorem~\ref{thm:main2} runs in time $O^*(n^{6/5})$, see Figure~\ref{alg:perc} for details. The constant $\delta>0$ that controls the error of the algorithm depends on $p,q,\Delta$ and gets smaller as $p$ approaches $p_c(q,\Delta)$.

For integer values of $q$, Theorem~\ref{thm:main2} combined with the translation between the random-cluster and Potts models (cf. Lemma~\ref{lem:rctopotts}) yields  a sampling algorithm for the ferromagnetic $q$-state Potts model on random regular graphs. Since uniqueness for the ferromagnetic Potts model holds iff $B<B_c(q,\Delta)$ and $p_c(q,\Delta)=1-\frac{1}{B_c(q,\Delta)}$, we therefore have the following corollary of Theorem~\ref{thm:main2}.
\begin{corollary}\label{thm:mainferroPotts}
Let $\Delta\geq 3$, $q\geq 3$ and $B>1$ be in the uniqueness regime of the $(\Delta-1)$-ary tree. Then, there exists a constant $\delta>0$ such that, as $n\rightarrow \infty$, the following holds   with probability $1-o(1)$ over the choice of a random $\Delta$-regular graph $G=(V,E)$ with $n$ vertices.

There is a polynomial-time algorithm which, on input the graph $G$,  outputs a random  assignment $\sigma:V\rightarrow [q]$ whose distribution $\nu_\sigma$ is within total variation distance $O(1/n^{\delta})$ from the Potts distribution $\mu_G$ with parameter $B$, i.e., $\norm{\nu_\sigma-\mu_{G}}_{\TV}=O(1/n^{\delta})$.
\end{corollary}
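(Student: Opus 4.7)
The plan is to reduce the Potts sampling problem to the random-cluster sampling problem via the Edwards--Sokal coupling stated in Lemma~\ref{lem:rctopotts}, and then invoke Theorem~\ref{thm:main2}. Given integer $q\geq 3$ and $B>1$ in the uniqueness regime of the $(\Delta-1)$-ary tree, set $p:=1-1/B$. Since the ferromagnetic Potts uniqueness threshold satisfies $B<B_c(q,\Delta)$, and since $p_c(q,\Delta)=1-\tfrac{1}{B_c(q,\Delta)}$ by comparing \eqref{eq:BcqDelta} and \eqref{eq:pcq}, we have $p<p_c(q,\Delta)$. Hence the hypothesis of Theorem~\ref{thm:main2} is met for the parameter pair $(p,q)$.

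The algorithm for the Potts model is then as follows. On input a random $\Delta$-regular graph $G=(V,E)$, first run the polynomial-time algorithm from Theorem~\ref{thm:main2} to produce an edge subset $S\subseteq E$ whose law $\nu_S$ satisfies $\norm{\nu_S-\varphi_G}_{\TV}=O(1/n^{\delta})$, where $\varphi_G$ is the random-cluster distribution with parameters $p,q$ and $\delta>0$ is the constant furnished by that theorem. Then assign to each connected component of $(V,S)$ a colour chosen independently and uniformly at random from $[q]$, and output the resulting assignment $\sigma:V\to [q]$. This second step runs in linear time (e.g.\ via a union--find or breadth-first search), so the overall procedure is polynomial. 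The ``good''/``bad'' dichotomy of Remark~\ref{rem:4rf4rr32221} is inherited directly from Theorem~\ref{thm:main2}, so the procedure succeeds on a $1-o(1)$ fraction of random $\Delta$-regular graphs on $n$ vertices.

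It remains to bound the total variation distance between the output law $\nu_\sigma$ and the Potts distribution $\mu_G$. Define the (randomised) map $\Phi$ that takes a set $S\subseteq E$ and returns $\sigma$ by colouring components uniformly and independently. By the first item of Lemma~\ref{lem:rctopotts}, $\Phi(\varphi_G)=\mu_G$ exactly, i.e.\ applying $\Phi$ to a sample from the random-cluster distribution produces an exact sample from the Potts distribution with parameter $B$. Our algorithm's output is $\Phi(\nu_S)$. Since total variation distance cannot increase under the application of any (possibly randomised) channel, we obtain
\[
\norm{\nu_\sigma-\mu_G}_{\TV}
=\norm{\Phi(\nu_S)-\Phi(\varphi_G)}_{\TV}
\leq \norm{\nu_S-\varphi_G}_{\TV}
=O(1/n^{\delta}),
\]
which is the bound claimed in the corollary.

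The main conceptual ingredient is Theorem~\ref{thm:main2} itself; given it, there is no real obstacle, since the reduction is textbook and the TV bound is preserved by the coupling. The only subtlety worth highlighting is to match the two uniqueness thresholds via $p=1-1/B$ and $p_c(q,\Delta)=1-1/B_c(q,\Delta)$, which is immediate from \eqref{eq:BcqDelta} and \eqref{eq:pcq}, and to note that the constant $\delta$ in the Potts statement is exactly the one produced by Theorem~\ref{thm:main2} at parameters $(p,q)$, so it depends on $B,q,\Delta$ as advertised.
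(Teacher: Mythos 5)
Your proposal is correct and matches the paper's argument exactly: the paper derives this corollary from Theorem~\ref{thm:main2} by setting $p=1-1/B$, noting $p<p_c(q,\Delta)$ via $p_c(q,\Delta)=1-1/B_c(q,\Delta)$, and applying the first item of Lemma~\ref{lem:rctopotts} to colour components uniformly, with the total variation bound preserved under this post-processing.
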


\subsection{Sampling antiferro Potts on random $\Delta$-regular graphs}\label{sec:resultsantiferro}
The algorithm of Corollary~\ref{thm:mainferroPotts} for the ferromagnetic Potts model does not extend to the antiferromagnetic case since there is no analogous connection with the random-cluster model in this case. Nevertheless, we are able to design a sampling algorithm on random regular graphs when the parameter $B$ is in uniqueness via a far more elaborate approach which consists of recolouring (large) bichromatic colour classes.
\begin{theorem}\label{thm:mainPotts}
Let $\Delta\geq 3$, $q\geq 3$ and $B\in(0,1)$ be in the uniqueness regime of the $(\Delta-1)$-ary tree with $B\neq (\Delta-q)/\Delta$. Then, there exists a constant $\delta>0$ such that, as $n\rightarrow \infty$, the following holds   with probability $1-o(1)$ over the choice of a random $\Delta$-regular graph $G=(V,E)$ with $n$ vertices.

There is a polynomial-time algorithm which, on input the graph $G$,  outputs a random  assignment $\sigma:V\rightarrow [q]$ whose distribution $\nu_\sigma$ is within total variation distance $O(1/n^{\delta})$ from the Potts distribution $\mu_G$ with parameter $B$, i.e.,
\[\norm{\nu_\sigma-\mu_{G}}_{\TV}=O(1/n^{\delta}).\]
\end{theorem}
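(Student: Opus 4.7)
The plan is to follow the iterative edge-addition paradigm of Efthymiou~\cite{Efthymiou,Efthymiou22}, adapted to the antiferromagnetic Potts model via the resampling of bichromatic colour classes. First we would fix a uniformly random ordering $e_1,\ldots,e_m$ of the edges of $G$, let $G_i=(V,\{e_1,\ldots,e_i\})$, and write $\mu_i$ for the antiferromagnetic Potts distribution on $G_i$ with parameter $B$. The algorithm will produce configurations $\sigma_0,\sigma_1,\ldots,\sigma_m$ maintaining the invariant that the law of $\sigma_i$ is within total variation distance $O(i/n^{1+\delta})$ from $\mu_i$; we would take $\sigma_0$ uniform on $[q]^V$ (which equals $\mu_0$ exactly) and return $\sigma_m$. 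Excluding $B=(\Delta-q)/\Delta$ ensures that the uniform distribution is not a fixed point of the tree broadcast recursion, so the uniqueness hypothesis will deliver a genuine contraction to be exploited in the analysis.

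At step $i$, given $\sigma_{i-1}$ and the new edge $e_i=\{u,v\}$, the algorithm would canonically pick an unordered pair $\{a,b\}\subseteq[q]$ containing both $\sigma_{i-1}(u)$ and $\sigma_{i-1}(v)$, and then resample the connected component $\mathcal{B}$ of $u$ in the subgraph of $G_i$ induced by the $\{a,b\}$-coloured vertices of $\sigma_{i-1}$. If the colours on $V\setminus\mathcal{B}$ are left frozen and every vertex of $\mathcal{B}$ is constrained to receive a colour in $\{a,b\}$, then the conditional distribution on $\mathcal{B}$ under $\mu_i$ collapses to a two-state Ising-type model on $G_i[\mathcal{B}]$ with external fields induced by the fixed neighbours outside $\mathcal{B}$. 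A short calculation should show that an \emph{exact} resampling from this conditional distribution, starting from $\sigma_{i-1}\sim\mu_{i-1}$, produces $\sigma_i$ distributed exactly as $\mu_i$; this is the ideal variant $\IdealReSample$ of the update rule and is the combinatorial heart of the reduction.

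The main obstacle is that $\mathcal{B}$ can in principle be linear in $n$, which rules out brute-force sampling; the real algorithm must therefore approximate $\IdealReSample$ efficiently. To handle this we would combine two ingredients. First, Lemma~\ref{lem:treespathPotts} together with the local tree-likeness of random $\Delta$-regular graphs should imply that, under a typical $\sigma_{i-1}\sim\mu_{i-1}$, the bichromatic component $\mathcal{B}$ behaves like a subcritical branching process on $\TreeD$, with polylogarithmic size and diameter in expectation; the subcriticality is precisely what the uniqueness hypothesis (away from $B=(\Delta-q)/\Delta$) buys. Second, Theorem~\ref{thm:sampleising} combined with Lemma~\ref{lem:corrdecayIsing} would supply a correlation-decay sampler that draws $\sigma_i|_\mathcal{B}$ to within total variation distance $1/n^{2+\delta}$ in polynomial time on such tree-like two-state instances in their uniqueness regime.

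Finally, summing the per-step errors together with the vanishing probability of exceptional events where $\mathcal{B}$ fails to be tree-like should telescope over the $m=O(n)$ iterations to the claimed bound $\norm{\nu_\sigma-\mu_G}_\TV=O(1/n^\delta)$. The ``bad'' graphs of Remark~\ref{rem:4rf4rr32221} are those with atypically many short cycles or anomalously large bichromatic components under the intermediate $\mu_i$'s; a moment computation on the configuration model should show that they constitute an $o(1)$ fraction of $\Delta$-regular graphs and can be detected by a polynomial-time check on local structure. The hardest part will be Lemma~\ref{lem:treespathPotts} itself: it requires coupling the conditional Potts measure on $G_{i-1}$ to a tree broadcast process on $\TreeD$ and invoking uniqueness to certify that the induced bichromatic growth remains subcritical, uniformly across all iterations of the edge-addition procedure.
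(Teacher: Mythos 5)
Your proposal follows the paper's general paradigm (iterative edge addition, resampling a bichromatic class via an Ising sampler, correlation decay to control the error), but two of its load-bearing claims are wrong and a third structural choice breaks the error analysis.

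First, the initialisation. You start from the empty graph and add \emph{all} edges in random order. The paper instead first samples \emph{exactly} from $\mu_{G'}$ on the subgraph $G'$ of all short cycles (a disjoint union of cycles w.h.p., Lemma~\ref{lem:disjointcycles}) and only iterates over the edges \emph{not} lying on short cycles. This is not cosmetic: the per-step error when adding $e=\{u,v\}$ is governed by the correlation between $u$ and $v$ in the current graph, which is only small if $u$ and $v$ are far apart there. A random $\Delta$-regular graph contains a triangle (and, more generally, $\Theta(n^{1/5})$ short cycles in expectation) with probability bounded away from $0$, so under your scheme the step that closes such a cycle places an edge between two vertices at constant distance and incurs a constant, non-vanishing error; you cannot discard these graphs while retaining probability $1-o(1)$. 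Relatedly, your ``telescoping'' of per-step errors of size $O(1/n^{1+\delta})$ is not how the paper accounts: the aggregate error is a weighted sum $\sum_\ell \ell C_\ell\big(K\tfrac{1-B}{1+B}\big)^\ell$ over \emph{long} cycles (Lemma~\ref{lem:numpaths}), made possible precisely because every added edge has its endpoints at distance $\geq \tfrac15\log_{\Delta-1}n$.

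Second, your claim that an exact implementation of $\IdealReSample$ maps $\mu_{i-1}$ exactly to $\mu_i$ is false, and this is where the real work lies. Even with exact conditional resampling, the update only approximates $\mu_{G'}$; Lemma~\ref{lem:samplinganti} shows the TV error is $\tfrac{2}{B}\,\mathbf{E}[\Corr_G(U_\sigma,u,v)]$, i.e.\ exactly the expected $u$--$v$ correlation in the Ising model on the resampled class --- there is no ``short calculation'' giving exactness, and bounding this correlation (via Lemma~\ref{lem:corrdecayIsing} as a weighted sum over $u$--$v$ paths) is the crux. Moreover, your assertion that Lemma~\ref{lem:treespathPotts} yields polylogarithmic bichromatic components is also incorrect: the paper stresses that bichromatic components can be linear-sized even in uniqueness (the relevant constant $(\Delta-1)K\approx(\Delta-1)\tfrac{1+B}{B+q-1}$ can exceed $1$). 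What Lemma~\ref{lem:treespathPotts} actually delivers is small \emph{average growth} $b$ with $b\tfrac{1-B}{1+B}<1$, which is exactly the (weaker) hypothesis under which Theorem~\ref{thm:sampleising} can still sample the Ising model on these possibly huge components; your algorithm must be prepared for that case, not assume it away. Finally, the paper resamples the entire $(c,c')$-colour class (with a preliminary coin flip with heads probability $\tfrac{qB}{B+q-1}$), not just the connected component of $u$; a configuration-dependent choice of block does not in general preserve the target measure, so this difference also needs justification.
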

Note that the sampling algorithm in the antiferromagnetic case works throughout uniqueness apart from the point $(\Delta-q)/\Delta$, where uniqueness on the tree is expected to hold but the model is conjectured to be at criticality. In particular, for all $B\neq (\Delta-q)/\Delta$ which are in the uniqueness regime of $\TreeD$, it can be shown that the decay on the tree is exponentially small in its height. In contrast,    even if uniqueness on the tree holds for $B=(\Delta-q)/\Delta$, it can be shown that the decay on the tree is only polynomial in its height.

We remark here that the algorithm for the antiferromagnetic case uses as a black-box a subroutine for sampling from the antiferromagnetic Ising model. The running time of this subroutine, which is based on correlation decay methods, is $n^c$ for some constant $c=c(q,B,\Delta)>0$; it is an open question whether there is a faster algorithm for the Ising model. Finally, the constant $\delta$ that controls the error of the algorithm depends on $B,q,\Delta$ and gets smaller as $B$ decreases.

\section{Proof Approach}\label{sec:proofapproach}
In this section, we outline the main idea behind the algorithms of Theorems~\ref{thm:main2} and~\ref{thm:mainPotts}, and the key obstacles that we have to address.  We focus on  the antiferromagnetic Potts model where the details are much more complex and discuss how we get the simplification for the ferromagnetic case via the random-cluster model later.

\begin{definition}\label{def:short}
For an $n$-vertex graph $G=(V,E)$ with maximum degree $\Delta$, a cycle is \emph{short} if its length is at most $\frac{1}{5}\log_{\Delta-1} n$, and is \emph{long} otherwise.
\end{definition}

Let $G$ be a random $\Delta$-regular graph with $n$ vertices.  Following the approach of Efthymiou \cite{Efthymiou}, our algorithm starts from the subgraph of $G$ consisting of all short cycles, which we denote by $G'$. It is fairly standard to show that, with probability $1-o(1)$ over the choice of $G$, the subgraph $G'$ is a \emph{disjoint} union of  short cycles, see Lemma~\ref{lem:disjointcycles}. It is therefore possible to sample a configuration $\sigma'$ on $G'$ which is distributed according to the Potts distribution $\mu_{G'}$ (exactly). This can be accomplished in several ways;  in fact, since the cycles are disjoint and each cycle has logarithmic length, this initial sampling step can even be done  via brute force in polynomial time (though it is not hard to come up with much faster algorithms). 

After this initial preprocessing, the algorithm then proceeds by adding sequentially the edges that do not belong to short cycles. At each step,  the current configuration is updated with the aim to  preserve its distribution close to the Potts distribution of the new graph (with the edge that we just added). Key to this update procedure is a resampling step which is performed only when the endpoints of a newly added edge $\{u,v\}$ happen to have the same colours under the current configuration; intuitively, some action is required in this case because the weight of the current configuration reduces by a factor of $B<1$ in the new graph (because of the added edge). The resampling step consists of recolouring a bichromatic class, where the latter is defined as follows. 
\begin{definition}
Let $G=(V,E)$ be a graph and $\sigma:V\rightarrow[q]$ be a configuration. For colours $c_1,c_2\in [q]$, let $\sigma^{-1}(c_1,c_2)$ be the set of vertices that have either colour $c_1$ or colour $c_2$ under $\sigma$.

For distinct colours $c_1,c_2\in [q]$, we say that $U=\sigma^{-1}(c_1,c_2)$ is the $(c_1,c_2)$-colour-class of $\sigma$ and that $U$ is a bichromatic class under $\sigma$. We refer to a connected component of $G[U]$ as a bichromatic component.
\end{definition}

In the proper colourings case ($B=0$), Efthymiou \cite{Efthymiou} demonstrated that the resampling step when adding an edge $e=\{u,v\}$ can be done  by just flipping the colours of a bichromatic component chosen uniformly at random among those containing one of the vertices $u$ and $v$ (say $u$). The rough idea there is that, when the colourings model is in uniqueness,  the bichromatic components on a random graph are typically small in size. At the same time, by the initial preprocessing step, the edge $e=\{u,v\}$ does not belong to a short cycle and therefore $u$ and $v$ are far away in the graph without $e$. Hence, $u$ and $v$ are unlikely to belong to the same bichromatic component and the flipping step will succeed in  giving $u$ and $v$ different colours with good probability. 

Unfortunately, this flipping method does not work for the antiferromagnetic Potts model. It turns out that when $q<\Delta$ and  even when the Potts model is in uniqueness, bichromatic components can be large and therefore $u$ and $v$ potentially belong to the same bichromatic component. To make matters worse, these bichromatic components can be quite complicated (with many short/long cycles). This necessitates a more elaborate approach in our setting to succeed in giving $u$ and $v$ different colours without introducing significant bias to the sampler. 

The key to overcoming these obstacles lies in the observation that  the assignment of the two colours in a bichromatic component follows the Ising distribution, see Observation~\ref{obs:wrvtvte} for the precise formulation. Hence we can hope to use an approximate sampling algorithm for the Ising model in the resampling step. The natural implementation of this idea however fails: known algorithms for the antiferromagnetic Ising model, based on correlation decay,  work as long as $B>\frac{\Delta-2}{\Delta}$, where $\Delta$ is the maximum degree of the graph \cite{Sinclair1234, ZHANG}. In general, this inequality is not satisfied for us, i.e, there exist $B$ in the uniqueness regime for Potts such that $B<\frac{\Delta-2}{\Delta}$. 

Fortunately, we can employ fairly recent technology  for two-state models \cite{MosselSly,Sinclairconn, Sinclair2017} which demonstrates that the graph parameter that  matters is not actually the maximum degree of the graph  but rather the ``average growth'' of the graph.   While we cannot apply any of the existing results in the literature directly, adapting these ideas to the antiferromagnetic Ising model is fairly straightforward, using  results from Mossel and Sly \cite{MosselSly}. The more difficult part in our setting is  proving that the average growth of the bichromatic components that we consider for resampling is indeed small for ``typical'' configurations $\sigma$ (note that in the worst case, the whole graph can be a bichromatic class which has  large average growth for our purposes, so a probability estimate over $\sigma$ is indeed due). Let us first formalise the notion of average growth that we use.
\begin{definition}\label{def:branchingfactor}
Let $M,b$ be positive constants and  $G=(V,E)$ be a graph with $n$ vertices. We say that $G$ has average growth $b$ up to depth $L=\left\lceil M\log n\right\rceil$ if for all vertices $v\in V$ the total number of paths with $L$ vertices  starting from $v$ is less than $b^L$.
\end{definition}
The notion of average growth is similar to the notion of connective constant for finite graphs used in \cite{Sinclairconn, Sinclair2017}, the reason for the slightly different definition is that we will need an explicit handle on the constant $M$ controlling the depth.  Note that, since we only consider paths with a fixed logarithmic length, this places a lower bound on the accuracy of the sampling algorithm. Nevertheless, by choosing the constant $M$ sufficiently large, this will still be sufficient to make the error of our sampler polynomially small. In particular, as long as the inequality $b \frac{1-B}{1+B}<1$ is satisfied, using results from \cite{MosselSly}, we obtain an approximate sampler for the antiferromagnetic Ising model with parameter $B$ on graphs of average growth $b$ up to depth $L=\left\lceil M\log n\right\rceil$,  see Theorem~\ref{thm:sampleising} for details.

We give a few more  technical details on how we bound the average growth of bichromatic classes. Here, we utilise the tree uniqueness and the tree-like structure of random $\Delta$-regular graphs (cf. Lemma~\ref{lem:graphneighbor}) to provide an upper bound on the number of  bichromatic paths. For paths of logarithmic length $L$, we show in Lemma~\ref{lem:treespathPotts} that the probability that a path is bichromatic is $\leq K^L$, where  $K$ is roughly $(1+B)/(B+q-1)$. Since there are at most $\Delta(\Delta-1)^{L-2}$ paths with $L$ vertices, we therefore obtain that the average growth $b$ of bichromatic components  is bounded above by $(\Delta-1)K$. When $B$ is in uniqueness, we have that $B>(\Delta-q)/\Delta$, and therefore the inequality $b  \frac{1-B}{1+B}<1$ that is required for the Ising sampler is satisfied. 

The final technical piece is to bound the error that is introduced by the resampling steps. The placement of a new edge $\{u,v\}$ reweights the probability that $u$ and $v$ have different colours and introduces an error in our sampling algorithm that is captured by the correlation between the colours of $u$ and $v$ (see Lemma~\ref{lem:samplinganti}). The main idea at this point is that, in the graph without the edge $\{u,v\}$, $u$ and $v$ are far apart (since the edge $\{u,v\}$ does not belong to a short cycle of $G$) which can be used to show  that the correlation between the colours of $u$ and $v$ is relatively small. In Lemma~\ref{lem:corrdecayIsing}, we show that  the correlation between $u$ and $v$ can in fact be upper bounded as a weighted sum over  paths connecting $u$ and $v$. This allows us to bound the aggregate sampling error of the algorithm as a weighted sum over short cycles of the random graph $G$, which can  in turn be bounded using a simple expectation argument.

The algorithm that we described for the antiferromagnetic Potts model can actually be adapted to the ferromagnetic case as well. However, as mentioned earlier, we follow a different (and surprisingly simpler) route using  the random-cluster representation of the model. At a very rough level, the reason behind the simplification is that the components in the random-cluster  model provide a much better grip on capturing the properties of the Potts distribution than the bichromatic-component proxy we used earlier.   Indeed, just as we described in the antiferromagnetic case, bichromatic components for the ferromagnetic Potts model can also be linear-sized. However, once we translate the Potts configuration to its random-cluster representation (cf. Lemma~\ref{lem:rctopotts}), the components in the latter are small in size (when the model is in the uniqueness region $p<p_c(q,\Delta)$) and therefore vertices that are far away do not belong to the same component. This allows us to perform the resampling step in the random-cluster model by a simple percolation procedure. The details can be found in Section~\ref{sec:rcalgo}.

\subsection{Organisation}
In Section~\ref{sec:randomregular}, we give the properties of random regular graphs that we are going to use in the analysis of our sampling algorithms. In Section~\ref{sec:rcalgo}, we give the algorithm for the random-cluster model and conclude Theorem~\ref{thm:main2} (assuming the upcoming Lemma~\ref{lem:treespath}). In Section~\ref{sec:tb34bb6}, we give the algorithm for the antiferromagnetic Potts model and conclude Theorem~\ref{thm:mainPotts} (assuming the upcoming Theorem~\ref{thm:sampleising} and Lemmas~\ref{lem:corrdecayIsing} and~\ref{lem:treespathPotts}). In Sections~\ref{sec:qaz234} and~\ref{sec:teb3tbb6b5b5}, we analyse the random-cluster and antiferromagnetic Potts models on ``tree-like'' graphs and give the proofs of Lemmas~\ref{lem:treespath} and~\ref{lem:treespathPotts}. Finally, in Section~\ref{sec:egbe5tb4hh56}, we prove Theorem~\ref{thm:sampleising} and Lemma~\ref{lem:corrdecayIsing} which are about correlation decay and sampling for the antiferromagnetic Ising model on graphs of small average growth.

\section{Properties of random regular graphs}\label{sec:randomregular}
In this section, we state and prove structural properties of random $\Delta$-regular graphs which ensure that our algorithms for the random-cluster and Potts models have the desired accuracy (cf. Remark~\ref{rem:4rf4rr32221}).  While the exact statements of the properties that we need do not seem to appear in the literature, their proofs follow fairly standard techniques in the area. 

We will work in the configuration model, see \cite[Chapter 9]{janson2011random} for more details. Precisely, for $\Delta n$ even, let $\Gc:=\Gc_{n,\Delta}$ denote the uniform distribution on  $\Delta$-regular graphs  which is obtained by taking a perfect matching of the set $[n]\times [\Delta]$ and collapsing for each $u\in [n]$ the elements $(u,1),\hdots, (u,\Delta)$ into a single vertex $u$; the elements of the set $[n]\times [\Delta]$ are called \emph{points}. 
Technically, the distribution  $\Gc$ is supported on multigraphs but it can be shown that the probability that $G\sim \Gc$ is  simple is asymptotically a positive constant as $n\rightarrow \infty$; conditioned on that event, $G$ is uniformly distributed over $\Delta$-regular graphs with $n$ vertices, and therefore any event that holds with probability $1-o(1)$ in $\Gc_{n,\Delta}$ also holds with probability $1-o(1)$ over the uniform distribution on $\Delta$-regular graphs with $n$ vertices. 

The following lemma guarantees that short cycles are disjoint in a random $\Delta$-regular graph. 
\begin{lemma}\label{lem:disjointcycles}
Let $\Delta\geq 3$ be an integer. Then, with probability $1-o(1)$ over the choice of a uniformly random $\Delta$-regular graph  with $n$ vertices, any two distinct cycles of length $\leq \frac{1}{5} \log_{\Delta-1} n$ are disjoint, i.e., they do not share any common vertices or edges.
\end{lemma}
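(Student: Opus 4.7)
The plan is a first-moment argument in the configuration model $\Gc=\Gc_{n,\Delta}$. Since $\Gc$ produces a simple graph with positive limiting probability and is uniform over simple $\Delta$-regular graphs conditioned on simplicity, it suffices to work with $G\sim\Gc$. Set $L=\lfloor\tfrac{1}{5}\log_{\Delta-1} n\rfloor$ and let $Y$ count unordered pairs $\{C_1,C_2\}$ of distinct cycles in $G$ with $|C_1|,|C_2|\leq L$ and $V(C_1)\cap V(C_2)\neq\emptyset$. Since two cycles sharing an edge automatically share a vertex, it suffices to show $\E[Y]=o(1)$ and apply Markov. The key structural input is that two distinct cycles $C_1,C_2$ are linearly independent in the cycle space of $H=C_1\cup C_2$, so the cyclomatic number of $H$ is at least $2$; combined with the fact that $H$ is connected when $C_1,C_2$ share a vertex, this yields $|E(H)|\geq|V(H)|+1$.

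I would then bound $\E[Y]$ by summing over isomorphism classes of the union shape $H$ together with its vertex labelling. Parameterise each class by lengths $(\ell_1,\ell_2)\in\{3,\dots,L\}^2$ and a \emph{coincidence pattern} listing which vertex of $C_1$ equals which vertex of $C_2$; there are at most $L^{O(1)}$ such patterns. For a resulting shape with $v$ vertices and $e$ edges, the standard configuration-model calculation --- choose $v$ vertex labels in at most $n^v$ ways, choose which $d_u$ of the $\Delta$ points at each vertex $u$ are used (contributing $\prod_u (\Delta)_{d_u}\leq \Delta^{2e}$), and use that the probability of a uniformly random perfect matching on $\Delta n$ points containing a fixed $e$-edge set is $O((\Delta n)^{-e})$ --- yields an expected labelled count of $O(\Delta^{e}n^{v-e})$. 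Since $e\leq 2L$ and $e\geq v+1$, this is at most $O(\Delta^{2L}/n)$. Summing over patterns and length pairs,
\[
\E[Y]\;\leq\;L^{O(1)}\,\Delta^{2L}/n\;=\;L^{O(1)}\,n^{(2/5)\log_{\Delta-1}\Delta\,-\,1}.
\]
Because $\log_{\Delta-1}\Delta\leq\log_2 3<2$ for every $\Delta\geq 3$, the exponent is strictly negative (at most $-1/5+o(1)$), so $\E[Y]=o(1)$, closing the argument.

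The only mildly delicate point is the combinatorial parameterisation of intersecting cycle pairs: two distinct cycles may share several disjoint arcs and produce theta-type (or more complex) unions, so one must enumerate all such shapes without loss or double-counting. The coincidence-pattern description handles this cleanly with a polynomial-in-$L$ overhead that is easily absorbed into the $n^{-1/5}$ margin above; the choice of the constant $1/5$ in $L$ is precisely what makes $\Delta^{2L}$ sub-linear in $n$ for the smallest admissible case $\Delta=3$, where $\log_{\Delta-1}\Delta$ is largest.
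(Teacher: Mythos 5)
Your strategy coincides with the paper's at its core: transfer to the configuration model, observe that the union $H=C_1\cup C_2$ of two distinct intersecting cycles is connected with cycle space of dimension at least two, hence $|E(H)|\geq |V(H)|+1$, and run a first-moment computation in which the excess edge supplies the factor $n^{-1}$ that beats $\Delta^{2L}=n^{(2/5)\log_{\Delta-1}\Delta}=o(n)$. The matching-probability estimate, the point-selection bookkeeping, and the final exponent computation are all fine.

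The one step that does not survive scrutiny is the assertion that there are only $L^{O(1)}$ coincidence patterns. Two cycles of length at most $L$ can meet in $m$ maximal shared arcs (connected components of $C_1\cap C_2$, counting isolated shared vertices) for any $m$ up to order $L$, and specifying the locations of these arcs on each cycle together with their correspondence already gives on the order of $L^{4m}\,m!$ patterns for each $m$; summed over $m$ this is $L^{\Theta(L)}$, not polynomial in $L$, so ``$L^{O(1)}$ patterns, each contributing $O(\Delta^{2L}/n)$'' does not close the argument as written. The repair is standard and stays inside your framework: if $C_1\cap C_2$ has exactly $m\geq 1$ components then $|E(H)|-|V(H)|=m$, so such a pattern contributes $O(\Delta^{2L}n^{-m})$ rather than $O(\Delta^{2L}n^{-1})$, and $\sum_{m\geq 1}L^{Cm}\,m!\,n^{-m}\leq \sum_{m\geq 1}\bigl(L^{C+1}/n\bigr)^{m}=O\bigl(L^{C+1}/n\bigr)$ absorbs the pattern count. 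Alternatively one can avoid enumerating union shapes altogether, as the paper does: every such $H$ contains a connected subgraph on some $k\leq 2L$ vertices with exactly $k+1$ edges and minimum degree $2$, and one union-bounds over the choice of the $k$ vertices, the points used at each vertex, and the pairings of those $2(k+1)$ points; the superexponentially many pairings, about $(2k/\mathrm{e})^{k+1}$ of them, are then cancelled by $\binom{n}{k}\leq (\mathrm{e}n/k)^{k}$.
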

\begin{proof}
For convenience, let $\ell:=\left\lfloor \frac{1}{5} \log_{\Delta-1} n\right\rfloor$ and $\Gc:=\Gc_{n,\Delta}$. Let $G\sim\Gc$ and $\mathcal{E}$ be the event that $G$ contains two distinct cycles of length $\leq \ell$ which are not disjoint. Let also $\mathcal{F}$ be the event that,  for some integer $k\in[1,2\ell]$, $G$ contains a subgraph with $k$ vertices and $k+1$ edges such that each vertex has at least two incident multiedges (in the subgraph); we call such a subgraph \emph{bad}. When $\mathcal{E}$ occurs, we obtain that $\mathcal{F}$ also occurs, so it suffices to upper bound the probability of the latter.

To bound $\mbox{$\Pr$}_{\Gc}(\mathcal{F})$, we will use a union over all possible bad subgraphs with $k$ vertices with $k\in [1,2\ell]$. Consider an arbitrary integer $k\in [1,2\ell]$. There are $\binom{n}{k}\leq (\frac{\emm n}{k})^{k}$ ways to choose the vertices in the subgraph, at most $\binom{\Delta}{2}^{k}(\Delta k)^2$ ways to choose the points that are going to be paired and then $\frac{(2(k+1))!}{(k+1)!2^{k+1}}\leq 10(2k/\emm)^{k+1}$ ways to pair the points. The probability of a particular pairing of  $2(k+1)$ points occuring is equal to $\frac{1}{(\Delta n-1)\cdots (\Delta n-(2k+1))}\leq \frac{2}{(\Delta n)^{k+1}}$ (using that $k\leq 2\ell=o(n^{1/2})$). We therefore obtain that 
\[\mbox{$\Pr$}_{\Gc}(\mathcal{F})\leq 20\sum^{2\ell}_{k=1} \binom{n}{k}\frac{(\Delta k)^2\binom{\Delta}{2}^{k}(2k/\emm)^{k+1}}{(\Delta n)^{k+1}}\leq \frac{60\Delta}{n}\sum^{2\ell}_{k=1}k^2 (\Delta-1)^k\leq \frac{500\Delta \ell^3}{n} (\Delta-1)^{2\ell}=o(1),\]
as needed. This finishes the proof.
\end{proof}

The following lemma guarantees that certain weighted sums over cycles are small; this bound will be used to show that the aggregate error of our samplers is small (cf. Section~\ref{sec:proofapproach}).
\begin{lemma}\label{lem:numpaths}
Let $\Delta\geq 3$. Then,  for any constant $W>\Delta-1$ and any constant $\ell_0>0$, there exists a constant $\delta>0$  such that the following holds with probability $1-O(1/n^\delta)$ over the choice of $G\sim\Gc_{n,\Delta}$. Let $C_\ell$ denote the number of cycles of length $\ell$. Then, 
\[\sum_{\ell\geq \ell_0 \log n}\frac{\ell C_\ell}{W^{\ell}}\leq 1/(2n^{\delta}).\]
\end{lemma}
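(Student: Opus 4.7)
The plan is to upper bound $X := \sum_{\ell \geq \ell_0 \log n} \ell\, C_\ell / W^\ell$ in expectation and then conclude by Markov's inequality.

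The key first step is a standard first-moment calculation in the configuration model. A cycle of length $\ell$ is encoded by (i) an ordered cyclic sequence of $\ell$ distinct vertices modulo rotation and reflection, contributing $(n)_\ell/(2\ell)$; (ii) an ordered choice of two points at each such vertex, contributing $[\Delta(\Delta-1)]^\ell$; and (iii) the event that these $\ell$ point-pairs appear in the uniformly random matching of $[n]\times[\Delta]$, which has probability $\prod_{i=1}^\ell 1/(\Delta n - 2i + 1)$. Putting these together,
\[
\E[C_\ell] = \frac{1}{2\ell}\prod_{i=1}^\ell \frac{\Delta(\Delta-1)(n-i+1)}{\Delta n - 2i + 1}.
\]
The main (though modest) technical point I anticipate is extracting from this a clean bound valid uniformly over \emph{all} $\ell \leq n$, not only the logarithmic range. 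The key observation is the elementary inequality $\Delta(n-i+1) \leq \Delta n - 2i + 1$, which for $\Delta \geq 3$ holds whenever $i \geq 2$ (equivalently $i \geq (\Delta-1)/(\Delta-2)$, and the right-hand side is at most $2$ for all $\Delta\geq 3$). This bounds the $i$-th factor of the product by $\Delta-1$ for every $i \geq 2$, while the exceptional $i=1$ factor equals $(\Delta-1)\cdot \Delta n/(\Delta n-1)$ and contributes only a harmless $1+o(1)$ multiplicative slack. Together, these give the uniform estimate $\E[C_\ell] \leq (\Delta-1)^\ell/\ell$ for every $\ell \leq n$ and $n$ large enough.

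Plugging this into the target sum and using the hypothesis $W > \Delta-1$, set $r := (\Delta-1)/W \in (0,1)$. The expected sum then collapses to a geometric tail:
\[
\E[X] \;\leq\; \sum_{\ell \geq \ell_0 \log n} r^\ell \;\leq\; \frac{r^{\ell_0 \log n}}{1-r} \;=\; \frac{n^{-\alpha}}{1-r}, \qquad \alpha := \ell_0 \log(1/r) > 0,
\]
where $\alpha$ is a positive constant depending only on $W,\Delta,\ell_0$.

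Finally, I will choose any $\delta \in (0, \alpha/2)$, for concreteness $\delta := \alpha/3$, and apply Markov's inequality:
\[
\Pr\!\left[X \geq \tfrac{1}{2n^\delta}\right] \;\leq\; 2 n^\delta \cdot \E[X] \;=\; O\!\left(n^{\delta - \alpha}\right) \;=\; O(n^{-\delta}),
\]
which is exactly the claimed probability bound. Beyond the uniform-in-$\ell$ moment estimate on $\E[C_\ell]$, the rest of the argument is routine; the splitting into short vs.\ long cycles that one might have expected is avoided precisely because the inequality in the second paragraph holds for every $i \geq 2$ without any upper restriction on $i$ or $\ell$.
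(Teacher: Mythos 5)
Your proposal is correct and follows essentially the same route as the paper: a first-moment bound $\E[C_\ell]\leq (\Delta-1)^{\ell}/\ell$ in the configuration model (the paper derives it from the inequality $2(\Delta n-1)(\Delta n-3)\cdots(\Delta n-(2\ell-1))\geq \Delta^{\ell}n(n-1)\cdots(n-\ell+1)$, which is just a repackaging of your term-by-term comparison), followed by summing the geometric tail and applying Markov's inequality. The only cosmetic difference is the choice $\delta=\alpha/3$ versus the paper's $\delta=\alpha/2$, which is immaterial.
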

\begin{proof}
Let $w:=(\Delta-1)/W<1$ and $\delta:=\frac{1}{2}\ell_0\log(1/w)>0$.  We will show that, for all sufficiently large $n$, it holds that 
\begin{equation}\label{eq:qwecrvet}
\Eb_{\Gc}[X]\leq \frac{1}{(1-w)n^{2\delta}}, \quad \mbox{ where } X:=\sum_{\ell\geq \ell_0 \log n}\frac{\ell C_\ell}{W^{\ell}}.
\end{equation}
Once we show \eqref{eq:qwecrvet}, the result follows from Markov's inequality.

Fix an arbitrary integer $\ell\in [n]$; we will calculate $\Eb_{\Gc}[C_\ell]$.   In the configuration model, a cycle with $\ell$ vertices corresponds to an ordered $2\ell$-tuple of points $(u_1,i_1),(u_1,i_2),\hdots, (u_\ell,i_{2\ell-1}), (u_\ell,i_{2\ell})$ where $u_1,\hdots,u_\ell$ are distinct elements of $[n]$ and $i_1,\hdots,i_{2\ell}\in [\Delta]$ such that $(u_j,i_{2j})$ is paired to $(u_{j+1},i_{2j+1})$ for all $j\in[\ell]$ (with the convention that $u_{\ell+1}=u_1$ and $i_{2l+1}=i_1$). 

There are $\ell!\binom{n}{\ell}$ ways to choose and order $u_1,\hdots,u_\ell$ and $(\Delta(\Delta-1))^{\ell}$ ways to choose $i_1,\hdots,i_{2\ell}$ for a total of $\ell!\binom{n}{\ell}(\Delta(\Delta-1))^{\ell}$ possible tuples; this overcounts the number of tuples corresponding to distinct cycles by a factor of $2\ell$ (the number of ways to root and orient the $2\ell$-tuple). Now, the pairing corresponding to a tuple occurs with probability $\frac{1}{(\Delta n-1)(\Delta n-3)\cdots(\Delta n-(2\ell-1))}$. Since 
\[2(\Delta n-1)(\Delta n-3)\cdots(\Delta n-(2\ell-1))\geq \Delta^{\ell} n(n-1)\cdots (n-\ell+1)\]
for all $\ell\in [1,n]$ and $\Delta\geq 3$, it follows that $\Eb_{\Gc}[C_\ell]\leq (\Delta-1)^{\ell}/\ell$ and hence
\[ \Eb_{\Gc}[X]=\sum_{\ell\geq \ell_0\log n}w^{\ell}\leq w^{\ell_0\log n}/(1-w),\]
which proves \eqref{eq:qwecrvet} and therefore concludes the proof.
\end{proof}

Our next lemma captures the tree-like structure of random $\Delta$-regular graphs that will be relevant for us. In particular, we give a description of the neighbourhood structure around a path. To do this accurately, we will need a few definitions. Let $G=(V,E)$ be a graph. For a vertex $v\in V$ and integer $h\geq 0$, we denote by $\Gamma_h(G,v)$ the set of vertices at distance $\leq h$ from $v$.
\begin{definition}\label{def:hpathneigh}
Let $G$ be a graph and $P$ be a path in $G$ with vertices $u_1,\hdots, u_\ell$. Let $G\backslash P$ be the graph obtained from $G$ by removing the edges of the path $P$. Then, for an integer $h\geq 0$, the \emph{$h$-graph-neighbourhood} of the path $P$ is the subgraph of $G\backslash P$ induced by the vertex set $\bigcup_{i\in [\ell]}\Gamma_h(G\backslash P, u_i)$.  

A connected component of the $h$-graph-neighbourhood will be called \emph{isolated} if it contains \emph{exactly one} of the vertices $u_1,\hdots,u_\ell$. 
\end{definition}

\begin{lemma}\label{lem:graphneighbor}
Let $\Delta\geq 3$. Then,  for any constant integer $h\geq 0$ and any $\epsilon>0$, there exists a constant $\ell_1>0$ such that the following holds. 

With probability $1-O(1/n^2)$ over the choice of $G\sim\Gc_{n,\Delta}$, every path  $P$ in $G$ with $\ell$ vertices with $\ell_1\leq \ell\leq n^{9/10}$  has an $h$-graph-neighbourhood with at least $(1-\epsilon)\ell$ isolated tree components.
\end{lemma}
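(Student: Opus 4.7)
The plan is a first-moment estimate in the configuration model $\Gc_{n,\Delta}$, combined with a union bound over path lengths $\ell \in [\ell_1, n^{9/10}]$. Call a path $P = u_1, \ldots, u_\ell$ \emph{bad} if its $h$-graph-neighbourhood $G_H$ has fewer than $(1-\epsilon)\ell$ isolated tree components, i.e., if at least $\epsilon \ell$ path vertices lie in a component of $G_H$ that is either non-tree (cycle) or contains a second path vertex. Since every vertex of $V(G_H)$ is by construction connected in $G_H$ to at least one $u_i$ (the paths of length $\leq h$ from $V_H$ to path vertices lie inside $V_H$), the number of components $c_H$ of $G_H$ is at most $\ell$. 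Introduce the ``excess'' quantity $f(P) := e_H - v_H + \ell$ and write $f(P) = \mu + m$, where $\mu$ is the cyclomatic number of $G_H$ and $m := \ell - c_H$ is the number of BFS-tree mergers starting from the $\ell$ path vertices. A direct classification of components by the number of path vertices they contain yields the key inequality: the number of bad path vertices is at most $2 f(P)$.

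The next step translates $f(P)$ into a countable family of \emph{bumps}. Define a bump to be a walk in $G\setminus P$ of length at most $2h+1$ whose endpoints both lie in $\{u_1, \ldots, u_\ell\}$ (the two endpoints may coincide). Each of the $f(P)$ ``useful'' edges of $G_H$ (an edge either contributing to $\mu$ or acting as a BFS-tree merger) can be extended to such a bump by appending tree-paths of length at most $h$ from its endpoints to the nearest path vertices. Since each bump contains at most $2h+1$ edges, a pigeonhole argument produces at least $f(P)/(2h+1)$ bumps that are pairwise distinct as edge sets. In particular, if $P$ is bad then $f(P) \geq \epsilon\ell/2$ and $P$ admits at least $t := \lceil \epsilon\ell/(2(2h+1)) \rceil$ distinct bumps.

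It now suffices to bound the expected number of configurations in $\Gc_{n,\Delta}$ consisting of a path of length $\ell$ together with $t$ distinct bumps. A fixed attachment pattern (specifying the path and, for each bump, its two path-endpoints and its length in $[1, 2h+1]$) corresponds to a labelled subgraph $H$ with $|V(H)| - |E(H)| = 1 - t$, and thus has expected number of occurrences at most $\Theta_\Delta(n^{1-t})$ with a degree factor of at most $\Delta^{\ell + O(ht)}$. Summing over attachment patterns contributes an extra factor of at most $(\ell^2(2h+1))^t / t!$; after Stirling, the expected number of bad paths of length $\ell$ is at most
\[
n \, \Delta^\ell \, \bigl(B \ell / n\bigr)^t, \qquad B = B(h,\Delta,\epsilon).
\]
For $\ell \leq n^{9/10}$ we have $\log(n/\ell) \geq (\log n)/10$, so taking $\ell_1 = \ell_1(h, \Delta, \epsilon)$ a sufficiently large constant ensures this bound is at most $1/n^3$ uniformly over $\ell \in [\ell_1, n^{9/10}]$. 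A union bound over $\ell$ in this range together with the standard transfer from $\Gc_{n,\Delta}$ to the uniform distribution on simple $\Delta$-regular graphs yields the desired probability $O(1/n^2)$.

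The hard part will be step 2: carefully verifying that the $f(P)$ useful edges really produce enough genuinely distinct bumps (so that the $(2h+1)$ loss from pigeonhole is all that is incurred) and canonically choosing the tree-paths used in each bump so that the resulting walks can be cleanly enumerated in the configuration-model expectation with explicit control on the constants $B$ and $\ell_1$ in terms of $h$, $\Delta$, and $\epsilon$.
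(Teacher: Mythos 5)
Your proposal is correct in strategy but takes a genuinely different route from the paper. The paper fixes a path, conditions on its appearance, and reveals its $h$-graph-neighbourhood sequentially by breadth-first search; each pairing ``collides'' with an already-revealed vertex with conditional probability at most $2\ell\Delta^{h+2}/n$, so the number of collisions is dominated by a binomial and a Chernoff bound gives a failure probability of $\emm^{-\Theta(\epsilon\ell\log n)}$ per path, after which one union-bounds over the at most $n\Delta^{\ell}$ paths and over $\ell$. You instead run a static first-moment count over ``path plus $t$ bumps'' substructures. The combinatorial core is the same in both arguments (each excess edge/collision destroys at most two isolated tree components, so a bad path has excess $f(P)\geq\epsilon\ell/2$), and the final exponents agree up to the harmless factor of $h$ you lose in $t$; what the paper's dynamic version buys is that the conditional-probability bound absorbs all dependencies automatically, whereas your static version must do the bookkeeping you flag as the hard part. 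That part is genuinely where care is needed, but it is resolvable, and in a way that makes the pigeonhole step unnecessary: take the BFS spanning forest $F$ of the $h$-graph-neighbourhood rooted at the path vertices; then there are exactly $f(P)=e_H-(v_H-\ell)$ non-forest edges, each bump consists of one non-forest edge plus root-paths of length at most $h$, so each bump contains exactly one non-forest edge and the $f(P)$ bumps are automatically pairwise distinct. More importantly -- and this is the point your ``distinct as edge sets'' criterion does not by itself secure -- what the first moment needs is that the union of $P$, the forest edges used, and the $t$ chosen non-forest edges is a tree plus $t$ extra edges, i.e.\ has edge-excess exactly $t-1$ with at most $\ell+2ht$ vertices; heavily overlapping but distinct bumps could otherwise fail to contribute a full factor of $n^{-1}$ each. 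The canonical forest construction delivers this excess for free, after which your enumeration and the choice of $\ell_1$ go through as you describe.
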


We will use the following version of the well-known Chernoff/Hoeffding inequality.
\begin{lemma}[see, e.g., {\cite[Theorem 21.6 \& Corollary 21.9]{FK}}]\label{lem:chernoff}
Suppose that $S_n=X_1+\cdots+X_n$, where $\{X_i\}_{i\in[n]}$ is a collection of independent random variables such that $0\leq X_i\leq 1$ and $\E[X_i]=\mu_i$ for $i=1,\hdots,n$. Let $\mu=\mu_1+\cdots+\mu_n$. Then, for any $c>1$, 
\[\Pr(S_n\geq c\mu)\leq \exp\big(-c\mu\log(c/\emm)\big).\]
\end{lemma}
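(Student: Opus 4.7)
The plan is to use the standard exponential moment (Chernoff) method. I would start by fixing a parameter $t>0$ to be optimised later and applying Markov's inequality to the non-negative random variable $e^{tS_n}$:
\[
\Pr(S_n\geq c\mu)=\Pr\bigl(e^{tS_n}\geq e^{tc\mu}\bigr)\leq e^{-tc\mu}\,\E\bigl[e^{tS_n}\bigr].
\]
By independence of the $X_i$, the expectation factorises as $\E[e^{tS_n}]=\prod_{i=1}^{n}\E[e^{tX_i}]$, reducing the problem to bounding the moment generating function of each $X_i$ individually.

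The key ingredient is to exploit the boundedness $0\leq X_i\leq 1$. Since the map $x\mapsto e^{tx}$ is convex, for $x\in[0,1]$ we have the pointwise bound $e^{tx}\leq 1-x+xe^t=1+x(e^t-1)$. Taking expectations gives $\E[e^{tX_i}]\leq 1+\mu_i(e^t-1)\leq \exp\bigl(\mu_i(e^t-1)\bigr)$, where the second step uses the elementary inequality $1+y\leq e^y$. Multiplying over $i$ yields $\E[e^{tS_n}]\leq \exp\bigl(\mu(e^t-1)\bigr)$, and therefore
\[
\Pr(S_n\geq c\mu)\leq \exp\bigl(\mu(e^t-1)-tc\mu\bigr).
\]

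To conclude, I would optimise the bound by choosing $t=\log c$, which is positive since $c>1$. With this choice, $e^t-1=c-1$ and $tc=c\log c$, so the exponent simplifies to
\[
\mu(c-1)-c\mu\log c=-c\mu(\log c-1)-\mu\leq -c\mu(\log c-1)=-c\mu\log(c/\emm),
\]
where the inequality uses $\mu\geq 0$. This gives the claimed bound $\Pr(S_n\geq c\mu)\leq \exp\bigl(-c\mu\log(c/\emm)\bigr)$. There is no real obstacle here; the only mildly delicate points are the convexity step that gives the linear upper bound on $e^{tx}$ for $x\in[0,1]$ (which is crucial since the $X_i$ need not be Bernoulli but merely $[0,1]$-valued) and the choice $t=\log c$, which is the minimiser of $\mu(e^t-1)-tc\mu$ and yields the clean closed form. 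Note also that when $1<c<\emm$ the stated bound exceeds $1$ and so holds trivially, which is consistent with the computation above since in that regime the term $-\mu$ is what makes the optimised exponent negative.
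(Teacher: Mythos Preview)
Your proof is correct and follows the standard exponential-moment (Chernoff) argument. Note that the paper does not actually prove this lemma: it is stated with a citation to \cite[Theorem 21.6 \& Corollary 21.9]{FK} and used as a black box, so there is no ``paper's own proof'' to compare against. Your write-up is exactly the textbook derivation one would expect behind that citation, including the convexity step $e^{tx}\leq 1+x(e^t-1)$ for $x\in[0,1]$ (which is what lets the argument go through for general $[0,1]$-valued variables rather than just Bernoulli), the choice $t=\log c$, and the final drop of the $-\mu$ term to reach the stated form.
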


\begin{proof}[Proof of Lemma~\ref{lem:graphneighbor}]
Fix an arbitrary integer $h\geq 0$ and constant $\epsilon>0$. We will show that the lemma holds with $\ell_1:=200/\epsilon$. 

In the configuration model, a path $P$ with $\ell$ vertices corresponds to an ordered $2(\ell-1)$-tuple of points $(u_1,i_1),(u_2,i_2),(u_2,i_3),\hdots, (u_{\ell-1},i_{2(\ell-2)}),(u_{\ell-1},i_{2\ell-3}),(u_\ell,i_{2(\ell-1)})$ where $u_1,\hdots,u_\ell$ are distinct elements of $[n]$ and $i_1,\hdots,i_{2(\ell-1)}\in [\Delta]$ such that $(u_j,i_{2j-1})$ is paired to $(u_{j+1},i_{2j})$ for all $j\in[\ell-1]$.  Fix any such path $P$ with $\ell$ vertices $u_1,\hdots, u_\ell$ and condition on the event that $P$ appears in $G$. We will next reveal the $h$-graph-neighbourhood of $P$ in a breadth-first search manner, as follows. 
\begin{figure}[h]
\noindent 1. Let $U_0=\{u_1,\hdots,u_\ell\}$. Initialize $U_1,\hdots, U_h, U_{h+1}$ to be empty sets of vertices.\\
\noindent 2. For $t=0,1,\hdots,h$:\\
\noindent 3. \phantom{\ \ \ \ } Order the vertices in $U_t$ in lexicographic order.\\
\noindent 4. \phantom{\ \ \ \ }  For $k=1,\hdots, \ell \Delta^{t}$:\\
\noindent 5. \phantom{\ \ \ \ \ \ \ } Pick the $k$-th vertex in $U_t$, say $u$ (if $k>|U_t|$, set $u=0$).\\
\noindent 6. \phantom{\ \ \ \ \ \ \ }  For $i=1,\hdots,\Delta$:\\
\noindent 7. \phantom{\ \ \ \ \ \ \ \ \ \ \ \ } If $u\neq 0$ and the point $(u,i)$ is not already paired, \\
\noindent 8. \phantom{\ \ \ \ \ \ \ \ \ \ \ \ \ \ \ \ \ } Pair $(u,i)$ with a point not already paired, say $(v,i')$, selected uniformly at random.\\ 
\noindent 9. \phantom{\ \ \ \ \ \ \ \ \ \ \ \ \ \ \ \ \ } If $v\notin U_1\cup \cdots \cup U_t\cup U_{t+1}$, add $v$ to $U_{t+1}$. 
\end{figure}

By induction, we have that, for all $t\geq 0$, $U_t$ consists of the set of vertices at distance $t$ from a vertex in $\{u_1,\hdots, u_\ell\}$. Note also that  $|U_t|\leq \ell \Delta^{t}$. Now, fix arbitrary $t\in \{0,1,\hdots,h\}$, $k\in[\ell \Delta^{t}]$ and $i\in [\Delta]$. Let $\mathcal{F}_{t,k,i}$ be the pairings that we have revealed about the graph $G$ just before executing lines 7--9. Similarly, let $S_{t,k,i}$ be  the set of vertices we have encountered just before executing lines 7--9 (i.e., the union of $U_1,\hdots, U_t$ together with the current set $U_{t+1}$). Let also $Q_{t,k,i}$ be the event that in lines 7-8 all of the following happen:  (i) in line 7, $u\neq 0$ and the point $(u,i)$ is not paired, and (ii) in line 8, $(u,i)$ gets paired to a point in $S_{t,k,i}\times [\Delta]$. There are at least $\Delta (n- |S_{t,k,i}|)$ points that have not been paired, so
\begin{equation}\label{eq:qtki}
\Pr(Q_{t,k,i}\mid \mathcal{F}_{t,k,i} )\leq \frac{|S_{t,k,i}|}{n-|S_{t,k,i}|}\leq \frac{2\ell \Delta^{h+2}}{n}, 
\end{equation}
where in the last inequality we used that 
\[|S_{t,k,i}|\leq |U_0|+\cdots +|U_{t+1}|\leq \ell \Delta^{t+2}\leq \ell \Delta^{h+2}\mbox{ and } n-\ell \Delta^{h+2}\geq n/2\]
for all sufficiently large $n$ (using that $\ell_1\leq \ell\leq n^{9/10}$). Using \eqref{eq:qtki}, we obtain that the number of events $\{Q_{t,k,i}\}$ that occur is dominated above by a binomial r.v. $X_\ell\sim \mathrm{Bin}(\ell \Delta^{h+2},\frac{2\ell \Delta^{h+2}}{n})$. By Lemma~\ref{lem:chernoff},  we have that 
\[\Pr[X_\ell\geq \epsilon \ell/2 ]\leq \emm^{-\frac{1}{2}\epsilon \ell \log (c_\ell/\emm)}, \mbox{ where } c_\ell=\frac{\epsilon n}{4\ell \Delta^{2h+4}}.\]
Since $\ell\leq  n^{9/10}$, we have $c_\ell\geq  \emm n^{1/20}$ for all sufficiently large $n$ and hence 
\[\Pr[X_\ell\geq \epsilon \ell ]\leq \emm^{-\frac{1}{40}\epsilon \ell \log n}.\]
It follows that for any path $P$ with $\ell$ vertices, with probability $\geq 1-\emm^{-\frac{1}{40}\epsilon \ell \log n}$, at most $\epsilon \ell/2$ of the events $\{Q_{t,k,i}\}$ occur, i.e., the $h$-graph-neighbourhood of $P$ contains at least $(1-\epsilon)\ell$ isolated tree components (every event $Q_{t,k,i}$ that occurs decreases the number of isolated tree components by at most two - on the other hand, if $Q_{t,i,k}$ does not occur then the number of isolated tree components stays the same). Since there are at most $ n \Delta^{\ell}$ paths with $\ell$ vertices, we obtain by a union bound that the probability that there exists a path whose $h$-graph-neighbourhood contains less than $(1-\epsilon)\ell$ isolated tree components is upper bounded by
\[\sum^{n^{9/10}}_{\ell=\left\lceil 200/\epsilon\right\rceil}\emm^{\log n +\ell \log \Delta-\frac{1}{40}\epsilon \ell \log n}=O(1/n^{2}),\]
where the last bound follows by observing that, for all sufficiently large $n$, the summands are decreasing functions of $\ell$ and that for $\ell=\left\lceil 200/\epsilon\right\rceil$ we have $\emm^{\log n +\ell \log \Delta-\frac{1}{40}\epsilon \ell \log n}=O(1/n^3)$.

This concludes the proof of Lemma~\ref{lem:graphneighbor}.
\end{proof}
To conclude this section, we clarify a small point relevant to Remark~\ref{rem:4rf4rr32221}. We will only utilise Lemma~\ref{lem:graphneighbor} for paths of logarithmic length (despite that the lemma is stated for convenience for much longer paths) and therefore the property can be checked in polynomial time. Analogously, the sum in Lemma~\ref{lem:numpaths} will only be considered for cycles of logarithmic length and therefore the (restricted) inequality can also be checked in polynomial time.

\section{Algorithm for the random-cluster model}\label{sec:rcalgo}
In this section, we prove Theorem~\ref{thm:main2}. In Section~\ref{sec:bb44hg4g}, we first describe the algorithm and analyse how to update a random-cluster configuration when we add a new edge. In Section~\ref{sec:t5g4g65h77}, we show how to control the aggregate error of our sampling algorithm on random regular graphs. Finally, in Section~\ref{sec:vewe2d232}, we combine these pieces to conclude Theorem~\ref{thm:main2}.
\subsection{The Algorithm}\label{sec:bb44hg4g}
To prove Theorem~\ref{thm:main2}, we will consider a simple percolation algorithm for sampling a random-cluster configuration on a random $\Delta$-regular graph $G$. The algorithm is given in Figure~\ref{alg:perc} and in Section~\ref{sec:vewe2d232} we will detail its performance when the input is a random regular graph.

\begin{figure}[h]
\begin{mdframed}
\textbf{Algorithm} $\SampleRC(G)$ \vskip 0.05cm
\rule[0.3cm]{6cm}{0.4pt}
\vskip 0.05cm
\noindent \textbf{parameters:} reals $p\in(0,1)$ and  $q\geq 1$ \vskip 0.2cm
\noindent \textbf{Input:} Graph $G=(V,E)$ 

\noindent \textbf{Output:} Either \textsc{Fail} or a set $S\subseteq E$

\vskip 0.4cm

\noindent $E':= \{e \in E\mid \mbox{$e$ belongs to a short cycle}\}$ \vskip 0.2cm

\noindent \textbf{if} \mbox{$G'=(V,E')$ contains a component which is neither a cycle nor an isolated vertex} \vskip 0.1cm \noindent \hspace{0.5cm}\textbf{then} \textsc{Fail}

\noindent \textbf{else} \vskip 0.1cm
\noindent \hspace{0.5cm} Sample an RC configuration $S'\subseteq E'$ on $G'$ (according to $\varphi_{G'}$);  \vskip 0.1cm
\noindent\hspace{0.5cm} Add to $S'$ each edge in $E\backslash E'$ independently with probability $p/(p+(1-p)q)$;\vskip 0.1cm
\noindent\hspace{0.5cm} Output the resulting set $S\subseteq E$.
\end{mdframed}
\caption{\label{alg:perc} Algorithm for sampling a random-cluster configuration. Note that, since $G'$ is a disjoint union of short cycles, the inital configuration $S'$ in the algorithm above can be obtained quickly in various ways (e.g., even brute force takes time $O^*(n^{6/5})$ since each cycle has length $\leq \tfrac{1}{5}\log_{\Delta-1} n$ and there are at most $n$ cycles).}
\end{figure}

Prior to that, let us first motivate the algorithm $\SampleRC$, by demonstrating how to update an RC configuration when we add a single edge $\{u,v\}$. To control the effect of adding an edge, it will be relevant to consider the event that there is an open path between $u$ and $v$ (for a path $P$ in $G$ and an RC configuration $S\subseteq E$, we say that $P$ is open in $S$ if all of its edges belong to $S$); we denote this event by $u\conn v$.
\begin{lemma}\label{lem:edgewise}
Let $p\in (0,1)$ and $q\geq 1$, and consider arbitrary $\epsilon\in (0,1/q)$. 

Let $G=(V,E)$ be a graph and $u,v$ be two vertices such that $\{u,v\}\not\in E$ and $\varphi_G(u\conn v)\leq \epsilon$. Consider the graph $G'=(V,E')$ obtained from $G$ by adding the edge $\{u,v\}$. Sample a random subset of edges $Y\subseteq E'$ as follows: first, sample a subset of edges $X\subseteq E$ according to the RC distribution $\varphi_G$ and, then, set $Y=X\cup \{e\}$ with probability $p/(p+(1-p)q)$, and $Y=X$ otherwise. 

Then, the distribution of $Y$, denoted by $\nu_Y$, is within total variation distance $2q\epsilon$ from the RC distribution $\varphi_{G'}$ on $G'$ with parameters $p,q$, i.e.,
\[\norm{\nu_Y-\varphi_{G'}}_{\TV}\leq 2q \epsilon.\]
\end{lemma}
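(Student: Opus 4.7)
The plan is to relate the algorithm's output distribution $\nu_Y$ to $\varphi_{G'}$ by exploiting the fact that the random-cluster weights on $G' = G + \{e\}$ factor cleanly through the restriction of a configuration to $E$. Writing $R := Z^{\mathrm{rc}}_G/Z^{\mathrm{rc}}_{G'}$, the RC definition gives, for any $S\subseteq E$, $\varphi_{G'}(S) = (1-p)R\,\varphi_G(S)$ (removing $e$ does not change the number of components), while $\varphi_{G'}(S\cup\{e\}) = pR\,\varphi_G(S)$ if $u\conn v$ in $S$, and $\varphi_{G'}(S\cup\{e\}) = (p/q)R\,\varphi_G(S)$ if $u\not\conn v$ in $S$ (adding $e$ merges two components precisely in the latter case). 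Equivalently, under $\varphi_{G'}$, the conditional distribution of the indicator that $e$ is open, given the restriction $X = S$ to $E$, is $\mathrm{Ber}(p)$ if $u\conn v$ in $S$ and $\mathrm{Ber}(\alpha)$ otherwise, where $\alpha := p/(p+(1-p)q)$ is precisely the probability used by the algorithm.

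The plan is now to sandwich $\nu_Y$ and $\varphi_{G'}$ via an intermediate distribution $\mu^*$: sample $X\sim\varphi_G$, and then sample $e$ from the true $\varphi_{G'}$-conditional just described (using $p$ or $\alpha$ depending on whether $u\conn v$ in $X$). By the triangle inequality, $\norm{\nu_Y - \varphi_{G'}}_{\TV} \leq \norm{\nu_Y - \mu^*}_{\TV} + \norm{\mu^* - \varphi_{G'}}_{\TV}$. The first summand is straightforward: $\nu_Y$ and $\mu^*$ have the same $E$-marginal $\varphi_G$, and differ only in the coin for $e$ on the event that $u\conn v$ in $X$, where the total variation between $\mathrm{Ber}(\alpha)$ and $\mathrm{Ber}(p)$ is $|p-\alpha| = p(1-p)(q-1)/D$ with $D := (1-p)q + p \geq 1$; hence $\norm{\nu_Y - \mu^*}_{\TV} \leq |p-\alpha|\pi \leq (q-1)\pi$, where $\pi := \varphi_G(u\conn v) \leq \epsilon$.

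For the second summand, since $\mu^*$ and $\varphi_{G'}$ share the same conditional for $e$ given $X$, the TV distance reduces cleanly to the marginal discrepancy: $\norm{\mu^* - \varphi_{G'}}_{\TV} = \norm{\varphi_G - \hat{\varphi}_{G'}}_{\TV}$, where $\hat{\varphi}_{G'}$ is the $E$-marginal of $\varphi_{G'}$. Summing $\hat{\varphi}_{G'}(S) = \varphi_{G'}(S) + \varphi_{G'}(S\cup\{e\})$ using the explicit formulas above yields $\hat{\varphi}_{G'}(S)/\varphi_G(S) = R$ when $u\conn v$ in $S$ and $RD/q$ otherwise; the normalisation condition then forces $R = q/(D + p(q-1)\pi)$, after which a brief calculation of $|1-R|$ and $|1-RD/q|$ gives $\norm{\varphi_G - \hat{\varphi}_{G'}}_{\TV} \leq p(q-1)\pi \leq (q-1)\pi$. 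Combining the two bounds yields $\norm{\nu_Y - \varphi_{G'}}_{\TV} \leq 2(q-1)\pi \leq 2q\epsilon$, as required. The only technical step is the normalisation computation that determines $R$ and in turn $\hat{\varphi}_{G'}$; everything else is a short triangle-inequality argument, and the assumption $\epsilon < 1/q$ is not needed for the algebra itself but only ensures the resulting bound is non-trivial.
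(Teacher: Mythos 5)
Your proof is correct, and it is organised differently from the paper's. Both arguments rest on the same weight identities ($w_{G'}(S)=(1-p)w_G(S)$, and $w_{G'}(S\cup\{e\})=p\,w_G(S)$ or $\tfrac{p}{q}w_G(S)$ according to whether $u\conn v$ in $S$), but the paper compares $\nu_Y$ and $\varphi_{G'}$ directly, term by term over $S$ and $S\cup\{e\}$, after first establishing the partition-function bound $\big|\tfrac{(\frac{p}{q}+1-p)Z_G}{Z_{G'}}-1\big|\le q\epsilon$ via a monotonicity argument in the ratio $Z_{G,\open}/Z_{G,\closed}$. You instead observe that under $\varphi_{G'}$ the conditional law of the edge $e$ given the restriction to $E$ is exactly $\mathrm{Ber}(p)$ on $\{u\conn v\}$ and $\mathrm{Ber}(\alpha)$ otherwise, with $\alpha=p/(p+(1-p)q)$ the algorithm's coin, and then split the error through the intermediate measure $\mu^*$ into a ``wrong coin'' term (only active on the open event, contributing $|p-\alpha|\,\pi$) and a ``wrong $E$-marginal'' term, where the shared conditional kernel lets the total variation collapse to $\norm{\varphi_G-\hat\varphi_{G'}}_{\TV}$ and the normalisation identity pins down $R=Z_G/Z_{G'}$ exactly. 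I checked the algebra: $q-D=p(q-1)$ gives $R=q/(D+p(q-1)\pi)$, whence $|1-R|\le p(q-1)$ and $|1-RD/q|\le p(q-1)\pi$, and the two contributions sum to at most $2(q-1)\pi\le 2q\epsilon$. Your decomposition buys three things: it isolates the two genuinely distinct sources of error, it yields the marginally sharper constant $2(q-1)$ in place of $2q$, and it dispenses with the hypothesis $\epsilon<1/q$ (which the paper uses to get $|qM+(q-1)|\le 2q$ on the open event); the paper's route, in exchange, produces the explicit two-sided bound on $(\tfrac{p}{q}+1-p)Z_G/Z_{G'}$, which is of the form reused elsewhere in the analysis.
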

\begin{proof}
Fix an arbitrary $\epsilon\in (0,1/q)$.  

Let $\Omega_{\open}$ be the set of subsets  $S\subseteq E$ such that in the graph $(V,S)$, $u$ and $v$ are connected by an open path and let $\Omega_{\closed}=2^{E}\backslash \Omega_{\open}$. For $S\subseteq E$, denote for convenience by $S_e$ the set $S\cup \{e\}$. Observe that 
\begin{align}
\forall S\in \Omega_{\closed}:&\qquad w_{G'}(S_e)=\frac{p}{q} w_G(S), \quad  w_{G'}(S)=(1-p) w_G(S),\label{eq:Sclosed}\\
\forall S\in \Omega_{\open}:&\qquad w_{G'}(S_e)=p w_G(S), \quad  w_{G'}(S)=(1-p) w_G(S).\label{eq:Sopen}
\end{align}
Note also that for any $S\subseteq E$, we have
\[\nu_Y(S_e)=\frac{p\,\varphi_G(S)}{p+q(1-p)}=\frac{\frac{p}{q} w_G(S)}{(\frac{p}{q}+1-p)Z_G}, \qquad \nu_Y(S)=\frac{q(1-p)\varphi_G(S)}{p+q(1-p)}=\frac{(1-p)w_G(S)}{(\frac{p}{q}+1-p)Z_G}.\]
Using \eqref{eq:Sclosed}, \eqref{eq:Sopen} and the assumption $\varphi_G(u\conn v)\leq \epsilon$, we will also show the following for the partition functions $Z_G,Z_G'$:
\begin{equation}\label{eq:ZratioGGp}
|M|\leq q \epsilon, \mbox{ where } M:=\frac{(\frac{p}{q}+1-p)Z_G}{Z_{G'}}-1.
\end{equation} 
Let us conclude the proof assuming, for now, \eqref{eq:ZratioGGp}. To do this, we decompose $\norm{\nu_Y-\varphi_{G'}}_{\TV}$ as 
\begin{equation}\label{eq:decompo353}
\norm{\nu_Y-\varphi_{G'}}_{\TV}=\frac{1}{2}\sum_{S\subseteq E}|\nu_Y(S)-\varphi_{G'}(S)|+\frac{1}{2}\sum_{S\subseteq E}|\nu_Y(S_e)-\varphi_{G'}(S_e)|.
\end{equation}

For $S\subseteq E$, we have
\[|\nu_Y(S)-\varphi_{G'}(S)|=\bigg|\frac{(1-p)w_G(S)}{(\frac{p}{q}+1-p)Z_G}-\frac{(1-p)w_G(S)}{Z_{G'}}\bigg|=\frac{(1-p)w_G(S)}{(\frac{p}{q}+1-p)Z_G}|M|\leq |M|\,\varphi_G(S)\]
and therefore we can bound the first sum in \eqref{eq:decompo353} as
\begin{equation}\label{eq:v433rftg4t}
\sum_{S\subseteq E}|\nu_Y(S)-\varphi_{G'}(S)|\leq |M|\leq q\epsilon .
\end{equation}
To bound the second sum in \eqref{eq:decompo353}, we consider whether $S\in \Omega_{\closed}$ or $S\in \Omega_{\open}$. For $S\in \Omega_{\closed}$, we have
\[|\nu_Y(S_e)-\varphi_{G'}(S_e)|=\bigg|\frac{\frac{p}{q}w_G(S)}{(\frac{p}{q}+1-p)Z_G}-\frac{\frac{p}{q}w_G(S)}{Z_{G'}}\bigg|=\frac{\frac{p}{q}w_G(S)}{(\frac{p}{q}+1-p)Z_G}|M|\leq |M|\,\varphi_G(S),\]
while for $S\in \Omega_{\open}$, we have 
\[|\nu_Y(S_e)-\varphi_{G'}(S_e)|=\bigg|\frac{\frac{p}{q}w_G(S)}{(\frac{p}{q}+1-p)Z_G}-\frac{p w_G(S)}{Z_{G'}}\bigg|=\frac{\frac{p}{q}w_G(S)}{(\frac{p}{q}+1-p)Z_G}|q M+(q-1)|\leq 2q\, \varphi_G(S),\]
where the last inequality follows from  $|q M+(q-1)|\leq q(|M|+1)\leq 2q$ (using that $|M|\leq q\epsilon\leq 1$ from \eqref{eq:ZratioGGp}). Hence
\begin{equation}\label{eq:v433rftg4tb}
\sum_{S\subseteq E}|\nu_Y(S_e)-\varphi_{G'}(S_e)|\leq  |M|\,\varphi_G(\Omega_{\closed})+2q\, \varphi_G(\Omega_{\open})\leq |M|+2q \epsilon\leq 3q\epsilon.
\end{equation}
Plugging \eqref{eq:v433rftg4t} and \eqref{eq:v433rftg4tb} in \eqref{eq:decompo353}, we obtain that  $\norm{\nu_Y-\varphi_{G'}}_{\TV}\leq 2q\epsilon$ as wanted.

 To complete the proof, it only remains to show \eqref{eq:ZratioGGp}. Let
\[Z_{G,\open}=\sum_{S\in \Omega_{\open}} w_G(S),\quad  Z_{G,\closed}=\sum_{S\in \Omega_{\closed}} w_G(S),\]
so that $Z_G=Z_{G,\open}+Z_{G,\closed}$. By assumption, we have that
\[\varphi_G(u \conn v)=\frac{Z_{G,\open}}{Z_G}\leq \epsilon,\mbox{ so that } \frac{Z_{G,\open}}{Z_{G,\closed}}\leq \frac{\epsilon}{1-\epsilon}.\]
Let also
\[Z_{G',\open}=\sum_{S\in \Omega_{\open}} w_{G'}(S)+w_{G'}(S_e),\quad  Z_{G',\closed}=\sum_{S\in \Omega_{\closed}} w_{G'}(S)+w_{G'}(S_e),\]
so that  $Z_{G'}=Z_{G',\open}+Z_{G',\closed}$. Using \eqref{eq:Sclosed} and \eqref{eq:Sopen}, we obtain that
\[Z_{G',\open}=Z_{G,\open}, \qquad Z_{G',\closed}=\Big(\frac{p}{q}+1-p\Big)Z_{G,\closed} \] 
and therefore
\[\frac{Z_G}{Z_{G'}}=\frac{Z_{G,\open}+Z_{G,\closed}}{Z_{G',\open}+Z_{G',\closed}}=\frac{\frac{Z_{G,\open}}{Z_{G,\closed}}+1}{\frac{Z_{G,\open}}{Z_{G,\closed}}+\big(\frac{p}{q}+1-p\big)}.\]
Since $\tfrac{Z_{G,\open}}{Z_{G,\closed}}\in(0, \tfrac{\epsilon}{1-\epsilon}]$ and the function $f(x):=\frac{x+1}{x+(\frac{p}{q}+1-p)}$ is decreasing in $x$, we obtain that $f(\tfrac{\epsilon}{1-\epsilon})\leq\tfrac{Z_G}{Z_{G'}}\leq f(0)$ and hence the bounds
\[\frac{\frac{p}{q}+1-p}{\epsilon+(1-\epsilon)(\frac{p}{q}+1-p)}\leq \frac{(\frac{p}{q}+1-p)Z_G}{Z_{G'}}\leq 1.\]
The l.h.s. is $\geq 1-q \epsilon$ for all $p\in (0,1)$, $q\geq 1$ and $\epsilon\in (0,1)$, and hence \eqref{eq:ZratioGGp} follows. 

This concludes the proof of Lemma~\ref{lem:edgewise}.
\end{proof}
\subsection{Aggregating the error}\label{sec:t5g4g65h77}
To utilise Lemma~\ref{lem:edgewise}, we need to upper bound the probability that two vertices belong to the same component in a RC configuration. In turn, it suffices to bound the probability that there is an open path between the vertices. To this end, we utilise the fact that the parameters $p,q$ are in the uniqueness region of the $(\Delta-1)$-ary tree and the tree-like structure around paths (cf. Definition~\ref{def:hpathneigh}) to show the following. The proof of the lemma  is given in Section~\ref{sec:qaz234}. 

\newcommand{\statelemtreespath}{
Let $\Delta\geq 3$ be an integer, $q\geq 1$ and $p<p_c(q,\Delta)$. There exist constants $K<1/(\Delta-1)$ and $\epsilon>0$ such that the following holds for all sufficiently large integers $\ell$ and $h$.    

Let $G$ be a $\Delta$-regular graph and $P$ be a path with $\ell$ vertices whose $h$-graph-neighbourhood contains $(1-\epsilon)\ell$ isolated tree components.  Let $\varphi_G$ be the RC distribution on $G$ with parameters $p,q$. Then,   
\[\varphi_G(\mbox{path $P$ is open})\leq K^{\ell}.\]
}
\begin{lemma}\label{lem:treespath}
\statelemtreespath
\end{lemma}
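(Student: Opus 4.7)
The plan is to reduce the computation to a tree recursion using the spatial Markov property of the random-cluster model and the fact that around $(1-\epsilon)\ell$ vertices of $P$ the graph looks like a truncated $(\Delta-1)$-ary tree, so that the uniqueness assumption $p < p_c(q,\Delta)$ forces the per-edge marginal ``path-extension'' probability to be strictly less than $1/(\Delta-1)$.

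First, I would reveal the edges of $P$ along the path from $u_1$ to $u_\ell$, conditioning at step $i$ on the event that $e_1,\dots,e_{i-1}$ are open and on the configuration on all edges outside the $h$-graph-neighbourhood of $P$. Using the RC Markov/DLR property together with the FKG/monotonicity inequality (valid for $q \geq 1$), the conditional probability that $e_i=\{u_i,u_{i+1}\}$ is open can be stochastically dominated by the corresponding marginal in a $(\Delta-1)$-ary tree $T_h$ of height $h$ rooted at $u_i$, equipped with the wired boundary condition at its leaves (the wired measure is the worst case because it maximizes connectivity, hence maximizes the probability of the increasing event ``$e_i$ open''). Whenever the $h$-neighbourhood of $u_i$ is an isolated tree component, this domination is exact up to the wiring at depth $h$.

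Second, I would carry out the standard random-cluster tree recursion on $T_h$ to bound the per-edge probability. Letting $R_h$ denote the ratio of partition functions $\Zrc_{T_h,\text{root connected to boundary}} / \Zrc_{T_h,\text{root not connected}}$ under wired boundary, a one-step recursion (open the edge to a child with probability $\propto p$, pick up a factor of $q$ per newly-disconnected subtree) yields a contraction whose fixed-point equation is exactly $h(y) = p/(1-p)$ with $h$ as in~\eqref{eq:pcq}. Because $p<p_c(q,\Delta)$, i.e.\ $p/(1-p) < \inf_{y>1} h(y)$, one obtains a uniform bound on the conditional probability that $e_i$ is open of the form $K_0$ with
\[
(\Delta-1)\,K_0 < 1,
\]
for all $h$ large enough. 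This is the only place where the uniqueness hypothesis is used, and it is also where the dependence on $h$ enters: one needs $h$ large enough that the tree recursion has sufficiently contracted toward its fixed point.

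Third, I would multiply the per-edge bounds. For each of the at least $(1-\epsilon)\ell$ vertices $u_i$ that sit in an isolated tree component, the step contributes a factor at most $K_0$; for each of the remaining at most $\epsilon\ell$ ``bad'' vertices, I would use the trivial factor $1$ (or a uniform constant $\leq 1$). This gives
\[
\varphi_G(P\text{ is open}) \;\leq\; K_0^{(1-\epsilon)\ell}.
\]
Since $(\Delta-1)K_0<1$, we may pick $\epsilon>0$ sufficiently small so that $K:=K_0^{1-\epsilon}$ still satisfies $(\Delta-1)K<1$, which is the conclusion.

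The main technical obstacle is the first step: the RC conditional distribution on the path, given the rest of the graph, is genuinely non-product, and one has to argue carefully that (i) the isolated tree components around the ``good'' $u_i$ allow the wired-boundary tree to act as an upper-bound comparison object, and (ii) the sequential conditioning on ``$e_1,\dots,e_{i-1}$ already open'' does not destroy this comparison. Both can be handled by combining the standard spatial Markov property of the random-cluster model with FKG and with the observation that opening earlier edges of $P$ only merges components outside $u_i$'s tree neighbourhood, hence can be absorbed into the wired boundary used for the $u_i$-comparison.
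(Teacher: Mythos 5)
Your overall shape---a per-edge factor strictly below $1/(\Delta-1)$ at the $(1-\epsilon)\ell$ good vertices and a trivial factor at the rest---is the right target, but the mechanism you give for the per-edge bound fails, and it misplaces where the hypothesis $p<p_c(q,\Delta)$ is actually used. Given the states of all other edges, the conditional probability that $e_i=\{u_i,u_{i+1}\}$ is open equals $\frac{p}{p+(1-p)q}$ if $u_i$ and $u_{i+1}$ are \emph{not} already connected off $e_i$, and equals $p$ if they are. Your comparison object---condition everything outside $u_i$'s depth-$h$ tree to be open and absorb the already-opened edges $e_1,\dots,e_{i-1}$ into the wiring---connects \emph{both} endpoints of $e_i$ to the wired boundary (e.g.\ through the path endpoints $u_1,u_\ell$, which never lie in isolated tree components in the sense used here, or through any earlier non-tree vertex), so the monotone upper bound it produces is $p$, not $\frac{p}{p+(1-p)q}$; and $p$ can be far above $1/(\Delta-1)$ throughout $p<p_c(q,\Delta)$ when $q>1$. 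Relatedly, the inequality $\frac{p}{p+(1-p)q}<\frac{1}{\Delta-1}$ is not an output of the tree recursion: it follows already from $p<\frac{q}{q+\Delta-2}$ (Remark~\ref{rem:6gtf5qb}). What the full uniqueness assumption buys---and what your sketch never establishes---is a bound on the probability that the endpoints of $e_i$ are already connected, i.e.\ that the hanging trees' roots reach the wired boundary; this is precisely $\hat{\varphi}_h^*(\rho\conn\infty)\to 0$ from Lemma~\ref{lem:hagguni} and Corollary~\ref{cor:modified}. Without it you cannot rule out that a constant fraction of the path edges only receive the factor $p$.

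Repairing the sequential argument is delicate for exactly the reason you flag in (ii): the event ``$u_{i+1}\conn\infty$'' is increasing, so conditioning on earlier edges being open (or wiring anything further) only pushes its probability up, and the crude ``everything else open'' comparison gives probability $1$ via the forward path edges to the wired endpoint $u_\ell$. The paper sidesteps this with a global partition-function comparison on the wired graph $H$ (path plus hanging trees): for each configuration it counts the number $r$ of roots connected to $\infty$, observes that every path edge with both endpoints off $\infty$ is reweighted by $\frac{p/q}{p/q+1-p}$ when forced open, and uses $\hat{Z}^*_{h,\infty}/\hat{Z}^*_h\le\epsilon$ to show that configurations with $r\gtrsim\epsilon'\ell$ contribute negligibly, absorbing a correction factor $q^{O(\epsilon'\ell)}$ into $K$. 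To keep a per-edge structure you would need, at each step, a quantitative bound on $\varphi(u_{i+1}\conn\infty\mid e_1,\dots,e_{i-1}\ \mathrm{open})$ that does not condition the forward path open---which is essentially the bookkeeping the paper's global computation performs.
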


Using monotonicity properties of the RC distribution, we can extend Lemma~\ref{lem:treespath} to arbitrary subgraphs of a target graph $G$. In particular, suppose that $G,P$ are as in Lemma~\ref{lem:treespath} and that $G'$ is a subgraph of $G$ which contains the path $P$. Then it also holds that $\varphi_{G'}(\mbox{$P$ is open})\leq K^{\ell}$.  We will not define the notion of monotonic distributions in its full generality, but instead we will just state the following property of RC distributions  which will be sufficient for our purposes, see \cite[Sections 2.1 \& 2.2]{Grimmettbook} for a detailed exposition. 
\begin{lemma}[see, e.g., {\cite[Chapter 2]{Grimmettbook}}]\label{lem:monotonicity}
Let $G=(V,E)$ be a graph and consider the RC distribution on $G$ with parameters $p\in (0,1)$ and $q\geq 1$. Then for any subsets $S,S'\subseteq E$ such that $S\subseteq S'$, it holds that
\[\varphi_G(\mathcal{F}\mid\mbox{$S$ open})\leq \varphi_G(\mathcal{F}\mid\mbox{$S'$ open})\quad \mbox{ and }\quad \varphi_G(\mathcal{F}\mid\mbox{$S$ closed})\geq \varphi_G(\mathcal{F}\mid\mbox{$S'$ closed})\]
for any increasing event $\mathcal{F}$.\footnote{An event $\mathcal{F}\subseteq 2^E$ is increasing if, for all $T\subseteq T'\subseteq E$, $T\in \mathcal{F}$ implies that $T'\in \mathcal{F}$ as well. Note also that $S'$ open means that every edge in $S'$ is open, and $S'$ closed means that every edge in $S'$ is closed.}
\end{lemma}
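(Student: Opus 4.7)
The plan is to derive Lemma~\ref{lem:monotonicity} from the FKG inequality for the random-cluster measure, which in turn follows from verifying the FKG lattice (Holley) condition. I would proceed in two stages: first verify that $\varphi_G$ satisfies the lattice condition when $q\geq 1$, then use Holley's theorem to convert this into the stochastic monotonicity encoded by the two displayed inequalities.

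For the first stage, write $\log w_G(S) = |S|\log p + |E\setminus S|\log(1-p) + k(S)\log q$. The first two terms are modular in $S$, so the lattice condition
\[
w_G(S_1\cup S_2)\, w_G(S_1\cap S_2) \;\geq\; w_G(S_1)\, w_G(S_2) \qquad \text{for all } S_1,S_2\subseteq E
\]
reduces, since $\log q\geq 0$ when $q\geq 1$, to the supermodularity of the components-count function:
\[
k(S_1\cup S_2)+k(S_1\cap S_2)\;\geq\; k(S_1)+k(S_2).
\]
This is the essential combinatorial step. It is equivalent to the submodularity of the graphic-matroid rank function $S\mapsto |V|-k(S)$, a classical fact, and it can also be proved directly by induction on $|S_1\triangle S_2|$ using the observation that whether the endpoints of a fixed edge $e$ already lie in the same component of a graph is a monotone property of its edge set; hence adding $e$ decreases $k$ by one on the smaller of two nested sets whenever it does so on the larger, and the desired inequality then follows from a short accounting of the two possible cases.

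For the second stage, once the lattice condition is in hand, Holley's theorem gives FKG for $\varphi_G$ and, more generally, shows that the conditional measures $\varphi_G(\cdot\mid \text{$S$ open})$ and $\varphi_G(\cdot\mid \text{$S$ closed})$ are stochastically monotone in $S$. To see this, observe that each such conditional measure is itself a random-cluster–type weighting on the remaining edges, and therefore inherits log-supermodularity (conditioning on the values of a subset of coordinates preserves the lattice condition). Holley's theorem then produces an explicit monotone coupling between the measures indexed by $S$ and by any $S'\supseteq S$: enlarging $S$ in the ``open'' conditioning stochastically increases the resulting measure, while enlarging $S$ in the ``closed'' conditioning stochastically decreases it. Evaluating on an arbitrary increasing event $\mathcal{F}$ yields both displayed inequalities at once.

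The main obstacle is the supermodularity of $k$, which is where the hypothesis $q\geq 1$ enters in an essential way: for $q<1$ the sign of $\log q$ flips and the lattice condition (and FKG) fails in general. Everything else is a routine application of standard Holley/FKG machinery, so I would expect the write-up to consist mostly of carefully stating the coupling produced by Holley and then reading off the two inequalities from it.
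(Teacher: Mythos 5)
Your proposal is correct and is essentially the standard argument: the paper does not prove this lemma itself but cites it as a known fact from \cite[Chapter 2]{Grimmettbook}, and the FKG-lattice-condition/Holley route you describe (reducing to supermodularity of $k$, which needs $q\geq 1$, then noting that fixing coordinates preserves the lattice condition and applying Holley/FKG to compare the two conditional measures) is precisely the proof given in that reference. No gaps; the supermodularity of $k$ via the monotonicity of the event that the endpoints of an edge are already connected is the right key step.
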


Combining Lemmas~\ref{lem:numpaths},~\ref{lem:graphneighbor} and~\ref{lem:treespath}, we can now conclude the following.
\begin{lemma}\label{lem:error}
Let $\Delta\geq 3$ be an integer, $q\geq 1$ and $p<p_c(q,\Delta)$. Then, there exists a constant $\delta>0$ such that, as $n\rightarrow \infty$, the following holds with probability $1-o(1)$ over the choice of a uniformly random $\Delta$-regular graph $G=(V,E)$ with $n$ vertices.

 Let $e_1,\hdots, e_t$ be the edges of $G$ that do not belong to short cycles. For $j\in [t]$, let $e_j=\{u_j,v_j\}$ and $G_j$ be the subgraph $G\backslash\{e_1,\hdots,e_j\}$. Then, it holds that
\begin{equation}\label{eq:additererror}
\sum^t_{j=1}\varphi_{G_j}(u_j\conn v_j)\leq 1/n^{\delta}.
\end{equation}
\end{lemma}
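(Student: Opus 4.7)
The strategy is to bound each $\varphi_{G_j}(u_j\conn v_j)$ by a union bound over simple paths from $u_j$ to $v_j$ in $G_j$, convert to the measure $\varphi_G$ using Lemma~\ref{lem:monotonicity}, apply Lemma~\ref{lem:treespath} on $G$ to obtain an exponential decay in path length, and then swap the order of summation to turn $\sum_j\varphi_{G_j}(u_j\conn v_j)$ into a weighted cycle count that is controlled by Lemma~\ref{lem:numpaths}.

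First, let $K<1/(\Delta-1)$ and $\epsilon>0$ be the constants supplied by Lemma~\ref{lem:treespath}, set $W:=1/K>\Delta-1$, and choose $\ell_0>0$ small enough that $\ell_0\log n\leq \tfrac{1}{5}\log_{\Delta-1} n$ for all large $n$. By Lemma~\ref{lem:graphneighbor} (with this $\epsilon$ and $h$ large enough that Lemma~\ref{lem:treespath} applies) and Lemma~\ref{lem:numpaths} (with this $W$ and $\ell_0$), a union bound yields that, with probability $1-o(1)$ over $G\sim\Gc_{n,\Delta}$, every path of $\ell$ vertices in $G$ with $\ell_1\leq \ell\leq n^{9/10}$ has at least $(1-\epsilon)\ell$ isolated tree components in its $h$-graph-neighbourhood, and $\sum_{\ell\geq \ell_0\log n}\ell\, C_\ell/W^\ell\leq 1/(2n^\delta)$ for some $\delta>0$. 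Condition on this event.

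Now fix $j\in[t]$ and let $\mathcal{P}_j$ be the set of simple paths from $u_j$ to $v_j$ in $G_j$. The union bound gives
\[
\varphi_{G_j}(u_j\conn v_j)\leq \sum_{P\in \mathcal{P}_j}\varphi_{G_j}(P\text{ open}).
\]
Since $\varphi_{G_j}$ equals $\varphi_G$ conditioned on $e_1,\hdots,e_j$ being closed, and ``$P$ open'' is an increasing event on the edges of $G_j$, Lemma~\ref{lem:monotonicity} gives $\varphi_{G_j}(P\text{ open})\leq \varphi_G(P\text{ open})$, and then Lemma~\ref{lem:treespath} applied to $G$ (using the tree-likeness of the $h$-graph-neighbourhood of $P$) yields $\varphi_G(P\text{ open})\leq K^{|P|}$. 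Since $e_j$ is not in any short cycle of $G$, every $P\in \mathcal{P}_j$ has $|P|\geq \ell_0\log n$, and $P\cup\{e_j\}$ is a simple cycle of $G$ of length $|P|$ containing $e_j$. Summing over $j$ and swapping the order of summation,
\[
\sum_{j=1}^{t}\varphi_{G_j}(u_j\conn v_j)\leq \sum_{\substack{C\text{ cycle in }G\\ |C|\geq \ell_0\log n}}|C|\cdot K^{|C|}=\sum_{\ell\geq \ell_0\log n}\ell\, C_\ell\,K^\ell\leq \frac{1}{2n^\delta}\leq \frac{1}{n^\delta},
\]
where the first inequality uses that each such cycle $C$ is counted at most $|C|$ times (once per edge of $C$ appearing in $\{e_1,\hdots,e_t\}$), and the second-to-last inequality is the chosen good event from Lemma~\ref{lem:numpaths}.

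The main obstacle will be that Lemma~\ref{lem:graphneighbor} supplies the tree-like property only for paths of length up to $n^{9/10}$, while the sum from Lemma~\ref{lem:numpaths} ranges over all $\ell\leq n$. To close this gap one either extends Lemma~\ref{lem:graphneighbor} to all $\ell\leq n$ (its proof adapts with only minor modifications), or one bounds the contribution of cycles of length $>n^{9/10}$ separately, using that $\varphi_G$ is stochastically dominated by Bernoulli$(p)$ percolation for $q\geq 1$ together with the bound $\E[C_\ell]\leq (\Delta-1)^\ell/\ell$ already established in the proof of Lemma~\ref{lem:numpaths}, so that this tail contribution is also absorbed into the $1/n^\delta$ error.
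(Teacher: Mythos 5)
Your treatment of paths of length between $\ell_0\log n$ and $n^{9/10}$ is exactly the paper's argument (union bound over paths, transfer from $\varphi_{G_j}$ to $\varphi_G$ via Lemma~\ref{lem:monotonicity}, Lemma~\ref{lem:treespath} for the per-path bound, and the reindexing $\sum_j P_{\ell,j}\leq \ell C_\ell$). The genuine gap is the one you flagged yourself: the union bound $\varphi_{G_j}(u_j\conn v_j)\leq\sum_{P\in\mathcal{P}_j}\varphi_{G_j}(P\text{ open})$ forces you to control open paths of every length up to $n$, and neither of your proposed patches closes this. Extending Lemma~\ref{lem:graphneighbor} to all $\ell\leq n$ is not a minor modification — once $\ell\Delta^{h+1}>n$ the $h$-neighbourhoods of the path's vertices must overlap by pigeonhole, so most components cannot be isolated, and the Chernoff step in that proof needs $c_\ell=\epsilon n/(4\ell\Delta^{2h+4})$ to grow with $n$, which fails for $\ell$ linear in $n$. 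The Bernoulli$(p)$ domination also fails: it gives $\varphi_G(P\text{ open})\leq p^{\ell-1}$ while $\mathbf{E}[C_\ell]$ is of order $(\Delta-1)^{\ell}/\ell$, so the expected tail contribution behaves like $\sum_{\ell>n^{9/10}}\big((\Delta-1)p\big)^{\ell}$; since $p_c(q,\Delta)$ is only bounded by $q/(q+\Delta-2)$ and tends to $1$ as $q\to\infty$, one can have $(\Delta-1)p>1$, and the tail is not small.

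The paper avoids long paths entirely by truncating at a second logarithmic scale $L_2=\lfloor\ell_2\log n\rfloor$: if $u_j\conn v_j$ in $G_j$, then either there is an open $u_j$--$v_j$ path with between $L_0$ and $L_2$ vertices (covered by your argument), or there is an open path of exactly $L_2+1$ vertices starting at $u_j$, namely the initial segment of any longer open connection. The latter event is a union over at most $\Delta(\Delta-1)^{L_2-1}$ paths, each of logarithmic length (so Lemmas~\ref{lem:graphneighbor} and~\ref{lem:treespath} do apply to them) and hence each open with probability at most $K^{L_2+1}$; since $K(\Delta-1)<1$, choosing $\ell_2$ a large enough constant makes this at most $n^{-3}$, which is absorbed into the error after summing over the $O(n)$ edges $e_j$. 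Replacing your global union bound with this two-event decomposition repairs the proof; the rest of what you wrote matches the paper.
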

\begin{proof}
Let $K<1/(\Delta-1)$ and $\epsilon>0$ be the constants in Lemma~\ref{lem:treespath}, and let $\ell',h'>0$ be constants so that Lemma~\ref{lem:treespath} applies for all $\ell\geq \ell'$ and $h\geq h'$. Fix $h$ to  be any integer greater than $h'$. Let $\delta>0$ be the constant in Lemma~\ref{lem:numpaths} corresponding to $\ell_0:=1/(5\log (\Delta-1))$ and $W=1/K$ (note that $W>\Delta-1$). Let $\ell_1>0$ be the constant in Lemma~\ref{lem:graphneighbor} corresponding to our choice of $\epsilon$. Finally, let $\ell_2:=\max\{4\log(W/(\Delta-1)),2\ell_0\}$.

Taking a union bound over Lemmas~\ref{lem:numpaths} and~\ref{lem:graphneighbor}, we have that a uniformly random $\Delta$-regular graph $G=(V,E)$ with $n$ vertices satisfies the following with probability $1-o(1)$ over the choice of the graph:
\begin{enumerate}
\item \label{it:numpaths} $\displaystyle\sum_{\ell=L_0}^{L_2}\frac{\ell C_\ell}{W^{\ell}}\leq 1/(2n^\delta)$, where $L_0:=\lceil \ell_0 \log n\rceil=\lceil\tfrac{1}{5}\log_{\Delta-1}n\rceil$, $L_2:=\lfloor \ell_2 \log n\rfloor$ and $C_\ell$ is the number of cycles of length $\ell$ in $G$.
\item \label{it:lemgraphneighbor} every path  $P$ in $G$ with $\ell$ vertices where $\ell_1\leq \ell\leq L_2+1$  has an $h$-graph-neighbourhood with at least $(1-\epsilon)\ell$ isolated tree components.
\end{enumerate}
We will show that for any $\Delta$-regular graph $G$ which satisfies Items~\ref{it:numpaths} and~\ref{it:lemgraphneighbor}, it holds that
\begin{equation*}\tag{\ref{eq:additererror}}
\sum^t_{j=1}\varphi_{G_j}(u_j\conn v_j)\leq 1/n^{\delta},
\end{equation*}
where $e_1=\{u_1,v_1\},\hdots, e_t=\{u_t,v_t\}$ are  the edges of $G$ that do not belong to short cycles (i.e., cycles of length $\leq \ell_0\log n$),  $G_j$ is the subgraph $G\backslash\{e_1,\hdots,e_j\}$ and $\varphi_{G_j}$ is the RC distribution on $G_j$ with parameters $p,q$. Decreasing the value of $\delta$ does not affect the validity of Item~\ref{it:numpaths}, and hence we will assume that $\delta\in (0,1)$. 

For $j\in [t]$, consider the edge $e_j=\{u_j,v_j\}$ and let $P_{\ell,j}$ denote the number of paths with $\ell$ vertices in $G$ whose endpoints are $u_j$ and $v_j$. Using the fact that $G$ satisfies Item~\ref{it:lemgraphneighbor}, we will show shortly that, for all $j\in [t]$, it holds that
\begin{equation}\label{eq:amortise}
\varphi_{G_j}(u_j\conn v_j)\leq \frac{1}{n^3}+\sum^{L_2}_{\ell=L_0}\frac{P_{\ell,j}}{W^{\ell}}.
\end{equation}
Let us assume \eqref{eq:amortise} for now, and conclude the proof of \eqref{eq:additererror}. Summing \eqref{eq:amortise} over $j\in [t]$ (and using the trivial bound $t\leq |E|\leq \Delta n/2$), we obtain that
\begin{equation}\label{eq:v4tyyhu43737}
\sum^{t}_{j=1}\varphi_{G_j}(u_j\conn v_j)\leq \frac{\Delta}{n^2}+\sum^{L_2}_{\ell=L_0}\frac{\sum^t_{j=1}P_{\ell,j}}{W^{\ell}}\leq\frac{\Delta}{n^2}+\sum^{L_2}_{\ell=L_0}\frac{\ell C_{\ell}}{W^{\ell}}, 
\end{equation}
where in the last inequality we used that $\sum^t_{j=1}P_{\ell,j}\leq \ell C_\ell$ which follows from the observation that every path with $\ell$ vertices connecting the endpoints of an edge $\{u_j,v_j\}$ maps to a cycle of length $\ell$ (by adding the edge $\{u_j,v_j\}$) and each cycle of length $\ell$ can potentially arise at most $\ell$ times under this mapping. Using \eqref{eq:v4tyyhu43737} and the fact that $G$ satisfies Item~\ref{it:numpaths}, we obtain \eqref{eq:additererror}, as wanted.

 To finish the proof, it only remains to prove \eqref{eq:amortise}. Since $G$ satisfies Item~\ref{it:lemgraphneighbor} and $W=1/K$, by Lemma~\ref{lem:treespath}, we have that for any path $P$ of length $\ell\in [L_0,L_2+1]$ connecting $u_j,v_j$, it holds that
\begin{equation}\label{eq:GpathP}
\varphi_G(\mbox{$P$ is open})\leq 1/W^{\ell} \mbox{ for any path $P$ in $G$ with $\ell$ vertices, $\ell\in [L_0,L_2+1]$}.
\end{equation}
Since $G_j$ is a subgraph of $G$,  any path in $G_j$ that connects $u_j$ and $v_j$ is also present in $G$. Moreover, we have that $\varphi_{G_j}$ is obtained by conditioning some edges of $G$ to be closed (namely, $e_1, \hdots,e_j$). Therefore, by Lemma~\ref{lem:monotonicity}, we conclude from \eqref{eq:GpathP} that 
\begin{equation}\label{eq:GjpathP}
\varphi_{G_j}(\mbox{path $P$ is open})\leq 1/W^{\ell}, \mbox{ for any path $P$ in $G_j$ with $\ell$ vertices, $\ell\in [L_0,L_2+1]$}.
\end{equation}
Since the edge $\{u_j,v_j\}$ does not belong to a short cycle, we have that any path $P$ in $G_j$ connecting $u_j$ and $v_j$ has length at least $\ell_0\log n$. We can therefore bound the probability of an open path between $u_j$ and $v_j$ by 
\begin{equation}\label{eq:wwferfrgyy3}
\varphi_{G_j}(u_j\conn v_j)\leq \varphi_{G_j}(\mathcal{E}_j)+\varphi_{G_j}(\mathcal{F}_j),
\end{equation}
where $\mathcal{E}_j$ is the event that there exists an open path $P$ with  $\ell$ vertices with $L_0\leq\ell\leq L_2$  connecting $u_j$ and $v_j$, whereas $\mathcal{F}_j$ is the event that there exists an open path $P$ with $\ell=L_2+1$ vertices starting from  $u_j$ (the other endpoint can be $v_j$ or any other vertex of the graph). Using \eqref{eq:GjpathP}, we have by a union bound over paths that
\begin{equation}\label{eq:wdrw4f45}
\varphi_{G_j}(\mathcal{E}_j)\leq \sum^{L_2}_{\ell=L_0}\frac{P_{\ell,j}}{W^{\ell}}, \quad \varphi_{G_j}(\mathcal{F}_j)\leq \frac{\Delta(\Delta-1)^{\ell_2\log n}}{W^{\ell_2\log n}}\leq 1/n^{3},
\end{equation}
where in the bound for $\varphi_{G_j}(\mathcal{E}_j)$ we used that there are $P_{\ell,j}$ paths with  $\ell$ vertices connecting $u_j$ and $v_j$, while in the bound for  $\varphi_{G_j}(\mathcal{F}_j)$  we used that there are at most $\Delta(\Delta-1)^{L_2-1}$ paths in $G$ with $L_2+1$  vertices starting with $u_j$ (since $G$ has max degree $\Delta$), the trivial inequalities $\ell_2\log n-1\leq L_2\leq \ell_2\log n$,  and the choice of $\ell_2$ which guarantees that $\ell_2\geq 4\log(W/(\Delta-1))$. 

Combining \eqref{eq:wwferfrgyy3} and \eqref{eq:wdrw4f45} yields \eqref{eq:amortise} (since $\delta\in(0,1)$), thus completing the proof of Lemma~\ref{lem:error}.
\end{proof}

\subsection{Combining the pieces --- Proof of Theorem~\ref{thm:main2}}\label{sec:vewe2d232}
We are now able to prove the following theorem, which details the performance of the algorithm $\SampleRC$ on random $\Delta$-regular graphs and yields as an immediate corollary Theorem~\ref{thm:main2}.
\begin{theorem}\label{thm:main}
Let $\Delta\geq 3$, $q\geq 1$ and $p<p_c(q,\Delta)$. Then, there exists a constant $\delta>0$ such that, as $n\rightarrow \infty$, the following holds   with probability $1-o(1)$ over the choice of a random $\Delta$-regular graph $G=(V,E)$ with $n$ vertices.

The output of Algorithm $\SampleRC(G)\,$ (cf. Figure~\ref{alg:perc}) is a set $S\subseteq E$ whose distribution $\nu_S$ is within total variation distance $O(1/n^{\delta})$ from the RC distribution $\varphi_G$ with parameters $p,q$, i.e.,
\[\norm{\nu_S-\varphi_{G}}_{\TV}=O(1/n^{\delta}).\]
\end{theorem}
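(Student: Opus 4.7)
The plan is to prove Theorem~\ref{thm:main} via a hybrid argument that chains together Lemmas~\ref{lem:edgewise} and~\ref{lem:error}. First I would set up the high-probability events on $G\sim\Gc_{n,\Delta}$: Lemma~\ref{lem:disjointcycles} gives that $G'=(V,E')$ is a disjoint union of short cycles (plus isolated vertices) with probability $1-o(1)$, so $\SampleRC$ does not output Fail and the initial sample $S'$ is distributed exactly as $\varphi_{G'}$; Lemma~\ref{lem:error} supplies a constant $\delta>0$ such that, with probability $1-o(1)$, $\sum_{j=1}^{t}\varphi_{G_j}(u_j\conn v_j)\leq 1/n^{\delta}$, where $e_1=\{u_1,v_1\},\ldots,e_t=\{u_t,v_t\}$ enumerate the non-short-cycle edges and $G_j=G\setminus\{e_1,\ldots,e_j\}$. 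I would condition on the intersection of these events.

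Next I would reinterpret the second phase of $\SampleRC$: since each edge in $E\setminus E'$ is added independently with probability $p/(p+(1-p)q)$, the output is distributionally equivalent to performing these additions sequentially in any fixed order. I would process them in the reverse order $e_t,e_{t-1},\ldots,e_1$, so that immediately before processing $e_{t-k+1}$ the current host graph is exactly $G_{t-k+1}$, matching the graphs in Lemma~\ref{lem:error}. Let $\nu_k$ denote the algorithm's (partial) distribution after $k$ such reverse-order additions and set $\varphi^{(k)}:=\varphi_{G_{t-k}}$; then $\nu_0=\varphi^{(0)}=\varphi_{G'}$ and the target to bound is $\|\nu_t-\varphi_G\|_{\TV}$.

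The heart of the argument is a single-edge hybrid step. Let $T_j$ denote the stochastic map that, using independent randomness, adds $e_j$ with probability $p/(p+(1-p)q)$. Since $T_j$ uses fresh external coins, coupling that randomness identically on two inputs shows it is a total-variation contraction, so $\|T_j\nu_{k-1}-T_j\varphi^{(k-1)}\|_{\TV}\leq\|\nu_{k-1}-\varphi^{(k-1)}\|_{\TV}$. On the good event each individual probability $\varphi_{G_j}(u_j\conn v_j)$ is at most $1/n^{\delta}<1/q$ for large $n$, so Lemma~\ref{lem:edgewise} applies with $\epsilon=\varphi_{G_{t-k+1}}(u_{t-k+1}\conn v_{t-k+1})$ and yields $\|T_{t-k+1}\varphi^{(k-1)}-\varphi^{(k)}\|_{\TV}\leq 2q\,\varphi_{G_{t-k+1}}(u_{t-k+1}\conn v_{t-k+1})$. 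The triangle inequality then gives the recurrence
\[
\|\nu_k-\varphi^{(k)}\|_{\TV}\leq \|\nu_{k-1}-\varphi^{(k-1)}\|_{\TV}+2q\,\varphi_{G_{t-k+1}}(u_{t-k+1}\conn v_{t-k+1}),
\]
which telescopes (using $\nu_0=\varphi^{(0)}$) into
\[
\|\nu_t-\varphi_G\|_{\TV}\leq 2q\sum_{j=1}^{t}\varphi_{G_j}(u_j\conn v_j)\leq 2q/n^{\delta}=O(1/n^{\delta}).
\]

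The only subtle choice in carrying this out is the reverse ordering, made precisely so that the per-step hypothesis of Lemma~\ref{lem:edgewise} (a connection probability in the current host graph) aligns with the aggregate bound in Lemma~\ref{lem:error}. All the analytic content of the theorem has already been absorbed into Lemma~\ref{lem:error}, which in turn leverages the tree-uniqueness estimate of Lemma~\ref{lem:treespath}, so I do not expect any further technical obstacle at this final step beyond bookkeeping in the hybrid/telescoping argument.
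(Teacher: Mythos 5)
Your proposal is correct and follows essentially the same route as the paper's proof: the same two high-probability events, the same reverse-order edge-by-edge hybrid, the same use of Lemma~\ref{lem:edgewise} for the per-step error and of the independent-coin contraction to propagate the accumulated error, telescoping to $2q\sum_j\varphi_{G_j}(u_j\conn v_j)\leq 2q/n^{\delta}$. No substantive differences.
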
 
\begin{proof}
By Lemmas~\ref{lem:disjointcycles} and~\ref{lem:error}, we have by a union bound that a uniformly random $\Delta$-regular graph $G$ with $n$ vertices satisfies the following with probability $1-o(1)$:
\begin{enumerate}
\item \label{it:disjoint} any two distinct cycles of length $\leq \frac{1}{5} \log_{\Delta-1} n$ are disjoint,
\item \label{it:error} $\sum^t_{j=1}\varphi_{G_j}(u_j\conn v_j)\leq 1/n^{\delta}$, where $e_1=\{u_1,v_1\},\hdots, e_t=\{u_t,v_t\}$ are the edges of $G$ that do \emph{not} belong to short cycles, and $G_j$ is the subgraph $G\backslash\{e_1,\hdots,e_j\}$ (for $j\in [t]$).
\end{enumerate}
We will show that for any graph $G=(V,E)$ that satisfies Items~\ref{it:disjoint} and~\ref{it:error}, the output of the algorithm \mbox{\SampleRC} is a random set $S\subseteq E$ whose distribution $\nu_S$ is within total variation distance $1/n^{\delta}$ from the RC distribution $\varphi_G$ with parameters $p,q$, therefore proving the result.

Let $G_t=(V,E_t),\hdots, G_1=(V,E_1)$ be the sequence of subgraphs as in Item~\ref{it:error} above and, for convenience, set $E_0=E$ and let  $G_0=(V,E_0)$ (so that $G_0=G$). Note that $G_t=G'$ (where $G'$ is the graph considered in the algorithm \mbox{\SampleRC}), i.e., $G_t$ is the subgraph of $G$ where only the edges that belong to short cycles appear. By Item~\ref{it:disjoint}, we have that $G_t$ consists of isolated vertices and disjoint cycles and hence we can conclude that the output of the algorithm \mbox{\SampleRC} is not \textsc{Fail}, i.e., on input a graph $G$ satifying Items~\ref{it:disjoint} and~\ref{it:error}, \mbox{\SampleRC} outputs a random set $S\subseteq E$. It therefore remains to show that the distribution $\nu_S$ of $S$ satisfies
\begin{equation}\label{eq:rv4ttbtbt45}
\norm{\nu_S-\varphi_G}_{\TV}=O( 1/n^{\delta}).
\end{equation}
For $j=t,t-1,\hdots, 0$, let $S_j=S\cap E_j$ and let $\nu_{S_j}$ denote the distribution of $S_j$. Note that $S_0=S$ and $S_{t}=S'$ (where $S'$ is the subset of edges considered in the algorithm \mbox{\SampleRC}). We have that
\begin{equation}\label{eq:startingpoint}
\norm{\nu_S-\varphi_G}_{\TV}=\norm{\nu_{S_0}-\varphi_{G_0}}_{\TV}, \quad \norm{\nu_{S_t}-\varphi_{G_t}}_{\TV}=\norm{\nu_{S'}-\varphi_{G'}}_{\TV}=0.
\end{equation}  
 
For $j\in [t]$, we have that $S_{j-1}$ is obtained from $S_j$ by adding the edge $e_j$ with probability $p/(q+(1-p)q)$. Let $\hat{S}_{j-1}\subseteq E_{j-1}$ be a subset of edges obtained by sampling an RC configuration from $G_j$ (according to $\varphi_{G_j}$) and  adding the edge $e_j$ with probability $p/(q+(1-p)q)$; denote by $\nu_{\hat{S}_{j-1}}$ the distribution of $\hat{S}_{j-1}$.  By Lemma~\ref{lem:edgewise} we have that\footnote{Note that, to apply Lemma~\ref{lem:edgewise}, we need to ensure that $\varphi_{G_j}(u_j\conn v_j)<1/q$, which is guaranteed by Item~\ref{it:error}.}
\[\big\lVert\nu_{\hat{S}_{j-1}}-\varphi_{G_{j-1}}\big\rVert_{\TV}\leq 2q\,\varphi_{G_j}(u_j\conn v_j).\]
Moreover, since in each of $S_{j-1}$ and $\hat{S}_{j-1}$ the edge $e_j$ appears independently with the same probability $p/(q+(1-p)q)$, we have that  
\[\big\lVert\nu_{S_{j-1}}-\nu_{\hat{S}_{j-1}}\big\rVert_{\TV}=\big\lVert\nu_{S_{j}}-\varphi_{G_{j}}\big\rVert_{\TV}.\] 
Using the triangle inequality and induction, we obtain that for all $j=0,1,\hdots,t$ it holds that
\[\norm{\nu_{S_j}-\varphi_{G_j}}_{\TV}\leq \norm{\nu_{S_t}-\varphi_{G_t}}_{\TV}+2q\sum^{t-1}_{j'=j} \varphi_{G_{j'}}(u_{j'}\conn v_{j'})\]
Writing this out for $j=0$ and using \eqref{eq:startingpoint}, we obtain that
\[\norm{\nu_S-\varphi_G}_{\TV}\leq 2q\sum^{t-1}_{j=0} \varphi_{G_j}(u_j\conn v_j)\leq 2q/n^{\delta},\]
where in the last inequality we used that the graph $G$ satisfies Item~\ref{it:error}. This finishes the proof of \eqref{eq:rv4ttbtbt45} and therefore the proof of Theorem~\ref{thm:main} as well.
\end{proof}

\section{Algorithm for the antiferromagnetic Potts model}\label{sec:tb34bb6}

In this section, we give the details of our sampling algorithm for the antiferromagnetic Potts model (outlined in Section~\ref{sec:proofapproach}). The section is organised as follows. First, in Section~\ref{sec:vvverv3242}, we formalise the connection between the Potts model on bichromatic classes and the Ising model.  Then, in Section~\ref{sec:erv3v3444}, we state the sampling algorithm for the Ising model on graphs with small average growth that we are going to use for resampling bichromatic classes in the Potts model; moreover, we state certain correlation decay properties for the Ising model that will be relevant for analysing the error of our Potts  sampler. In Section~\ref{sec:vr55g45g45g45}, we state the key lemma that allows us to bound the average growth of bichromatic classes in the Potts model on random regular  graphs. In Section~\ref{sec:t5g45g45g5}, we show an ``idealised'' subroutine that  updates a Potts configuration when we add a new edge $\{u,v\}$; the subroutine works by resampling an appropriately chosen bichromatic class and it is ``idealised'' in the sense that it assumes that certain steps can be carried out efficiently. In Section~\ref{sec:resamplereal}, we modify the subroutine to make it computationally efficient by considering the average growth of bichromatic classes that get resampled; there, we give the complete description of the actual resampling subroutine used in our Potts sampler.  With these pieces in place, we are in position to complete the description and analysis of the Potts sampler in Section~\ref{sec:thyh55h6}.

\subsection{Connection between Potts on bichromatic classes and the Ising model}\label{sec:vvverv3242}

In this section, we describe  the connection between the Potts model on bichromatic classes and the Ising model. Recall that the Ising model is the special case $q=2$ of the Potts model; to distinguish between the models, we will use $\pi_G$ to denote the Ising distribution on $G$ with parameter $B$. Sometimes, we will need to replace the binary set of states $\{1,2\}$ in the Ising model by other binary sets to facilitate the arguments; we use $\pi^{c_1,c_2}_G$ to denote the Ising distribution with binary set of states $\{c_1,c_2\}$ (we will have that $c_1,c_2\in [q]$).

For a configuration $\sigma:V\rightarrow [q]$, we will denote by $\sigma_U$ the restriction of $\sigma$ to the set $U$. Our sampling algorithm is based on the following simple observation.

\begin{observation}\label{obs:wrvtvte}
Let $q\geq 3$ and $B>0$. Let $G=(V,E)$ be a graph, $U$ be a subset of $V$and $c,c'$ be distinct colours in $[q]$. Then, for any configuration $\eta: U\rightarrow \{c,c'\}$, it holds that
\[\mu_G\big(\sigma_U=\eta\mid \sigma^{-1}(c,c')=U\big)=\pi^{c,c'}_{G[U]}(\eta),\]
i.e., conditioned on $U$ being the $(c,c')$-colour-class in the Potts distribution $\mu_G$, the marginal distribution on $U$ is the Ising distribution $\pi_{G[U]}$ (with set of states $\{c,c'\}$). 
\end{observation}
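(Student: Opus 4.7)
The plan is to prove this by a direct computation, factoring the Potts weight into a part that depends only on $\sigma_U$ and a part that depends only on $\sigma_{V\setminus U}$, so that the latter cancels in the conditional probability.

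First I would unpack the conditioning event $\{\sigma^{-1}(c,c') = U\}$: this says that every vertex in $U$ is coloured $c$ or $c'$, and every vertex in $V\setminus U$ is coloured from $[q]\setminus\{c,c'\}$. The key structural observation is that any edge of $G$ with one endpoint in $U$ and the other in $V\setminus U$ is automatically bichromatic under any $\sigma$ satisfying the conditioning event, since the two endpoints take colours from disjoint sets. Consequently, if we split $\sigma = (\eta,\tau)$ with $\eta = \sigma_U$ and $\tau = \sigma_{V\setminus U}$, the number of monochromatic edges decomposes as $m(\sigma) = m_U(\eta) + m_{V\setminus U}(\tau)$, where $m_U(\eta)$ (resp.\ $m_{V\setminus U}(\tau)$) counts monochromatic edges of $G[U]$ (resp.\ $G[V\setminus U]$). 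Hence the Potts weight factors:
\[
w_G(\sigma) \;=\; B^{m_U(\eta)}\cdot B^{m_{V\setminus U}(\tau)}.
\]

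Next, I would write the conditional probability as a ratio and exploit this factorisation. For any $\eta:U\to\{c,c'\}$,
\[
\mu_G\bigl(\sigma_U=\eta \;\big|\; \sigma^{-1}(c,c')=U\bigr)
\;=\;
\frac{B^{m_U(\eta)}\displaystyle\sum_{\tau:V\setminus U\to [q]\setminus\{c,c'\}} B^{m_{V\setminus U}(\tau)}}{\displaystyle\sum_{\eta':U\to\{c,c'\}} B^{m_U(\eta')}\sum_{\tau:V\setminus U\to [q]\setminus\{c,c'\}} B^{m_{V\setminus U}(\tau)}}.
\]
The $\tau$-sum is a common factor in the numerator and denominator and cancels, leaving
\[
\mu_G\bigl(\sigma_U=\eta \;\big|\; \sigma^{-1}(c,c')=U\bigr)
\;=\;\frac{B^{m_U(\eta)}}{\sum_{\eta':U\to\{c,c'\}} B^{m_U(\eta')}},
\]
which is exactly $\pi^{c,c'}_{G[U]}(\eta)$ by definition of the Ising distribution with state space $\{c,c'\}$ on $G[U]$.

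There is no real obstacle here; the only thing to be slightly careful about is justifying the factorisation cleanly, i.e.\ that the cross edges between $U$ and $V\setminus U$ contribute nothing to $m(\sigma)$ under the conditioning, and that the sum over $\tau$ is nonzero (which it is, since at least one assignment $\tau$ exists whenever $q\geq 3$, so the denominator cancellation is legitimate). Everything else is routine.
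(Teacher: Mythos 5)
Your computation is correct and is precisely the argument the paper leaves implicit (the statement is presented as an Observation without proof): under the conditioning event the colour sets on $U$ and $V\setminus U$ are disjoint, so cross edges are never monochromatic, the weight factorises, and the sum over configurations of $V\setminus U$ cancels. Your side remarks — that the denominator sum is nonzero since $q\geq 3$ and $B>0$ — are exactly the right points to check, so nothing is missing.
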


The following definition will be notationally convenient.
\begin{definition}\label{def:piuv}
Let $G$ be a graph, $u,v$ be vertices in $G$ and $c,c'$ be distinct colours in $[q]$. We write $\pi^{c,c'}_{G,u,v}$ to denote the Ising distribution on $G$ (with set of states $\{c,c'\}$)  conditioned on $u$ taking the state $c$ and $v$ the state $c'$.
\end{definition}

\subsection{Sampling the Ising model on graphs with small average growth}\label{sec:erv3v3444}
Recall from Section~\ref{sec:proofapproach} that our algorithm for sampling the antiferromagnetic Potts  model with parameter $B$ will use as a subroutine a sampling algorithm for the Ising model with parameter $B$ to recolour bichromatic classes.  In general, these classes may consist of large bichromatic components (with a linear number of vertices), so to carry out this subroutine efficiently, we need to use an approximate sampling algorithm; our leverage point will be that we can bound the average growth of  bichromatic components, cf. Definition~\ref{def:branchingfactor}. Adapting results of \cite{MosselSly}, we show the following in Section~\ref{sec:sampleising}.

\newcommand{\statethmsampleising}{Let $B\in (0,1)$ and $b>0$ be constants such that $b\frac{1-B}{1+B}<1$,  and let $\Delta\geq 3$ be an integer. Then, there exists  $M_0>0$ such that the following holds for all $M>M_0$.

There is a polynomial-time algorithm that, on input an $n$-vertex graph $G$ with maximum degree at most $\Delta$ and  average growth $b$ up to depth $L=\left\lceil M\log n\right\rceil$, outputs  a configuration $\tau:V\rightarrow\{1,2\}$ whose distribution $\nu_\tau$ is within total variation distance $1/n^\cons$ from the Ising distribution on $G$ with parameter $B$, i.e., 
\[\big\lVert\nu_\tau-\pi_G\big\rVert_{\TV}\leq 1/n^{\cons}.\]
Moreover, the algorithm, when given as additional input two vertices $u$ and $v$ in $G$, outputs a configuration $\tau:V\rightarrow\{1,2\}$  such that $\tau_u=1$ and $\tau_v=2$, and whose distribution $\nu_\tau$ satisfies
\[\big\lVert\nu_\tau-\pi^{1,2}_{G,u,v}(\cdot)\big\rVert_{\TV}\leq 1/n^{\cons},\]
where $\pi^{1,2}_{G,u,v}$ is the Ising distribution on $G$  conditioned on $u$ having state $1$ and $v$ having state $2$.}
\begin{theorem}\label{thm:sampleising}
\statethmsampleising
\end{theorem}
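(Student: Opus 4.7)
The plan for proving Theorem~\ref{thm:sampleising} is to implement the standard correlation-decay sampling framework adapted to the ``average growth'' hypothesis, so that the per-edge contraction of the two-state recursion combines with a path-counting bound coming from Definition~\ref{def:branchingfactor}. Concretely, for a target vertex $w$, I would work on the self-avoiding walk tree $T_{\mathrm{SAW}}(G,w)$ rooted at $w$; each self-avoiding walk in $G$ from $w$ corresponds to a branch of this tree, and standard results (in the spirit of Weitz, and formalised for the antiferromagnetic Ising model by Mossel--Sly \cite{MosselSly}) show that $\pi_G(\sigma_w=1)$ can be computed via an Ising tree recursion whose contraction along every edge of $T_{\mathrm{SAW}}(G,w)$ is at most $\frac{1-B}{1+B}$ in the appropriate log-ratio metric.

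The first step would be to truncate the recursion at depth $L = \lceil M \log n \rceil$, replacing all messages at depth $L$ by arbitrary values in $[0,1]$. The error at the root is then bounded by a sum over length-$L$ self-avoiding walks from $w$ of $\left(\frac{1-B}{1+B}\right)^{L}$; the number of such walks is at most the number of length-$L$ paths from $w$ in $G$, which by the average-growth hypothesis is at most $b^{L}$. Hence the error in the computed marginal is at most $\left(b \cdot \frac{1-B}{1+B}\right)^{L}$, and since $b \cdot \frac{1-B}{1+B} < 1$, choosing $M_0$ sufficiently large makes this bound at most $1/n^{\cons+1}$ for every $M > M_0$. The cost of the truncated computation is $O(\Delta^{L})$, which is polynomial in $n$ because $L = O(\log n)$.

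Next, I would convert approximate marginals into an approximate sampler by the usual sequential procedure: order $V = \{v_1, \ldots, v_n\}$ arbitrarily, and for each $i$ sample $\tau_{v_i}$ according to the approximate conditional $\pi_G(\sigma_{v_i} = \cdot \mid \sigma_{v_j} = \tau_{v_j},\, j < i)$, computed by running the same truncated recursion with the already-decided spins pinned as boundary conditions. Pinning boundary spins only eliminates branches of $T_{\mathrm{SAW}}$ and therefore preserves both the $b^L$ path-counting bound and the per-edge contraction, so a chain-rule/coupling argument gives total variation error at most $n \cdot n^{-(\cons+1)} \leq n^{-\cons}$ from $\pi_G$. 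For the conditional version, I would initialise the sequential procedure by pinning $\tau_u = 1$ and $\tau_v = 2$ and then run the same algorithm on $V \setminus \{u,v\}$; this produces a sample within $n^{-\cons}$ of $\pi^{1,2}_{G,u,v}$ by the same chain-rule bound.

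The main obstacle will be justifying the per-edge contraction $\frac{1-B}{1+B}$ for the SAW-tree Ising recursion in a way that applies with the arbitrary pinned boundary conditions arising from the sequential procedure, and making sure that the contraction enters exactly as a product over edges of the walks, so that multiplying by the walk count $b^L$ yields the desired marginal error. This is precisely where the Mossel--Sly recursion analysis plugs in, and where the threshold $M_0$ is determined quantitatively in terms of $B$ and $b$ (roughly, $M_0$ must exceed $(\cons+1)/\log\bigl(\tfrac{1+B}{b(1-B)}\bigr)$ together with a constant absorbing the $O(\log \Delta)$ overhead per level).
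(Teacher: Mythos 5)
Your proposal follows essentially the same route as the paper: build $T_{\rm SAW}(G,w)$, truncate at depth $L=\lceil M\log n\rceil$, bound the effect of the truncation boundary by the Mossel--Sly per-edge contraction $\frac{1-B}{1+B}$ summed over the at most $b^L$ depth-$L$ walks (this is exactly the paper's Lemma~\ref{lem:isingmarginal}, which invokes Lemmas~\ref{lem:sawtree} and~\ref{lem:strongspatialmixing}), and then sample sequentially from the approximate conditionals, handling the conditioned version by pinning $u,v$ first. The only substantive difference is in how the per-step marginal error is aggregated: you use the standard coupling/chain-rule bound, which turns an additive error of $n^{-(\cons+1)}$ per conditional directly into a total variation error of at most $n\cdot n^{-(\cons+1)}$, whereas the paper converts the absolute error into a \emph{multiplicative} one using the uniform lower bound $\pi_G(\sigma_u=s\mid\cdot)\geq \frac{B^\Delta}{1+B^\Delta}$ of Lemma~\ref{lem:f4f35f63g} and multiplies relative errors over the $n$ steps. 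Both aggregations are valid; yours is slightly more economical since it dispenses with the marginal lower bound, while the paper's yields the stronger pointwise guarantee $|\nu_\tau(\eta)-\pi_G(\eta)|\leq \frac{2}{n^{10}}\pi_G(\eta)$ for every configuration $\eta$.
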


In addition, we will use the following spatial mixing result to analyse the accuracy of our algorithm for the antiferromagnetic Potts model. The proof is given in Section~\ref{sec:corrdecayIsing}.
\newcommand{\statelemcorrdecayIsing}{Let $B\in (0,1)$ and $b>0$ be constants such that $b\frac{1-B}{1+B}<1$. Then, there exists $M_0'>0$ such that the following holds for all $M>M_0'$. 

Let $G$ be an $n$-vertex graph with average growth $b$ up to depth $L=\left\lceil M\log n\right\rceil$,  and let $u,v$ be distinct vertices in $G$. Then
\[\Big|\pi_G(\sigma_u=1\mid \sigma_v=1) -\pi_G(\sigma_u=1\mid \sigma_v=2)\Big|\leq \frac{1}{n^{\cons}}+\sum^{L}_{\ell=1}P_\ell(G,u,v)\Big(\frac{1-B}{1+B}\Big)^{\ell}\]
where $P_\ell(G,u,v)$ is the number of paths with $\ell$ vertices in $G$ that connect $u$ and $v$.}
\begin{lemma}\label{lem:corrdecayIsing}
\statelemcorrdecayIsing
\end{lemma}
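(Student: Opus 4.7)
The plan is to invoke the self-avoiding walk (SAW) tree construction of Weitz together with the tree-recursion contraction estimates for the antiferromagnetic Ising model developed in \cite{MosselSly}. I would root the SAW tree $T$ at $u$ and truncate it so that only SAWs of at most $L$ vertices from $u$ are retained, imposing the Weitz boundary condition at cycle-closing internal leaves. The standard correspondence then identifies $\pi_G(\sigma_u=\cdot)$ with the root marginal of an Ising model on $T$, and the event $\{\sigma_v=c\}$ in $G$ translates into pinning every internal copy of $v$ in $T$ (one per SAW from $u$ ending at $v$) to the state $c$.

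For the truncation error, the truncation-induced leaves of $T$ are in bijection with paths in $G$ of $L$ vertices starting at $u$, so by the average-growth assumption there are at most $b^L$ of them. Standard contraction of the antiferromagnetic Ising tree recursion implies that altering the boundary condition at any such leaf changes the root marginal by at most $\bigl(\tfrac{1-B}{1+B}\bigr)^{L-1}$, so summing over the leaves gives a truncation error of $O\bigl((b\tfrac{1-B}{1+B})^{L}\bigr)$. Since $b\tfrac{1-B}{1+B}<1$ by hypothesis, this is at most $1/n^{\cons}$ provided $M$ is chosen larger than some constant $M_0'$ depending on $B$ and $b$.

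For the main term, an internal copy of $v$ at depth $\ell-1$ in $T$ corresponds to a SAW of $\ell$ vertices from $u$ to $v$, and the total number of such copies is exactly $P_\ell(G,u,v)$. Flipping the pinned value of a single copy at depth $\ell-1$ from $1$ to $2$ shifts the root marginal by at most $\bigl(\tfrac{1-B}{1+B}\bigr)^{\ell}$, by the contraction estimate applied along the path from that copy up to the root. A hybrid argument that flips the pinned copies one at a time, combined with the triangle inequality, then delivers
\[\bigl|\pi_G(\sigma_u=1\mid\sigma_v=1)-\pi_G(\sigma_u=1\mid\sigma_v=2)\bigr|\leq \frac{1}{n^{\cons}}+\sum_{\ell=1}^{L}P_\ell(G,u,v)\Bigl(\frac{1-B}{1+B}\Bigr)^{\ell},\]
as desired.

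The main obstacle is verifying that the contraction factor $\tfrac{1-B}{1+B}$ really applies uniformly at every internal copy of $v$, since any such copy sits in a subtree that can carry arbitrary Weitz-boundary pinnings as well as pinnings from other copies of $v$ below it. This is precisely the content of the tree-recursion estimates proved in \cite{MosselSly}, which are uniform over external fields and boundary conditions; the remaining work is bookkeeping to match SAWs of $\ell$ vertices with depth-$(\ell-1)$ copies and to aggregate the per-copy contributions correctly through the hybrid argument.
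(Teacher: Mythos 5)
Your proposal is correct and follows essentially the same route as the paper: root the Weitz self-avoiding-walk tree at $u$, truncate at depth $L$ and charge the truncation error to the at most $b^{L}$ boundary vertices via the contraction factor $\tfrac{1-B}{1+B}$, and bound the main term by summing the per-copy contributions of the pinned copies of $v$. The paper packages your hybrid/telescoping step as a single strong-spatial-mixing lemma imported from Mossel--Sly (transferred from the ferromagnetic to the antiferromagnetic model by flipping one side of the bipartition of the tree), and it shares the same harmless off-by-one between path length and tree depth in matching $P_\ell$ with the exponent $\ell$, so the two arguments coincide.
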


 We will also need the following crude bound.
\begin{lemma}\label{lem:f4f35f63g}
Let $B\in (0,1)$ and $\Delta\geq 3$ be an integer. Suppose that $G$ is a graph of maximum degree at most $\Delta$ and let $u$ be a vertex and $\Lambda$ be a set of vertices in $G$ such that $u\not\in \Lambda$. Then, for every configuration $\tau:\Lambda\to\{1,2\}$ and $s\in\{1,2\}$
\[\frac{B^{\Delta}}{1+B^\Delta}\leq \pi_G(\sigma_u=s\mid \sigma_\Lambda=\tau).\]
\end{lemma}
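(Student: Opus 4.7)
The plan is to compare the Ising weights of extensions of $\tau$ to $V\setminus\Lambda$ that assign $u$ the state $s$ versus the opposite state $\bar{s}$, using the natural involution that flips only the spin at $u$. Since $u\notin\Lambda$, this involution preserves the boundary condition $\sigma_\Lambda=\tau$, so we get a bijection between extensions assigning $u$ the value $s$ and extensions assigning $u$ the value $\bar{s}$.

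Concretely, I would write
\[
\pi_G(\sigma_u=s\mid \sigma_\Lambda=\tau)=\frac{A_s}{A_s+A_{\bar s}},\qquad A_t:=\sum_{\eta:\,\eta_u=t,\,\eta_\Lambda=\tau} B^{m(\eta)}\quad (t\in\{1,2\}),
\]
and then, for each $\eta$ contributing to $A_{\bar s}$, let $\phi(\eta)$ be obtained from $\eta$ by changing the spin at $u$ from $\bar s$ to $s$. The only monochromatic edges that change are those incident to $u$: those going to neighbours of colour $s$ become newly monochromatic, and those going to neighbours of colour $\bar s$ cease to be monochromatic. Writing $n_t(\eta)$ for the number of neighbours of $u$ with state $t$ under $\eta$, this gives
\[
m(\phi(\eta))=m(\eta)+n_s(\eta)-n_{\bar s}(\eta).
\]

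Now $n_s(\eta)-n_{\bar s}(\eta)\leq \deg(u)\leq \Delta$, and $B\in(0,1)$, so $B^{n_s(\eta)-n_{\bar s}(\eta)}\geq B^{\Delta}$. Therefore $B^{m(\phi(\eta))}\geq B^{\Delta}\,B^{m(\eta)}$; summing over $\eta$ through the bijection yields $A_s\geq B^{\Delta}\,A_{\bar s}$. Since the function $c\mapsto c/(c+1)$ is increasing on $[0,\infty)$, this gives
\[
\pi_G(\sigma_u=s\mid \sigma_\Lambda=\tau)=\frac{A_s/A_{\bar s}}{A_s/A_{\bar s}+1}\geq \frac{B^{\Delta}}{1+B^{\Delta}},
\]
which is the desired bound.

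There is no real obstacle here: the proof is a one-line involution argument, and the only care needed is that the involution preserves $\sigma_\Lambda=\tau$ (which uses $u\notin\Lambda$) and that the change in the number of monochromatic edges upon flipping $u$ is bounded in magnitude by $\deg(u)\leq\Delta$. The constant $B^{\Delta}/(1+B^{\Delta})$ is exactly tight: it is attained in the worst case where all $\Delta$ neighbours of $u$ already have colour $\bar{s}$ under $\tau$ (or under the extension being considered).
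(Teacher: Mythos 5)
Your proof is correct, and it takes a genuinely different route from the paper's. The paper conditions on the states of the neighbours of $u$ (splitting them into those in $\Lambda$ and those outside), computes the conditional marginal $\frac{B^{d_2}}{B^{d_2}+B^{D-d_2}}$ exactly given the neighbourhood configuration, and then minimises over the possible neighbourhood counts via the law of total probability. You instead run a global spin-flip involution on the set of extensions of $\tau$, observing that flipping $u$ changes the number of monochromatic edges by $n_s(\eta)-n_{\bar s}(\eta)\in[-\deg(u),\deg(u)]$, which yields $A_s\geq B^{\Delta}A_{\bar s}$ directly. Both arguments are short; yours avoids the (mild) bookkeeping about which neighbours are conditioned and sidesteps the need to verify that the conditional probabilities of the neighbourhood configurations are well-defined and sum to one, while the paper's version makes the extremal configuration explicit. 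One small slip in your closing remark on tightness: since $B<1$ penalises monochromatic edges, the bound $\frac{B^{\Delta}}{1+B^{\Delta}}$ for $\pi_G(\sigma_u=s\mid\cdot)$ is attained when all $\Delta$ neighbours of $u$ have colour $s$ (so that choosing $s$ at $u$ costs a factor $B^{\Delta}$), not colour $\bar s$. This does not affect the proof of the inequality itself.
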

\begin{proof}
Without loss of generality, we assume that $s=1$.
Let $D$ be the number of neighbours of $u$ and $u_1,\hdots, u_d$ be the neighbours of $u$ in $G\setminus \Lambda$ and note that $d\leq D\leq \Delta$. Let $d_0$ be the number of $v\in\Lambda$ such that $v$ is $u$'s neighbour and $\tau_v=1$.  Let $s_1,\hdots,s_d\in \{1,2\}$ be arbitrary and let $d_1$ be the number of the $s_i$'s that are equal to $1$. Then we have that 
\[\pi_G(\sigma_u=1\mid \sigma_{u_1}=s_1,\hdots, \sigma_{u_d}=s_d, \sigma_\Lambda=\tau)=\frac{B^{d_2}}{B^{d_2}+B^{D-d_2}},\]
where $d_2 = d_0+d_1 \leq d_0+d \leq D$.
Since  $u_1,\hdots,u_d\not\in \Lambda$ we have that $\pi_G(\sigma_{u_1}=s_1,\hdots, \sigma_{u_d}=s_d\mid \sigma_\Lambda=\tau)\neq 0$ for any choice of $s_1,\hdots s_d$ and therefore by the law of total probability we obtain
\[\min_{d_2\in \{d_0,\hdots,d_0+d\}}\Big\{\frac{B^{d_2}}{B^{d_2}+B^{D-d_2}}\Big\}\leq \pi_G(\sigma_u=1\mid \sigma_\Lambda=\tau).\]
The function $\frac{B^x}{B^x+B^{D-x}}$ is decreasing for $x\in [0,D]$ (since $B\in (0,1)$) and hence
\[\min_{d_2\in \{d_0,\hdots,d_0+d\}}\Big\{\frac{B^{d_2}}{B^{d_2}+B^{D-d_2}}\Big\}=\frac{B^{D}}{1+B^D}.\]
Using that $B\in (0,1)$ and $D\leq \Delta$, we obtain the inequalities in the statement of the lemma.
\end{proof}

\subsection{Average growth of bichromatic components in the Potts distribution}\label{sec:vr55g45g45g45}
To utilise Theorem~\ref{thm:sampleising} and Lemma~\ref{lem:corrdecayIsing} for our sampling algorithms, we will need to bound the average growth of bichromatic components in a typical Potts configuration on a random regular graph. Our key lemma to achieve this will bound the probability that a path  is bichromatic\footnote{Let $G=(V,E)$ be a graph and $\sigma:V\rightarrow [q]$. We call a path $P$ bichromatic under $\sigma$ if there exist colours $c_1,c_2\in[q]$ such that every vertex $u$ of $P$ satisfies $\sigma_u\in\{c_1,c_2\}$.} in uniqueness, provided that the local neighbourhood around the path (in the sense of Definition~\ref{def:hpathneigh}) has a tree-like structure. The following lemma quantifies this probability bound and  is proved  in Section~\ref{sec:treespathPotts}. The proof uses the fact that the parameter $B$ lies in the uniqueness regime of the $(\Delta-1)$-ary tree.
\newcommand{\statelemtreespathPotts}{
Let $\Delta,q\geq 3$ be integers, and $B\in (0,1)$ be in the uniqueness regime of the $(\Delta-1)$-ary tree with $B\neq (\Delta-q)/\Delta$. Then, for any $\epsilon'>0$, there exists a positive constant $K<\frac{1+B}{B+q-1}+\epsilon'$ and $\epsilon>0$ such that the following holds for all sufficiently large integers $\ell$ and $h$.    

Let $G$ be a graph of maximum degree $\Delta$ and $P$ be a path with $\ell$ vertices whose $h$-graph-neighbourhood contains $(1-\epsilon)\ell$ isolated tree components.  Let $\mu_G$ be the Potts measure on $G$ with parameter $B$. Then,   
\[\mu_G(\mbox{path $P$ is bichromatic})\leq K^\ell.\]
}

\begin{lemma}\label{lem:treespathPotts}
\statelemtreespathPotts
\end{lemma}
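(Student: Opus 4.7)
The plan is to fix an unordered pair $\{c_1,c_2\}\subseteq[q]$ of distinct colours and upper-bound $\mu_G(\Omega^{c_1,c_2})$, where $\Omega^{c_1,c_2}$ is the event that $\sigma_{v_i}\in\{c_1,c_2\}$ for every vertex $v_i$ of $P$. A union bound over the $\binom{q}{2}$ colour pairs (with the multiplicative prefactor absorbed into $K^\ell$ for $\ell$ large) reduces the lemma to showing $\mu_G(\Omega^{c_1,c_2})\leq K_0^\ell$ for some $K_0<\tfrac{B+1}{B+q-1}+\epsilon'/2$.

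To estimate $\mu_G(\Omega^{c_1,c_2})$ I would condition on $\sigma_O$, the restriction of the configuration to the vertices \emph{outside} the $h$-graph-neighbourhood of $P$, and prove a bound that is uniform in $\sigma_O$. Given $\sigma_O$, the Gibbs measure restricted to the $h$-neighbourhood factorises over its connected components. For each of the $\geq(1-\epsilon)\ell$ \emph{good} indices $i$, the vertex $v_i$ sits in its own isolated tree $T_i$ of depth $\leq h$, and summing out $T_i\setminus\{v_i\}$ against the boundary condition $\sigma_O|_{\partial T_i}$ produces an effective field $\Pi_i(\cdot)$ at $v_i$; by the tree-uniqueness hypothesis on the $(\Delta-1)$-ary tree, $\Pi_i(c)\in[\tfrac{1}{q}-\eta,\tfrac{1}{q}+\eta]$ for every colour $c$ and every boundary condition, where $\eta=\eta(h)\to 0$ as $h\to\infty$. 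The at most $\epsilon\ell$ \emph{bad} path vertices sit in components of size $\leq \Delta^{h+1}$ and give rise to bounded lumped weights that jointly couple a few consecutive path vertices.

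The conditional probability $\Pr[\Omega^{c_1,c_2}\mid\sigma_O]$ is therefore the ratio of two one-dimensional transfer-matrix products on the path: the numerator uses the matrices $M_i^{\{c_1,c_2\}}(a,b)=B^{[a=b]}\Pi_i(b)$ indexed by $a,b\in\{c_1,c_2\}$, and the denominator uses the full $q\times q$ matrices $M_i^{[q]}$. Under the uniform field $\Pi\equiv 1/q$ one checks directly that $M^{\{c_1,c_2\}}\mathbf{1}=\tfrac{B+1}{q}\mathbf{1}$ and $M^{[q]}\mathbf{1}=\tfrac{B+q-1}{q}\mathbf{1}$, so the Perron eigenvalue ratio is exactly $\tfrac{B+1}{B+q-1}$. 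For the perturbed fields $\Pi_i$ in $[\tfrac{1}{q}-\eta,\tfrac{1}{q}+\eta]$, the entrywise bounds $M_i^{\{c_1,c_2\}}\mathbf{1}\leq(B+1)(\tfrac{1}{q}+\eta)\mathbf{1}$ and $M_i^{[q]}\mathbf{1}\geq[\tfrac{B+q-1}{q}-(1-B)\eta]\mathbf{1}$ iterate cleanly (each $M_i$ being entrywise non-negative), yielding a per-step ratio $\leq\tfrac{B+1}{B+q-1}(1+O(\eta))$ at every good step, while the bad segments contribute a combined factor at most $C_{\mathrm{bad}}(h)^{O(\epsilon\ell)}$ for some $C_{\mathrm{bad}}$ depending on $h,B,q,\Delta$.

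The main technical obstacle is preventing the $O(\eta)$ per-step error from accumulating over $\ell$ steps, since $\eta$ is a fixed constant once $h$ is chosen, so one cannot simply afford $(1+O(\eta))^\ell$. The resolution is to choose $h$ large enough that $\tfrac{B+1}{B+q-1}(1+O(\eta(h)))\leq\tfrac{B+1}{B+q-1}+\epsilon'/4$, which turns the per-step estimate into a genuine geometric bound; the hypothesis $B\neq(\Delta-q)/\Delta$ is precisely what guarantees that $\eta(h)$ decays exponentially in $h$ strictly inside uniqueness, so such an $h$ exists. Finally, $\epsilon$ is chosen small enough (depending on $h$ and hence on $\epsilon',B,q,\Delta$) that $C_{\mathrm{bad}}(h)^{O(\epsilon)}\leq 1+\epsilon'/\bigl(8\cdot\tfrac{B+1}{B+q-1}\bigr)$, so that the combined per-step growth rate is at most $\tfrac{B+1}{B+q-1}+\epsilon'/2$; the union bound over the $\binom{q}{2}$ pairs and absorption of constants into $K^\ell$ for large $\ell$ then delivers the claimed bound with $K<\tfrac{B+1}{B+q-1}+\epsilon'$.
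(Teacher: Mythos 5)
Your proposal follows essentially the same route as the paper's proof: tree uniqueness makes the effective field at each good path vertex lie within $\eta(h)$ of uniform (the paper's Lemma~\ref{lem:subtreefree}), the bare-path computation yields the rate $\tfrac{1+B}{B+q-1}$ per step (the paper derives this by sequential conditioning along the path in Lemma~\ref{lem:path2endscond} and Corollary~\ref{cor:path2endscond}, which is the same calculation as your transfer-matrix eigenvalue ratio), and the bad vertices are absorbed at a cost controlled by taking $\epsilon$ small. Your entrywise row-sum bounds for the perturbed $2\times 2$ and $q\times q$ matrices are correct and iterate as you say.

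One step needs repair as written. You choose $h$ first and then take $\epsilon$ ``depending on $h$'' so that $C_{\mathrm{bad}}(h)^{O(\epsilon)}$ is close to $1$; but the lemma fixes $K$ and $\epsilon$ \emph{before} quantifying over all sufficiently large $h$, and your $C_{\mathrm{bad}}(h)$ grows with $h$, so the argument as stated does not deliver a single $\epsilon$ valid for all large $h$. Relatedly, path vertices lying in a common bad component need not be consecutive (two far-apart $u_i,u_j$ can be joined by a short detour in $G\setminus P$), so ``lumped weights coupling a few consecutive path vertices'' does not in general preserve the one-dimensional transfer-matrix structure. The paper resolves both issues at once by conditioning on, and worst-casing, the spins of the bad path vertices themselves: this decouples the bad components entirely, restores the product structure on the remaining path segments, and makes the per-bad-vertex cost the $h$-independent constant $4q/(B\chi)$ with $\chi=\tfrac{1+B}{B+q-1}$ (Corollary~\ref{cor:path2endscond}), so $\epsilon$ can be fixed before $h$. (Alternatively, your version can be salvaged by observing that having $(1-\epsilon)\ell$ isolated tree components at depth $h'$ implies the same at every depth $h\leq h'$, so it suffices to prove the bound at one fixed large depth.) A final small point: the proof needs only $\eta(h)\to 0$, which follows from uniqueness alone; exponential decay, and hence the hypothesis $B\neq(\Delta-q)/\Delta$, is not actually used in this lemma.
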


For a random $\Delta$-regular graph $G$,  paths do have the tree-like structure of Lemma~\ref{lem:treespathPotts} (cf. Lemma~\ref{lem:graphneighbor}), and hence we can aggregate over all paths emanating from an arbitrary vertex (roughly $(\Delta-1)^{\ell}$ of them) and get a bound of roughly $(\Delta-1)K<\tfrac{(\Delta-1)(1+B)}{B+q-1}$ for the average growth of bichromatic components in a typical configuration $\sigma$.  This will allow us to use the upcoming $\ReSample$ subroutine for updating a bichromatic class using the Ising sampler of Section~\ref{sec:erv3v3444}.

\subsection{Analysing an Ideal ReSample subroutine}\label{sec:t5g45g45g5}

In this section, we give a preliminary description and analysis  of  the $\ReSample$ subroutine that updates a Potts configuration when we add a new edge $\{u,v\}$. The $\ReSample$ subroutine is inspired by the approach of Efthymiou \cite{Efthymiou} for colourings.

  In fact, for the moment, we will only study an ``idealised'' version of $\ReSample$ which we call $\IdealReSample$; the subroutine is idealised in the sense that it assumes that certain steps can be carried out efficiently.  Later, we will modify the subroutine to obtain the actual $\ReSample$ subroutine whose running time will be polynomial with respect to the size of the input graph. 

The point of analysing first $\IdealReSample$ is to give the key ideas behind the  underlying resampling step without bothering for the moment to make the subroutine computationally efficient. Moreover, the detour is going to be smaller than it might appear since the analysis of the  actual $\ReSample$ subroutine will follow from the analysis of $\IdealReSample$.

The $\IdealReSample$ subroutine takes as inputs a graph $G$, two vertices $u$ and $v$ of $G$ and a configuration $\sigma$ on $G$ such that $\sigma_u=\sigma_v$; it outputs a configuration $\sigma'$ on $G$ by updating the configuration on an appropriately chosen bichromatic class containing the vertices $u$ and $v$. The details of the subroutine can be found in Figure~\ref{alg:idealresample}. The subroutine will be used to update a Potts configuration when we add a new edge $\{u,v\}$, see the upcoming Lemma~\ref{lem:samplinganti}. 
 
\begin{figure}[h]

\begin{mdframed}
\textbf{Algorithm} $\IdealReSample(G,u,v,\sigma)$ \vskip 0.05cm
\rule[0.3cm]{6cm}{0.4pt}
\vskip 0.05cm

\noindent \textbf{parameters:} real $B\in(0,1)$, integer $q\geq 3$ \vskip 0.25cm

\noindent \textbf{Input:} \phantom{ \ \ }Graph $G=(V,E)$, vertices $u,v\in V$ with $\{u,v\}\notin E$,\\
\phantom{ \ \ \ \ \ \ \ \ \ \ \ \ } configuration $\sigma:V\rightarrow[q]$ with $\sigma_u=\sigma_v$.\vskip 0.15cm

\noindent \textbf{Output:} A configuration $\sigma':V\rightarrow[q]$.
\vskip 0.4cm

\noindent  Flip a coin with heads probability $\frac{qB}{B+q-1}$.\vskip 0.1cm
\noindent \textbf{if} heads \textbf{then} $\sigma'=\sigma$\vskip 0.1cm
\noindent \textbf{else} \vskip 0.1cm
\noindent \hspace{0.5cm} Pick u.a.r. a colour $c'$ from $[q]/\{c\}$, where $c=\sigma_u=\sigma_v$.  \vskip 0.1cm
\noindent \hspace{0.5cm} Let $U=\sigma^{-1}(c,c')$ and set $H=G[U]$. \vskip 0.1cm
\noindent \hspace{0.5cm} Sample Ising configuration $\tau$ on $H$ conditioned on $\tau_u=c$ and $\tau_v=c'$, more precisely:\vskip 0.1cm 
\noindent \hspace{1cm} Sample $\tau:U\rightarrow \{c,c'\}$ with $\tau_u=c$ and $\tau_v=c'$ so that $\tau\sim\pi^{c,c'}_{H,u,v}$ \vskip 0.15cm
\noindent \hspace{0.5cm}  Set: $\sigma_w'=\tau_w$ for $w\in U$;   set $\sigma_w'=\sigma_w$ for $w\notin U$. \vskip 0.2cm

\noindent \textbf{return} $\sigma':V\rightarrow[q]$.
\end{mdframed}
\caption{\label{alg:idealresample} The $\IdealReSample$ subroutine; we will later modify this to obtain the actual $\ReSample$ subroutine used in Algorithm $\SamplePotts$ (cf. Figure~\ref{alg:Potts}).}
\end{figure}

To control the output distribution  of the $\IdealReSample$ subroutine, the following definition will be crucial. 
\begin{definition}
Let $B>0$.  Suppose that $G=(V,E)$ is a graph and that $u,v$ are vertices in $G$. For a set $U\subseteq V$ such that $u,v\in U$, let
\[\Corr_G(U,u,v)=\bigg|\frac{\pi_{G[U]}\big(\eta_u=1,\eta_v=1\big)}{\pi_{G[U]}\big(\eta_u=1,\eta_v=2\big)}-1\bigg|,\]
where $\eta$ denotes a configuration $U\rightarrow\{1,2\}$ sampled according to $\pi_{G[U]}$. Note that $\Corr_G(U,u,v)$ measures the correlation between $u$ and $v$ in the Ising distribution with parameter $B$ on the subgraph $G[U]$.
\end{definition}

To state the main lemma of this section, we will also need the following definition of a ``random bichromatic class'' containing two specific vertices $u$ and $v$ under a configuration $\sigma$.
\begin{definition}\label{def:Usigma}
Let $G=(V,E)$ be a graph, $u,v$ be vertices in $G$  and $\sigma:V\rightarrow [q]$ be a configuration on $G$. We let $U_\sigma\subseteq V$ be a bichromatic class under $\sigma$ which contains $u$ and $v$, chosen uniformly at random among the set of all such classes if there is more than one. 

More precisely, if $\sigma_u\neq \sigma_v$, then $U_\sigma$ is the $(c_1,c_2)$-colour-class in $\sigma$ where  $c_1,c_2$ are the colours of $u$ and $v$ under $\sigma$. If $\sigma_u=\sigma_v$, then $U_\sigma$ is the $(c,c')$-colour-class in $\sigma$, where $c$ is the common colour of $u$ and $v$ under $\sigma$ and $c'$ is a uniformly random colour from $[q]\backslash\{c\}$.
\end{definition}

The following lemma will be critical for our Potts sampler. It shows how to update a Potts configuration when we add a new edge $\{u,v\}$, based on the $\IdealReSample$ subroutine. It also controls the error introduced based on the ``average correlation'' between $u$ and $v$ in a random bichromatic class that contains them.
\begin{lemma}\label{lem:samplinganti}
Let $B\in (0,1)$ and $q\geq 3$ be an integer, and consider arbitrary   $\epsilon\in (0,1)$.

Let $G=(V,E)$ be a graph and $\mu_G$ be the Potts distribution on $G$ with parameter $B$. Suppose that $u,v$ are vertices in $G$ such that $\{u,v\}\not\in E$ and
\[\mathbf{E}\big[\Corr_G(U_\sigma,u,v)\big]\leq \epsilon,\]
where the expectation is over the choice of a random configuration $\sigma\sim \mu_G$ and the choice of a random bichromatic class $U_\sigma\subseteq V$ containing $u$ and $v$ under $\sigma$  (cf. Definition~\ref{def:Usigma}). 

Consider the graph $G'=(V,E')$ obtained from $G$ by adding the edge $\{u,v\}$. Sample a configuration $\sigma': V\rightarrow [q]$ as follows. First, sample $\sigma:V\rightarrow[q]$ according to $\mu_G$. Then, if $\sigma_u\neq \sigma_v$, set $\sigma'=\sigma$; otherwise, set $\sigma'=\IdealReSample(G,u,v,\sigma)$. Then, the distribution of $\sigma'$, denoted by $\nu_{\sigma'}$, is within total variation distance $2\epsilon/B$ from the Potts distribution $\mu_{G'}$ on $G'$ with parameter $B$, i.e.,
\[\norm{\nu_{\sigma'}-\mu_{G'}}_{\TV}\leq 2 \epsilon/B.\]
\end{lemma}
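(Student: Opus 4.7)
The plan is to prove Lemma~\ref{lem:samplinganti} by a direct per-configuration computation of $\nu_{\sigma'}(\sigma')$ and a case analysis on whether $\sigma'_u=\sigma'_v$. When $\sigma'_u=\sigma'_v$, the only path producing $\sigma'$ is to draw $\sigma=\sigma'\sim\mu_G$ (so $\sigma_u=\sigma_v$) and have the coin in $\IdealReSample$ land heads, giving $\nu_{\sigma'}(\sigma')=\tfrac{qB}{B+q-1}\mu_G(\sigma')$. When $\sigma'_u=c\neq c'=\sigma'_v$, set $U:=(\sigma')^{-1}(c,c')$; besides the direct path $\sigma=\sigma'$, the output arises from any $\sigma$ with $\sigma_u=\sigma_v=c$ that agrees with $\sigma'$ on $V\setminus U$ and has $\sigma|_U\in\{c,c'\}^U$, followed by coin tails, the choice of $c'$ (aggregate probability $\tfrac{1-B}{B+q-1}$), and an Ising resampling on $G[U]$ that returns $\sigma'|_U$. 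The crucial simplification here is that edges across the cut $(U,V\setminus U)$ are never monochromatic under such $\sigma$, since $V\setminus U$ avoids the colours $\{c,c'\}$, so $w_G(\sigma)$ factorises over the two sides; combined with Observation~\ref{obs:wrvtvte}, the sum over admissible $\sigma$ collapses to
\[
\nu_{\sigma'}(\sigma')=\mu_G(\sigma')\Bigl[1+\tfrac{1-B}{B+q-1}\cdot\tfrac{\pi_0}{\pi_1}\Bigr],
\]
where $\pi_0:=\pi^{c,c'}_{G[U]}(\eta_u=\eta_v=c)$ and $\pi_1:=\pi^{c,c'}_{G[U]}(\eta_u=c,\eta_v=c')$.

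Let $p:=\mu_G(\sigma_u=\sigma_v)$, $R:=Z_G/Z_{G'}=1/(1-(1-B)p)$, and $\alpha_0:=q/(B+q-1)$, so that $\mu_{G'}(\sigma')/\mu_G(\sigma')$ equals $R$ in the bichromatic case and $BR$ in the monochromatic case. I would then verify the algebraic identities $1+\tfrac{1-B}{B+q-1}=\alpha_0$, $\tfrac{qB}{B+q-1}=B\alpha_0$, and $\alpha_0/R-1=\tfrac{1-B}{B+q-1}(1-qp)$, and apply the triangle inequality to obtain
\[
2\|\nu_{\sigma'}-\mu_{G'}\|_{\TV}\le \tfrac{1-B}{B+q-1}\,\mathbf{E}_\sigma\!\bigl[\Corr_G(U_\sigma,u,v)\mathbf{1}[\sigma_u\neq\sigma_v]\bigr]+\tfrac{1-B}{B+q-1}\,q\,\bigl|p-\tfrac{1}{q}\bigr|.
\]
The first summand is immediately at most $\tfrac{1-B}{B+q-1}\epsilon$ by the hypothesis (dropping the indicator only enlarges the expectation).

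The remaining and most delicate step is the bound $|p-1/q|\le\epsilon/q$. I would exploit the colour symmetry of $\mu_G$: for any $c\neq c'$, both $p/q=\mu_G(\sigma_u=\sigma_v=c)$ and $(1-p)/(q(q-1))=\mu_G(\sigma_u=c,\sigma_v=c')$ admit the representations $\sum_U q_U\pi_0(U)$ and $\sum_U q_U\pi_1(U)$ respectively, with $q_U:=\mu_G(\sigma^{-1}(c,c')=U)$ independent of the colour pair. Hence $p(q-1)/(1-p)$ is a weighted average of $\pi_0(U)/\pi_1(U)$ with weights proportional to $q_U\pi_1(U)$, and $|p(q-1)/(1-p)-1|$ is at most the corresponding weighted mean of $\Corr_G(U,u,v)$. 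Unpacking Definition~\ref{def:Usigma}, $\Pr[U_\sigma=U,\,\mathrm{colours}(U_\sigma)=\{c,c'\}]=q_U[2\pi_1(U)+2\pi_0(U)/(q-1)]$; summing over the $\binom{q}{2}$ colour pairs and retaining only the $\pi_1$ contribution yields $\mathbf{E}_\sigma[\Corr_G(U_\sigma,u,v)]\ge(1-p)$ times that weighted mean, whence $|pq-1|\le\epsilon$. Substituting back produces $\|\nu_{\sigma'}-\mu_{G'}\|_{\TV}\le\tfrac{1-B}{B+q-1}\epsilon\le 2\epsilon/B$. The main conceptual obstacle is precisely this isolation step, because the assumed average correlation mixes monochromatic and bichromatic contributions via Definition~\ref{def:Usigma} but only the bichromatic weight controls $p$.
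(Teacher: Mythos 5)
Your proposal is correct, and its first half (the exact per-configuration computation of $\nu_{\sigma'}$, yielding $\nu_{\sigma'}(\eta)=\tfrac{qB}{B+q-1}\mu_G(\eta)$ in the monochromatic case and $\mu_G(\eta)\bigl[1+\tfrac{1-B}{B+q-1}\tfrac{\pi_0}{\pi_1}\bigr]$ in the bichromatic case) coincides with the paper's equations \eqref{eq:monoweight}--\eqref{eq:eteroweight} combined with \eqref{eq:wdcewwc123}--\eqref{eq:12ewvete6343}. Where you genuinely diverge is in how the normalization error is controlled. The paper's Step II bounds the partition-function ratio $M=\tfrac{(B+q-1)Z_G}{qZ_{G'}}-1$ by decomposing both $Z_G$ and $Z_{G'}$ over triples $(U,c_1,c_2)$ and bounding each local ratio by $\Corr_G(U,u,v)/q$; the crude weight comparison $Z'(U,c_1,c_2)/Z_{G'}\le Z(U,c_1,c_2)/(BZ_G)$ is where the factor $1/B$ enters, giving $|M|\le\epsilon/B$ and the final $2\epsilon/B$. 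You instead reduce everything to $p=\mu_G(\sigma_u=\sigma_v)$ via $Z_{G'}/Z_G=1-(1-B)p$, and prove $|qp-1|\le\epsilon$ by colour symmetry: writing $p/q$ and $(1-p)/(q(q-1))$ as the mixtures $\sum_U q_U\pi_0(U)$ and $\sum_U q_U\pi_1(U)$, observing that their ratio is a $\pi_1$-weighted average of $\pi_0(U)/\pi_1(U)$, and checking (correctly, via Definition~\ref{def:Usigma}) that the hypothesis $\mathbf{E}[\Corr_G(U_\sigma,u,v)]\le\epsilon$ dominates $q(q-1)\sum_U q_U\pi_1(U)\Corr_G(U,u,v)$. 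This avoids the lossy weight comparison entirely and yields the sharper conclusion $\norm{\nu_{\sigma'}-\mu_{G'}}_{\TV}\le\tfrac{1-B}{B+q-1}\epsilon$, which of course implies the stated $2\epsilon/B$. The one point worth writing out carefully in a full version is the exact bookkeeping in your TV split: the coefficients $BRp$ (monochromatic) and $R(1-p)$ (bichromatic) multiplying $|\alpha_0/R-1|$ sum to exactly $1$ because $R(1-(1-B)p)=1$, which is what makes the second summand exactly $\tfrac{1-B}{B+q-1}\,q\,|p-1/q|$ rather than merely bounded by it.
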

\begin{proof}
We begin with a few definitions that will be used throughout the proof. Fix distinct colours $c_1,c_2\in [q]$ and a set $U\subseteq V$ such that $u,v\in U$. Let $\Omega(U,c_1,c_2)$ be the set of configurations such that $U$ is the $(c_1,c_2)$-colour-class, i.e., 
\[\Omega(U,c_1,c_2)=\big\{\eta:V\rightarrow[q]\mid  U=\eta^{-1}(c_1,c_2)\big\}.\]
We will be interested in two particular types of configurations in $\Omega(U,c_1,c_2)$, those where $u,v$ take the colours $c_1,c_2$ and those where $u,v$ take the colours $c_1,c_1$. Namely, let
\begin{align*}
\Omega_{\noteq}(U,c_1,c_2)&=\big\{\eta\in \Omega(U, c_1,c_2)\mid \eta_u=c_1,\, \eta_v=c_2 \big\},\\ 
\Omega_{\eq}(U,c_1,c_2)&=\big\{\eta\in \Omega(U, c_1,c_2)\mid \eta_u=c_1,\, \eta_v=c_1 \big\}.
\end{align*}
Note the asymmetry in the above definitions with respect to $c_1,c_2$, e.g., for $\eta\in\Omega_{\eq}(U,c_1,c_2)$, we have that $\eta_u,\eta_v\neq c_2$. For a configuration $\eta\in\Omega(U,c_1,c_2)$, we also denote by 
\[\Omega^{\eta}(U,c_1,c_2)=\big\{\tau\in \Omega(U,c_1,c_2)\mid  \tau_{V\backslash U}=\eta_{V\backslash U}\big\}\]
the set of configurations in $\Omega(U,c_1,c_2)$ that agree with $\eta$ on $V\backslash U$; we define analogously the sets $\Omega^{\eta}_{\noteq}(U,c_1,c_2),\Omega^{\eta}_{\eq}(U,c_1,c_2)$. Using Observation~\ref{obs:wrvtvte} and the Ising distribution $\pi^{c_1,c_2}_{G[U]}$ (cf.  Section~\ref{sec:vvverv3242}), we have that for any $\eta\in\Omega(U,c_1,c_2)$ it holds that
\begin{equation}\label{eq:wdcewwc123}
\mu_{G}(\eta)=\mu_G\big(\Omega^{\eta}(U,c_1,c_2)\big)\, \pi^{c_1,c_2}_{G[U]}(\eta_U),
\end{equation}
and
\begin{equation}\label{eq:12ewvete6343}
\frac{\mu_G\big(\Omega^{\eta}_{\eq}(U,c_1,c_2)\big)}{\mu_G\big(\Omega^{\eta}(U,c_1,c_2)\big)}=\pi^{c_1,c_2}_{G[U]}(\tau_u=\tau_v=c_1), \quad
\frac{\mu_G\big(\Omega^{\eta}_{\noteq}(U,c_1,c_2)\big)}{\mu_G\big(\Omega^{\eta}(U,c_1,c_2)\big)}=\pi^{c_1,c_2}_{G[U]}(\tau_u=c_1,\tau_v=c_2).
\end{equation}
Also, the assumption $\mathbf{E}\big[\Corr_G(U_\sigma,u,v)\big]\leq \epsilon$ translates into
\begin{equation}\label{eq:ECorrGuv}
\sum_{\substack{c_1,c_2\in [q];\\ c_1\neq c_2}}\sum_{\substack{U\subseteq V;\\ u,v\in U}}\bigg(\mu_G\big(\Omega_{\noteq}(U,c_1,c_2)\big)+\frac{1}{q-1}\mu_G\big(\Omega_{\eq}(U,c_1,c_2)\big)\bigg)\Corr_G(U,u,v)\leq \epsilon.
\end{equation}
To see this, for a set $U\subseteq V$ such that $u,v\in U$, we find how much $\Corr_G(U,u,v)$ contributes to $\mathbf{E}\big[\Corr_G(U_\sigma,u,v)]$. Note, for a configuration $\sigma$ to have $U_\sigma=U$ (cf. Definition~\ref{def:Usigma}), there must exist distinct colours $c_1,c_2\in [q]$ such that $U=\sigma^{-1}(c_1,c_2)$ and either (i) $\sigma(u)=c_1$, $\sigma(v)=c_2$, or (ii) $\sigma(u)=c_1$, $\sigma(v)=c_1$. In case (i), we have that $\sigma\in \Omega_{\noteq}(U,c_1,c_2)$ and $U_\sigma=U$ with probability 1. In case (ii), we have that $\sigma\in \Omega_{\eq}(U,c_1,c_2)$ and $U_\sigma=U$ with probability $1/(q-1)$. By  aggregating over the relevant $\sigma$, we therefore obtain \eqref{eq:ECorrGuv}.\vskip 0.2cm

We next proceed to the proof. Recalling that $\nu_{\sigma'}$ is the distribution of $\sigma'$, our goal is to show that
\begin{equation}\label{eq:v3tg6gg1s2}
\norm{\nu_{\sigma'}-\mu_{G'}}_{\TV}\leq 2\epsilon/B.
\end{equation} 
\vskip 0.2cm
\noindent \textbf{Weight of configurations in $\nu_{\sigma'}$.} 
For distinct colours $c_1,c_2\in [q]$ and a set $U\subseteq V$ such that $u,v\in U$, we first show that
\begin{align}
\forall\eta\in \Omega_{\eq}(U,c_1,c_2):&\quad \nu_{\sigma'}(\eta)=\frac{qB}{B+q-1}\mu_{G}(\eta),\label{eq:monoweight}\\
\forall\eta\in \Omega_{\noteq}(U,c_1,c_2):&\quad \nu_{\sigma'}(\eta)=\mu_G(\eta)+\frac{(1-B)\pi^{c_1,c_2}_{G[U],u,v}(\eta_U)}{B+q-1}\mu_G\big(\Omega^{\eta}_{\eq}(U,c_1,c_2)\big).\label{eq:eteroweight}
\end{align}  
(Recall from Definition~\ref{def:piuv} that $\pi^{c_1,c_2}_{G[U],u,v}$ is the conditional Ising distribution on $G[U]$ where $u$,$v$ take the colours $c_1,c_2$, respectively.) To see the expression for $\nu_{\sigma'}(\eta)$ in \eqref{eq:monoweight}, note that we can obtain $\eta\in\Omega_{\eq}(U,c_1,c_2)$ via the subroutine $\IdealReSample$ only if $\sigma=\eta$ and the coin flip came up heads (because $\eta_u=\eta_v=c_1$). Analogously, to see the  expression for $\nu_{\sigma'}(\eta)$ in \eqref{eq:eteroweight}, note that we can obtain $\eta\in \Omega_{\noteq}(U,c_1,c_2)$ if one of the following happens (using that  $\eta_u=c_1$ and $\eta_v=c_2$).
\begin{itemize}
\item We started with the configuration $\sigma=\eta$; this happens with probability $\mu_G(\eta)$. Then, we obtain $\eta$ with probability $1$.
\item We started with a configuration $\sigma$ such that $\sigma_u=\sigma_v=c_1$,   $\sigma^{-1}(c_1,c_2)=U$ and $\sigma_{V\backslash U}=\eta_{V\backslash U}$, i.e., $\sigma\in\Omega^{\eta}_{\eq}(U,c_1,c_2)$; this happens with probability $\mu_G\big(\Omega^{\eta}_{\eq}(U,c_1,c_2)\big)$.  Then the coin flip came tails, the colour $c'$ selected was $c'=c_2$, and the Ising configuration $\tau$  sampled was $\eta_U$. All that happens with probability $\frac{(1-B)\pi^{c_1,c_2}_{G[U],u,v}(\eta_U)}{B+q-1}$.
\end{itemize} 
Using the assumption $\mathbf{E}\big[\Corr_G(U_\sigma,u,v)\big]\leq \epsilon$ (cf. \eqref{eq:ECorrGuv}), we will later show the following bound for the ratio of the partition functions $Z_G,Z_{G'}$:
\begin{equation}\label{eq:ZratioGGpanti}
|M|\leq \epsilon/B, \mbox{ where } M:=\frac{(B+q-1)Z_G}{q Z_{G'}}-1.
\end{equation} 
Let us conclude the proof assuming, for now, \eqref{eq:ZratioGGpanti}. Then, we will prove \eqref{eq:ZratioGGpanti} in Step II below. \vskip 0.2cm

\noindent \textbf{Step I: Proof of \eqref{eq:v3tg6gg1s2} (assuming \eqref{eq:ZratioGGpanti}).} We will decompose $\norm{\nu_{\sigma'}-\mu_{G'}}_{\TV}$ as 
\begin{equation}\label{eq:tvduc}
\norm{\nu_{\sigma'}-\mu_{G'}}_{\TV}=\frac{1}{2}\sum_{\eta:V\rightarrow [q]}|\nu_{\sigma'}(\eta)-\mu_{G'}(\eta)|=\frac{1}{2}\sum_{U;\, u,v\in U}\sum_{c_1,c_2\in [q];\, c_1\neq c_2}D(U,c_1,c_2),
\end{equation}
where 
\begin{equation}\label{eq:duc12}
D(U,c_1,c_2):=\sum_{\eta\in \Omega_{\noteq}(U,c_1,c_2)}|\nu_{\sigma'}(\eta)-\mu_{G'}(\eta)|+\frac{1}{q-1}\sum_{\eta\in \Omega_{\eq}(U,c_1,c_2)}|\nu_{\sigma'}(\eta)-\mu_{G'}(\eta)|.
\end{equation}
To see the second equality in  \eqref{eq:tvduc},  fix any configuration $\eta: V\rightarrow [q]$. If $\eta_u=c_1$ and $\eta_v=c_2$ for distinct colours $c_1,c_2\in [q]$, then $|\nu_{\sigma'}(\eta)-\mu_{G'}(\eta)|$ appears only in one term $D(U,c_1,c_2)$ in the summation of \eqref{eq:tvduc}, namely for $U=\eta^{-1}(c_1,c_2)$. If $\eta_u=\eta_v=c_1$ for some colour $c_1\in [q]$, then $|\nu_{\sigma'}(\eta)-\mu_{G'}(\eta)|$ appears in exactly $q-1$ terms $D(U,c_1,c_2)$  in the summation of \eqref{eq:tvduc}, once for each colour $c_2\neq c_1$ and $U=\eta^{-1}(c_1,c_2)$. 

For $\eta\in \Omega_{\eq}(U,c_1,c_2)$, we have $\mu_{G'}(\eta)=\frac{w_{G'}(\eta)}{Z_{G'}}=\frac{Bw_{G}(\eta)}{Z_{G'}}=\frac{qB(M+1)\mu_G(\eta)}{B+q-1}$ where the last equality follows from the definition of $M$ in \eqref{eq:ZratioGGpanti}. Hence, by \eqref{eq:monoweight},
\begin{equation}\label{eq:v43t34v4t3v145}
|\nu_{\sigma'}(\eta)-\mu_{G'}(\eta)|=\frac{qB}{B+q-1}\Big|\mu_{G}(\eta)-(M+1)\mu_G(\eta)\Big|\leq |M|\,\mu_G(\eta),
\end{equation}
where the last inequality follows from $\frac{qB}{B+q-1}\leq 1$ which holds for all $B\in(0,1)$.

For $\eta\in \Omega_{\noteq}(U,c_1,c_2)$, we have $\mu_{G'}(\eta)=\frac{w_{G'}(\eta)}{Z_{G'}}=\frac{w_{G}(\eta)}{Z_{G'}}=\frac{q(M+1)}{B+q-1}\mu_G(\eta)$ using again the definition of $M$ in \eqref{eq:ZratioGGpanti}.  Hence, \mbox{by \eqref{eq:eteroweight},}
\begin{align}
|\nu_{\sigma'}(\eta)-\mu_{G'}(\eta)|&=\bigg|\frac{1-B}{B+q-1}\Big(\mu_{G}(\eta)-\pi^{c_1,c_2}_{G[U],u,v}(\eta_U)\mu_G\big(\Omega^{\eta}_{\eq}(U,c_1,c_2)\big)\Big)+\frac{qM}{B+q-1}\mu_G(\eta)\bigg|\notag\\
&\leq\Big|\mu_{G}(\eta)-\pi^{c_1,c_2}_{G[U],u,v}(\eta_U)\mu_G\big(\Omega^{\eta}_{\eq}(U,c_1,c_2)\big)\Big|+2|M|\,\mu_G(\eta),\label{eq:weqcrecr}
\end{align}
where the last inequality follows from the triangle inequality and the inequalities $\frac{1-B}{B+q-1}\leq 1$, $\frac{q}{B+q-1}\leq 2$. Note that, by the definition of $\pi^{c_1,c_2}_{G[U],u,v}(\cdot)$, for $\eta\in \Omega_{\noteq}(U,c_1,c_2)$ it holds that
\[\pi^{c_1,c_2}_{G[U],u,v}(\eta_U)=\frac{\pi^{c_1,c_2}_{G[U]}(\eta_U)}{\pi^{c_1,c_2}_{G[U]}(\tau_u=c_1,\tau_v=c_2)}\]
and therefore, using \eqref{eq:wdcewwc123} and \eqref{eq:12ewvete6343}, we have that
\begin{equation}\label{eq:322wxsxwf4tv2}
\Big|\mu_{G}(\eta)-\pi^{c_1,c_2}_{G[U],u,v}(\eta_U)\mu_G\big(\Omega^{\eta}_{\eq}(U,c_1,c_2)\big)\Big|=\mu_G(\eta)\Corr_G(U,u,v),
\end{equation}
and hence \eqref{eq:weqcrecr} gives that, for all $\eta\in \Omega_{\noteq}(U,c_1,c_2)$, it holds that
\begin{equation}\label{eq:weqcrecr12}
|\nu_{\sigma'}(\eta)-\mu_{G'}(\eta)|\leq \big(2|M|+\Corr_G(U,u,v)\big)\mu_G(\eta).
\end{equation}

Summing \eqref{eq:v43t34v4t3v145} and \eqref{eq:weqcrecr12} over the relevant configurations $\eta$, we obtain that, for all $U\subseteq V$ with $u,v\in U$ and distinct colours $c_1,c_2\in [q]$, it holds that
\begin{equation}\label{eq:Duc1c2a1313} 
D(U,c_1,c_2)\leq  \big(2|M|+\Corr_G(U,u,v)\big)\bigg(\mu_G\big(\Omega_{\noteq}(U,c_1,c_2)\big)+\frac{\mu_G\big(\Omega_{\eq}(U,c_1,c_2)\big)}{q-1}\bigg), 
\end{equation}
To conclude the proof of \eqref{eq:v3tg6gg1s2}, note that analogously to \eqref{eq:tvduc} we have that
\begin{equation}\label{eq:edca2s134}
\sum_{U;\, u,v\in U}\sum_{c_1,c_2\in [q];\, c_1\neq c_2}\Big(\mu_G\big(\Omega_{\noteq}(U,c_1,c_2)\big)+\frac{\mu_G\big(\Omega_{\eq}(U,c_1,c_2)\big)}{q-1}\Big)=1.
\end{equation}
Now consider the expression for $\norm{\nu_{\sigma'}-\mu_{G'}}_{\TV}$ from \eqref{eq:tvduc}. We bound each term $D(U,c_1,c_2)$ using \eqref{eq:Duc1c2a1313} and then apply \eqref{eq:ECorrGuv} and \eqref{eq:edca2s134} to obtain that
\begin{equation*}
\norm{\nu_{\sigma'}-\mu_{G'}}_{\TV}\leq \frac{1}{2}(2|M|+\epsilon)\leq \tfrac{3}{2B}\epsilon\leq 2\epsilon/B,
\end{equation*}
where the last inequality follows from $|M|\leq \epsilon/B$ (cf. \eqref{eq:ZratioGGpanti}). This finishes the proof of \eqref{eq:v3tg6gg1s2}, modulo the proof of \eqref{eq:ZratioGGpanti} which is given below.\vskip 0.2cm

\noindent \textbf{Step II: Proof of \eqref{eq:ZratioGGpanti}.} To prove \eqref{eq:ZratioGGpanti}, it will be useful to define
\begin{align}
X(U,c_1,c_2)&:=\sum_{\eta\in \Omega_{\noteq}(U,c_1,c_2)}w_G(\eta),\notag\\
Y(U,c_1,c_2)&:=\sum_{\eta\in \Omega_{\eq}(U,c_1,c_2)}w_G(\eta),\notag\\
Z(U,c_1,c_2)&:=X(U,c_1,c_2)+\frac{1}{q-1}Y(U,c_1,c_2).\label{eq:4byb335f34v}
\end{align}
Analogously to \eqref{eq:tvduc} (and \eqref{eq:edca2s134}), we have that
\begin{equation}\label{eq:vvcrwe42vwfr}
Z_G=\sum_{c_1\neq c_2}\sum_{U; u,v\in U}Z(U,c_1,c_2).
\end{equation}
Using that $w_{G'}(\eta)=w_G(\eta)$ for $\eta\in \Omega_{\noteq}(U,c_1,c_2)$ and $w_{G'}(\eta)=B w_G(\eta)$ for $\eta\in \Omega_{\eq}(U,c_1,c_2)$, we obtain that 
\begin{equation}\label{eq:4byb335f34vb}
Z_{G'}=\sum_{c_1\neq c_2}\sum_{U; u,v\in U}Z'(U,c_1,c_2), \mbox{ where } Z'(U,c_1,c_2):=X(U,c_1,c_2)+\frac{B}{q-1}Y(U,c_1,c_2).
\end{equation}
We first show that, for all $U\subseteq V$ with $u,v\in U$ and distinct colours $c_1,c_2\in [q]$, it holds that
\begin{equation}\label{eq:4vt564g345}
\Big|\frac{(B+q-1)Z(U,c_1,c_2)}{q Z'(U,c_1,c_2)}-1\Big|\leq \frac{1}{q}\Corr_G(U,u,v). 
\end{equation}
Observe  that 
\begin{equation}\label{eq:ev436x54}
\frac{(B+q-1)Z(U,c_1,c_2)}{q Z'(U,c_1,c_2)}-1=\frac{1-B}{q}\frac{Y(U,c_1,c_2)-X(U,c_1,c_2)}{X(U,c_1,c_2)+\frac{B}{q-1}Y(U,c_1,c_2)}.
\end{equation}
Observe also that (cf. \eqref{eq:12ewvete6343}) $\Big|\frac{Y(U,c_1,c_2)}{X(U,c_1,c_2)} -1\Big|=\Corr_G(U,u,v)$,  which combined with  \eqref{eq:ev436x54} (and ignoring the positive term $\frac{B}{q-1}Y(U,c_1,c_2)$ in the denominator of the r.h.s. in the latter)  gives \eqref{eq:4vt564g345}, using also that $B\in (0,1)$.  

Using \eqref{eq:vvcrwe42vwfr} and \eqref{eq:4byb335f34vb}, we have that
\begin{equation}\label{eq:r2rf25t5h67}
\frac{(B+q-1)Z_G}{qZ_{G'}}-1=\sum_{c_1\neq c_2}\sum_{U;\, u,v\in U}\frac{qZ'(U,c_1,c_2)}{Z_{G'}}\bigg(\frac{(B+q-1)Z(U,c_1,c_2)}{qZ'(U,c_1,c_2)}-1\bigg).
\end{equation}
Since $B\in (0,1)$, we have for all $U\subseteq V$ with $u,v\in U$ and distinct colours $c_1,c_2\in [q]$ that $Z'(U,c_1,c_2)\leq Z(U,c_1,c_2)$ and $Z_{G'}\geq BZ_G$, therefore 
\begin{equation}\label{eq:t3tbt3vf5f38}
\frac{Z'(U,c_1,c_2)}{Z_{G'}}\leq \frac{Z(U,c_1,c_2)}{BZ_{G}}=\frac{1}{B}\bigg(\mu_G\big(\Omega_{\noteq}(U,c_1,c_2)\big)+\frac{\mu_G\big(\Omega_{\eq}(U,c_1,c_2)\big)}{q-1}\bigg).
\end{equation}
Combining \eqref{eq:4vt564g345}, \eqref{eq:r2rf25t5h67} and \eqref{eq:t3tbt3vf5f38}, we obtain by the triangle inequality that (cf. \eqref{eq:ECorrGuv})
\[\bigg|\frac{(B+q-1)Z_G}{qZ_{G'}}-1\bigg|\leq \mathbf{E}\big[\Corr_G(U_\sigma,u,v)\big]/B\leq \epsilon/B,\]   
thus completing the proof of \eqref{eq:ZratioGGpanti}, and therefore the proof of Lemma~\ref{lem:samplinganti}.
\end{proof}

\subsection{The ReSample subroutine}\label{sec:resamplereal}
In this section, we modify the $\IdealReSample$ subroutine of Section~\ref{sec:t5g45g45g5} to make it computationally efficient; this will give us the actual $\ReSample$ subroutine that we will use in our sampling algorithm for the Potts model. To describe the $\ReSample$ subroutine, we will need the following definition.
\begin{definition}
Let $b,M>0$ be constants. Let $G=(V,E)$ be an $n$-vertex graph and let $\sigma$ be a configuration on $G$. We say that a bichromatic component in $\sigma$ is \emph{$(b,M)$-good} if it has average growth $b$ up to depth $L=\left\lceil M\log n\right\rceil$; we say that it is \emph{$(b,M)$-bad} otherwise. Analogously, we say that $\sigma$ is \emph{$(b,M)$-good} if \emph{all} bichromatic components in $\sigma$ are good; otherwise, we say that  $\sigma$ is $(b,M)$-bad. 
\end{definition}
We are now able to describe
 the $\ReSample$ subroutine, which takes as inputs a graph $G$, two vertices $u$ and $v$ of $G$ and a configuration $\sigma$ on $G$ such that $\sigma(u)=\sigma(v)$. The subroutine is given in Figure~\ref{alg:resample}.
\begin{figure}[h]
\begin{mdframed}
\textbf{Algorithm} $\ReSample_{b,M}(G,u,v,\sigma)$ \vskip 0.05cm
\rule[0.3cm]{6cm}{0.4pt}
\vskip 0.05cm

\noindent \textbf{parameters:} real $B\in (0,1)$, $b,M>0$ satisfying Theorem~\ref{thm:sampleising}, integer $q\geq 3$ \vskip 0.25cm

\noindent \textbf{Input:} \phantom{ \ \ }Graph $G=(V,E)$, vertices $u,v\in V$ with $\{u,v\}\notin E$,\\
\phantom{ \ \ \ \ \ \ \ \ \ \ \ \ } configuration $\sigma:V\rightarrow[q]$ with $\sigma(u)=\sigma(v)$.\vskip 0.15cm

\noindent \textbf{Output:} A configuration $\sigma':V\rightarrow[q]$.
\vskip 0.4cm

Does $\sigma$ contain a bichromatic component which is $(b,M)$-bad? \vskip 0.1cm
\noindent \textbf{if} yes \textbf{then} \textbf{return} $\sigma':V\rightarrow[q]$ selected uniformly at random.\vskip 0.1cm 
\noindent \textbf{else} \vskip 0.1cm 
\noindent \hspace{0.5cm} Flip a coin with heads probability $\frac{qB}{B+q-1}$.\vskip 0.1cm
\noindent \hspace{0.5cm} \textbf{if} heads \textbf{then} \textbf{return} $\sigma'=\sigma$\vskip 0.1cm
\noindent \hspace{0.5cm} \textbf{else} \vskip 0.1cm
\noindent \hspace{1cm} Pick u.a.r. a colour $c'$ from $[q]/\{c\}$, where $c=\sigma(u)=\sigma(v)$.  \vskip 0.1cm
\noindent \hspace{1cm} Let $U=\sigma^{-1}(c,c')$ and set $H=G[U]$. \vskip 0.1cm
\noindent \hspace{1cm} Use algorithm of Theorem~\ref{thm:sampleising} to sample Ising distribution on $H$, more precisely:\vskip 0.1cm 
\noindent \hspace{1.5cm} sample $\tau:U\rightarrow \{c,c'\}$ with $\tau(u)=c$ and $\tau(v)=c'$ so that $\lVert{\nu_\tau-\pi^{c,c'}_{H,u,v}\rVert}\leq 1/n^{\cons}$  \vskip 0.15cm
\noindent \hspace{1cm}  Set: $\sigma'(w)=\tau(w)$ for $w\in U$;   set $\sigma'(w)=\sigma(w)$ for $w\notin U$. \vskip 0.2cm

\noindent\hspace{1cm} \textbf{return} $\sigma':V\rightarrow[q]$.
\end{mdframed}
\caption{\label{alg:resample} The $\ReSample$ subroutine used in Algorithm $\SamplePotts$ (cf. Figure~\ref{alg:Potts}).}
\end{figure}

\subsection{Analysis of the Potts algorithm}\label{sec:thyh55h6}
We now have all the pieces to give our algorithm for sampling from the antiferromagnetic Potts model, see the algorithm $\SamplePotts$ in Figure~\ref{alg:Potts}. We next prove the following theorem, which details the performance of the algorithm $\SamplePotts$ on random $\Delta$-regular graphs and yields as an immediate corollary Theorem~\ref{thm:mainPotts}.
\begin{figure}[h]
\begin{mdframed}
\textbf{Algorithm} $\SamplePotts(G)$ \vskip 0.05cm
\rule[0.3cm]{6cm}{0.4pt}
\vskip 0.05cm
\noindent \textbf{parameters:} real $B\in(0,1)$, $b,M>0$ satisfying Theorem~\ref{thm:sampleising}, integers $q,\Delta\geq 3$
\vskip 0.2cm
\noindent \textbf{Input:} Graph $G=(V,E)$ 

\noindent \textbf{Output:} Either \textsc{Fail} or an assignment $\sigma:V\rightarrow[q]$

\vskip 0.4cm

\noindent $E':= \{e \in E\mid \mbox{$e$ belongs to a short cycle}\}$ 

\vskip 0.2cm

\noindent \textbf{if} \mbox{$G'=(V,E')$ contains a component which is neither a cycle nor an isolated vertex} \vskip 0.1cm \noindent \hspace{0.5cm}\textbf{then} \textbf{return} \textsc{Fail}

\noindent \textbf{else} \vskip 0.1cm
\noindent \hspace{0.5cm} Sample a configuration $\sigma':V\rightarrow [q]$ on $G'$ (according to $\mu_{G'}$); \vskip 0.1cm
\noindent \hspace{0.5cm} Let $e_1,e_2,\hdots, e_t$ be the edges in $E\backslash E'$; set $G_t=G'$ and $\sigma_t=\sigma'$.\vskip 0.1cm
\noindent\hspace{0.5cm} \textbf{for} $j=t$ \textbf{downto} 1:\vskip 0.1cm

\noindent\hspace{1cm} Suppose that $e_j=(u_j,v_j)$;\vskip 0.1cm
\noindent\hspace{1cm} \textbf{if} $\sigma_j(u_j)=\sigma_j(v_j)$ \textbf{ then } $\sigma_{j-1}=\ReSample_{b,M}(G_j,u_j,v_j,\sigma_j)$ \textbf{ else } $\sigma_{j-1}=\sigma_j$;\vskip 0.1cm
\noindent\hspace{1cm} Obtain the graph $G_{j-1}$ by adding the edge $\{u_j,v_j\}$ in $G_j$\vskip 0.1cm
\noindent\hspace{0.5cm} \textbf{end}
\vskip0.2cm
\noindent\hspace{0.5cm} \textbf{return} $\sigma=\sigma_0$.
\end{mdframed}
\caption{\label{alg:Potts} Algorithm for sampling a Potts configuration in the antiferromagnetic case $B\in (0,1)$. The details of the $\ReSample$ subroutine are given in Figure~\ref{alg:resample}. While the algorithm can also be modified to work for the ferromagnetic case $B>1$, we will instead use a simpler percolation algorithm via the random-cluster representation.}
\end{figure}

\begin{theorem}\label{thm:mainPotts2}
Let $\Delta\geq 3$, $q\geq 3$ and $B\in (0,1)$ be in the uniqueness regime of the $(\Delta-1)$-ary tree. Then, there exists constants $b,M,\delta>0$ such that, as $n\rightarrow \infty$, the following holds   with probability $1-o(1)$ over the choice of a random $\Delta$-regular graph $G=(V,E)$ with $n$ vertices.

The output of the algorithm $\SamplePotts(G)$ (cf. Figure~\ref{alg:Potts}) is an assignment $\sigma:V\rightarrow [q]$ whose distribution $\nu_\sigma$ is within total variation distance $O(1/n^{\delta})$ from the Potts distribution $\mu_G$ with parameter $B$, i.e.,
\[\norm{\nu_\sigma-\mu_{G}}_{\TV}=O(1/n^{\delta}).\]
\end{theorem}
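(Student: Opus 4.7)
The proof will mirror the inductive structure of Theorem~\ref{thm:main}, with Lemma~\ref{lem:samplinganti} taking the role of Lemma~\ref{lem:edgewise} and two additional error sources to control the discrepancy between $\ReSample$ and $\IdealReSample$. I first fix constants. The uniqueness assumption $B>(\Delta-q)/\Delta$ is equivalent to $\tfrac{(\Delta-1)(1-B)}{B+q-1}<1$; invoking Lemma~\ref{lem:treespathPotts} with a sufficiently small $\epsilon'>0$ yields constants $K<\tfrac{1+B}{B+q-1}+\epsilon'$ and $\epsilon>0$, and I set $b:=(\Delta-1)K(1+\eta)$ for a small $\eta>0$ chosen so that $b\,\tfrac{1-B}{1+B}<1$ still holds; this makes $b$ admissible for both Theorem~\ref{thm:sampleising} and Lemma~\ref{lem:corrdecayIsing}. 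I take $M$ large enough to satisfy both of those results and to make $M\log(1+\eta)$ exceed a specified constant, and let $W:=\tfrac{1+B}{(1-B)K(1+\eta)}$, which still satisfies $W>\Delta-1$ by the choice of $K,\eta$. Let $\delta>0$ be the constant provided by Lemma~\ref{lem:numpaths} for this $W$. By a union bound over Lemmas~\ref{lem:disjointcycles},~\ref{lem:numpaths} and~\ref{lem:graphneighbor}, the random graph $G$ satisfies with probability $1-o(1)$ that short cycles are vertex-disjoint, every path of length $\ell\in[\ell_1,L_2+1]$ has an $h$-graph-neighbourhood with at least $(1-\epsilon)\ell$ isolated tree components, and $\sum_{\ell\ge\ell_0\log n}\ell C_\ell/W^\ell\le 1/(2n^\delta)$.

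Next I bound the probability that the current configuration is $(b,M)$-bad. For any subgraph $G_j\subseteq G$ obtained by deleting some of the non-short-cycle edges, any path $P$ in $G_j$ is also a path in $G$, and its $h$-graph-neighbourhood in $G_j$ only improves (isolated tree components remain isolated and tree-like when edges are deleted). Hence Lemma~\ref{lem:treespathPotts} applies to the Potts distribution $\mu_{G_j}$ and gives $\mu_{G_j}(P\text{ is bichromatic})\le K^L$ for every path $P$ of length $L=\lceil M\log n\rceil$. Since at most $\Delta(\Delta-1)^{L-1}$ such paths emanate from any single vertex, a union bound over vertices followed by Markov's inequality yields
\begin{equation*}
\Pr_{\sigma\sim\mu_{G_j}}\!\bigl[\sigma\text{ is }(b,M)\text{-bad}\bigr]\le\frac{n\,\Delta(\Delta-1)^{L-1}K^L}{b^L}=O\!\left(n\,(1+\eta)^{-L}\right)=O(n^{1-M\log(1+\eta)}),
\end{equation*}
which can be made $\le 1/n^{20}$ by enlarging $M$.

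For the per-edge error at step $j$, consider first $\sigma\sim\mu_{G_j}$ fed to $\IdealReSample(G_j,u_j,v_j,\sigma)$: by Lemma~\ref{lem:samplinganti}, the TV error is $2\mathbf{E}_\sigma[\Corr_{G_j}(U_\sigma,u_j,v_j)]/B$. On the good event that $\sigma$ is $(b,M)$-good, the subgraph $G_j[U_\sigma]$ has average growth $\le b$ up to depth $L$, so Lemma~\ref{lem:corrdecayIsing} gives
\begin{equation*}
\Corr_{G_j}(U_\sigma,u_j,v_j)\le n^{-\cons}+\sum_{\ell=1}^{L}P_\ell(G_j[U_\sigma],u_j,v_j)\!\left(\tfrac{1-B}{1+B}\right)^{\!\ell}.
\end{equation*}
A path between $u_j,v_j$ lies in $G_j[U_\sigma]$ only if it is bichromatic under $\sigma$, so Lemma~\ref{lem:treespathPotts} yields $\mathbf{E}_\sigma[P_\ell(G_j[U_\sigma],u_j,v_j)]\le P_\ell(G_j,u_j,v_j)\,K^\ell$. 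Combining this with the bad-event contribution (at most $1/n^{20}$ from the previous paragraph, using the crude bound $\Corr\le 1$) and the additional $1/n^{\cons}$ error from replacing the exact Ising resampling by the approximate sampler of Theorem~\ref{thm:sampleising}, the total per-edge TV error is at most
\begin{equation*}
\frac{2}{B}\!\!\sum_{\ell=\lceil\ell_0\log n\rceil}^{L}\!\!P_\ell(G_j,u_j,v_j)\!\left(\tfrac{K(1-B)}{1+B}\right)^{\!\ell}+O(n^{-\cons})+O(n^{-20}),
\end{equation*}
where the lower bound on $\ell$ uses that $e_j$ does not lie on a short cycle.

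Finally, aggregating via the triangle inequality over $j\in[t]$ exactly as in the proof of Theorem~\ref{thm:main} (and noting that the initial sample $\sigma_t\sim\mu_{G_t}$ can be produced exactly since $G_t$ is a disjoint union of cycles of logarithmic length), the standard telescoping $\sum_{j=1}^{t}P_\ell(G_j,u_j,v_j)\le\ell C_\ell$ together with $K(1-B)/(1+B)=1/(W(1+\eta))\le 1/W$ reduces the dominant sum to $\sum_{\ell\ge\ell_0\log n}\ell C_\ell/W^\ell\le 1/(2n^\delta)$ on the good graph $G$ by Lemma~\ref{lem:numpaths}. Collecting all contributions yields $\norm{\nu_\sigma-\mu_G}_{\TV}=O(1/n^\delta)$. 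The main technical obstacle will be propagating the $(b,M)$-goodness assumption through the induction: the actual law of $\sigma_j$ is only close to $\mu_{G_j}$, not equal to it, so the badness probability must be estimated under the approximate law. This is handled by absorbing the accumulated TV discrepancy so far (at most $O(1/n^\delta)\ll 1/n^{20}$ throughout) into the bad-event budget and enlarging $M$ to keep the bad-event term negligible.
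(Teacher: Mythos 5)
Your plan follows the paper's proof almost step for step (same constants, same use of Lemmas~\ref{lem:treespathPotts},~\ref{lem:corrdecayIsing}, Theorem~\ref{thm:sampleising} and Lemma~\ref{lem:samplinganti}, same Markov-inequality bound on the probability of a $(b,M)$-bad configuration, same telescoping over cycles). Two steps, however, are stated in a way that does not work as written. First, Lemma~\ref{lem:corrdecayIsing} bounds the \emph{difference of conditional marginals} $|\pi(\sigma_u=1\mid\sigma_v=1)-\pi(\sigma_u=1\mid\sigma_v=2)|$, whereas Lemma~\ref{lem:samplinganti} needs a bound on $\Corr_G(U,u,v)$, which is a \emph{ratio} of joint probabilities minus one. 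You apply the lemma directly to $\Corr$, and you also invoke the "crude bound $\Corr\le 1$", which is false: the ratio can be as large as $(1+B^{\Delta})/B^{\Delta}$, so $\Corr$ is only bounded by $B^{-\Delta}$. The paper bridges this with Lemma~\ref{lem:f4f35f63g} (marginals are bounded below by $B^{\Delta}/(1+B^{\Delta})$), which converts the marginal-difference bound into a bound on $\Corr$ at the cost of a factor $2/B^{\Delta}$. This is only a constant factor, but your proof needs that conversion step inserted explicitly, and the conditional expectation $\mathbf{E}[P_\ell(G_j[U_\sigma],u_j,v_j)\mid \text{good}]$ must be related to the unconditional one (a factor $1/\mu_{G_j}(\text{good})$), since Lemma~\ref{lem:corrdecayIsing} is only applicable on the good event.

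The more serious issue is your closing paragraph. The inequality "$O(1/n^{\delta})\ll 1/n^{20}$" is backwards ($\delta$ is a small constant, so $1/n^{\delta}$ dominates), and the proposed fix — absorbing the accumulated TV discrepancy into the bad-event budget at each step — would be fatal if taken literally: estimating $\Pr[\sigma_j\text{ bad}]\le \mu_{G_j}(\Omega_j(b,M))+\lVert\nu_{\sigma_j}-\mu_{G_j}\rVert_{\TV}$ and then charging the second term as fresh per-step error over $t=\Theta(n)$ steps gives a total of order $n^{1-\delta}$, and enlarging $M$ does nothing to reduce it. The correct mechanism (which is what the paper does, and what distinguishes this aggregation from Theorem~\ref{thm:main}, where the update kernel is literally identical on both sides) is a per-step coupling: couple $\sigma_j$ with $\hat\sigma_j\sim\mu_{G_j}$ so they differ with probability $\lVert\nu_{\sigma_j}-\mu_{G_j}\rVert_{\TV}$, and on the event that they agree and are good, couple $\ReSample$ with $\IdealReSample$ up to an extra $1/n^{\cons}$. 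This yields the recursion $a_{j-1}\le a_j+\mu_{G_j}(\Omega_j(b,M))+1/n^{\cons}+O(\epsilon_j/B^{\Delta+1})$, in which the accumulated discrepancy $a_j$ enters exactly once with coefficient one and telescopes, rather than being re-added at every step. With that coupling spelled out and the $\Corr$ conversion inserted, your argument matches the paper's proof.
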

\begin{proof}
Since $B$ is in the uniqueness regime of the $(\Delta-1)$-ary tree and $B\neq \frac{\Delta-q}{\Delta}$,  we have that $B>\frac{\Delta-q}{\Delta}$ (cf. Remark~\ref{rem:6gtf5q}). It follows that $\frac{1-B}{B+q-1}<\frac{1}{\Delta-1}$ and therefore there exists $\epsilon'>0$ such that
\begin{equation}\label{eq:choiceepsp}
\Big(\frac{1+B}{B+q-1}+3\epsilon'\Big)\frac{1-B}{1+B}<\frac{1}{\Delta-1}.
\end{equation} 
Let $K<\frac{1+B}{B+q-1}+\epsilon'$ and $\epsilon>0$ be the constants in Lemma~\ref{lem:treespathPotts} corresponding to $\epsilon'$, and let $h',\ell'$ be positive constants such that Lemma~\ref{lem:treespathPotts} applies  for all integers $h\geq h'$ and $\ell\geq \ell'$. Fix $h$ to be any integer greater than $h'$. Let $\ell_1>0$ be the constant in Lemma~\ref{lem:graphneighbor} corresponding to the values of $\epsilon$ and $h$. Let $\delta>0$ be the constant in Lemma~\ref{lem:numpaths} corresponding to $\ell_0:=1/(5\log (\Delta-1))$ and $W:=1/(K \frac{1-B}{1+B})$ (note that \eqref{eq:choiceepsp} guarantees that $W>\Delta-1$). Let also 
\begin{equation*}
b':=(\Delta-1)\Big(\frac{1+B}{B+q-1}+2\epsilon'\Big), \mbox{ and }
b:=(\Delta-1)\Big(\frac{1+B}{B+q-1}+3\epsilon'\Big).
\end{equation*}  
Let $M_0,M_0'$ be the constants in Theorem~\ref{thm:sampleising} and Lemma~\ref{lem:corrdecayIsing}, respectively. Let $M$ be sufficiently large so that $M>\max\{2\ell_0,2M_0,2M_0'\}$ and the following inequalities hold (for all sufficiently large $n$):
\begin{equation}\label{eq:v4t4447778}
\Delta\big((\Delta-1)K\big)^{\left\lceil M\log n\right\rceil}\leq (b')^{\left\lceil M \log n\right\rceil} \mbox{ and } (b'/b)^{\left\lceil M \log n\right\rceil}\leq 1/n^{11}.
\end{equation}
Note that such an $M$ exists since $(\Delta-1)K<b'<b$. Finally, set 
\[L_0:=\lceil \ell_0 \log n\rceil=\lceil\tfrac{1}{5}\log_{\Delta-1}n\rceil, \quad L:=\left\lceil M \log n\right\rceil.\]

Taking a union bound over Lemmas~\ref{lem:numpaths} and~\ref{lem:graphneighbor}, we have that, for all sufficiently large $n$, a uniformly random $\Delta$-regular graph $G=(V,E)$ with $n$ vertices satisfies the following with probability $1-o(1)$ over the choice of the graph:
\begin{enumerate}
\item \label{it:numpaths1qa} $\displaystyle\sum_{\ell=L_0}^{L}\ell C_\ell\Big(K\frac{1-B}{1+B}\Big)^{\ell}\leq 1/(2n^\delta)$, where  $C_\ell$ is the number of cycles of length $\ell$ in $G$.
\item \label{it:lemgraphneighbor1qa} every path  $P$ in $G$ with $\ell$ vertices where $\ell_1\leq \ell\leq L$  has an $h$-graph-neighbourhood with at least $(1-\epsilon)\ell$ isolated tree components.
\end{enumerate}

Fix any $\Delta$-regular graph $G$ which satisfies Items~\ref{it:numpaths} and~\ref{it:lemgraphneighbor}. The theorem will follow by showing that the output of $\SamplePotts(G)$ (cf. Figure~\ref{alg:Potts}) is an assignment $\sigma:V\rightarrow [q]$ whose distribution $\nu_\sigma$ is within total variation distance $O(1/n^{\delta})$ from the Potts measure $\mu_G$ with parameter $B$, i.e.,
\[\norm{\nu_\sigma-\mu_{G}}_{\TV}=O(1/n^{\delta}).\]

To do this, as in the algorithm $\SamplePotts(G)$, let $e_1=\{u_1,v_1\},\hdots, e_t=\{u_t,v_t\}$ be  the edges of $G$ that do not belong to short cycles (i.e., cycles of length $\leq \ell_0\log n$). For $j\in \{0,1,\hdots,t\}$, let  $G_j$ be the subgraph $G\backslash\{e_1,\hdots,e_j\}$ and $\mu_{j}$ be the Potts distribution on $G_j$ with parameter $B$; note that the graphs $G_j$ are defined exactly as in the algorithm $\SamplePotts(G)$. We will use $\hat{\sigma}_j$ to denote a random configuration distributed according to $\mu_j$; note that $\sigma_j$ is used to denote the configuration considered by the algorithm at the beginning of the step $j$ of the algorithm and, as we shall see soon, its distribution is close to that of $\hat{\sigma}_j$ on $G_j$.  

For an integer $\ell\geq 1$, denote by $\mathcal{P}_{\ell,j}$  the set of paths of length $\ell$ that connect $u_j$ and $v_j$ in $G_j$ and by $P_{\ell,j}=|\mathcal{P}_{\ell,j}|$ the number of all such paths. Note that 
\begin{equation}\label{eq:4b5u67n969}
\mbox{$\sum^{t}_{j=1}$}\, P_{\ell,j}\leq \ell C_\ell,
\end{equation}
since every path with $\ell$ vertices connecting the endpoints of an edge $\{u_j,v_j\}$ in $G_j$ maps to a cycle with $\ell$ vertices in the initial graph $G$ (by adding the edge $\{u_j,v_j\}$), and each cycle with  $\ell$ vertices in $G$ can potentially arise at most $\ell$ times under this mapping.  Let also
\begin{equation}\label{eq:15v3v47g7h7}
\epsilon_j:=\frac{6}{n^{\cons}}+\sum^{L}_{\ell=L_0}P_{\ell,j}\Big(K\frac{1-B}{1+B}\Big)^{\ell}.
\end{equation}
Let also $\Omega_j(b,M)$ be the set of all $(b,M)$-bad configurations on $G_j$. Using the fact that $G$ satisfies Item~\ref{it:lemgraphneighbor1qa}, we will show that for all $j=1,\hdots,t$ it holds that
\begin{gather}
\mu_{G_j}\big(\Omega_j(b,M))\leq 1/n^{\cons},\label{eq:c34vtty}\\
\mathbf{E}_j\big[\Corr_{G_j}(U_{\hat{\sigma}_j},u_j,v_j)\big]\leq 2\epsilon_j/B^{\Delta},\label{eq:4bv3c3cf5}
\end{gather}
where the expectation in \eqref{eq:4bv3c3cf5} is over the choice of a random configuration $\hat{\sigma}_j$ distributed according to $\mu_j$ and over the choice of the random bichromatic class $U_{\hat{\sigma}_j}$ containing $u_j$ and $v_j$ under $\hat{\sigma}_j$.

We will prove \eqref{eq:c34vtty} and \eqref{eq:4bv3c3cf5} shortly, but let us assume them for now and conclude the proof of the theorem.  We will prove by induction that for all $j\in\{0,1,\hdots,t\}$ it holds that
\begin{equation}\label{eq:89v3tvt4tv}
\big\|\nu_{\sigma_j}-\mu_{G_j}\big\|_{\TV}\leq \frac{6(t-j)}{n^{\cons}}+\frac{4}{B^{\Delta+1}}\sum^{t}_{j'=j+1}\epsilon_{j'}.
\end{equation}
For $j=t$, the result holds trivially since $\sigma_t$ is distributed as $\mu_{G_t}$ (exactly). Assume that \eqref{eq:89v3tvt4tv} holds for $j$ where $j\in\{1,\hdots, t\}$, we will also show that it holds for $j-1$. To do this, let us consider the configuration $\sigma_{j-1}'$ defined as follows: 
\[\mbox{ if } \hat{\sigma}_{j}(u_j)\neq\hat{\sigma}_{j}(v_j), \mbox{ then } \sigma_{j-1}'=\hat{\sigma}_{j} \mbox{ else }\sigma_{j-1}'=\IdealReSample(G_j,u_j,v_j,\hat{\sigma}_{j}),\]
i.e., $\sigma_{j-1}'$ is obtained using at the $j$-th step of the algorithm the  random configuration $\hat{\sigma}_{j}$ distributed according to $\mu_{G_j}$ (exactly). In contrast, note that 
\[\mbox{ if } \sigma_{j}(u_j)\neq\sigma_{j}(v_j), \mbox{ then } \sigma_{j-1}=\sigma_{j} \mbox{ else }\sigma_{j-1}=\ReSample_{b,M}(G_j,u_j,v_j,\sigma_{j}).\] Let $\nu_{\sigma_{j-1}'}$ denote the distribution of $\sigma_{j-1}'$. Then, by Lemma~\ref{lem:samplinganti} applied to the graph $G_j$ and the inequality in \eqref{eq:4bv3c3cf5} we have that
\begin{equation}\label{eq:tvt4vv41111}
\big\|\nu_{\sigma_{j-1}'}-\mu_{G_{j-1}}\big\|_{\TV}\leq \frac{2}{B}\mathbf{E}_j[\Corr_{G_j}(U_{\hat{\sigma}_j},u_j,v_j)]\leq \frac{4}{B^{\Delta+1}}\epsilon_j.
\end{equation}
Since $\hat{\sigma}_j$ is distributed according to $\mu_{G_j}$, by the Coupling Lemma, there exists a coupling $\Pr(\cdot)$ of $\sigma_{j}$ and $\hat{\sigma}_j$ such  that 
\begin{equation}\label{eq:53g5t4gg}
\Pr(\sigma_j\neq \hat{\sigma}_j)=\big\|\nu_{\sigma_j}-\mu_{G_j}\big\|_{\TV}.
\end{equation}
Note also that, for a configuration $\eta\notin \Omega_j(b,M)$, conditioned on $\sigma_{j}=\eta$ and $\hat{\sigma}_{j}=\eta$, we can couple $\sigma_{j-1}$ and $\sigma_{j-1}'$ so that $\sigma_{j-1}\neq \sigma_{j-1}'$ with probability at most $1/n^{\cons}$. To see this,  if $\eta(u_j)=\eta(v_j)$ then we trivially have $\sigma_{j-1}'=\sigma_{j-1}=\eta$. Otherwise, $\sigma_{j-1}$ and $\sigma_{j-1}'$ are produced by first choosing a random bichromatic class containing $u_j$ and $v_j$ under $\eta$, which we can couple so that it is the same in both $\ReSample_{b,M}(G_j,u_j,v_j,\eta)$ and $\IdealReSample(G_j,u_j,v_j,\eta)$. Denote this class by $U$ and let $c_1=\eta(u_j)$, $c_2=\eta(v_j)$. Then, from the definition of $\IdealReSample(G_j,u_j,v_j,\eta)$ we have that the distribution of $\sigma'_{j-1}(U)$ is given by $\pi^{c_1,c_2}_{G_j[U], u_j,v_j}$, while from $\ReSample_{b,M}(G_j,u_j,v_j,\eta)$ we have that the distribution of $\sigma_{j-1}(U)$ is $1/n^{\cons}$-close to $\pi^{c_1,c_2}_{G_j[U], u_j,v_j}$. We therefore have that 
\begin{equation}\label{eq:4gvt45v45h6}
\Pr\big(\sigma_{j-1}\neq \sigma_{j-1}'\mid \sigma_{j}=\hat{\sigma}_j\notin \Omega_j(b,M)\big)\leq 1/n^{\cons}.
\end{equation}
Invoking the Coupling Lemma again, we therefore obtain that 
\begin{align}
\big\|\nu_{\sigma_{j-1}}-\nu_{\sigma_{j-1}'}\big\|_{\TV}&\leq \Pr(\sigma_j\neq \hat{\sigma}_j)+\Pr\big(\sigma_j=\hat{\sigma}_j\in \Omega_j(b,M)\big)\notag\\
&\hskip 5cm+\Pr\big(\sigma_{j-1}\neq \sigma_{j-1}'\mid \sigma_j=\hat{\sigma}_j\notin \Omega_j(b,M)\big)\notag\\
&\leq \big\|\nu_{\sigma_j}-\mu_{G_j}\big\|_{\TV}+\mu_{G_j}(\Omega_j(b,M))+ 1/n^{\cons},\label{eq:45gt4222qa}
\end{align}
where in the last inequality we used \eqref{eq:53g5t4gg}, \eqref{eq:4gvt45v45h6} and 
\[\Pr\big(\sigma_j=\hat{\sigma}_j\in \Omega_j(b,M)\big)\leq \Pr\big(\hat{\sigma}_j\in \Omega_j(b,M)\big)=\mu_{G_j}(\Omega_j(b,M)).\]
Combining \eqref{eq:45gt4222qa} with the inductive hypothesis \eqref{eq:89v3tvt4tv} and \eqref{eq:c34vtty}, we obtain 
\[\big\|\nu_{\sigma_{j-1}}-\nu_{\sigma_{j-1}'}\big\|_{\TV}\leq \frac{6(t-j+1)}{n^{\cons}}+\frac{4}{B^{\Delta+1}}\sum^{t}_{j'=j+1}\epsilon_{j'}.\]
Using this and \eqref{eq:tvt4vv41111} completes (via the triangle inequality) the inductive step, therefore completing the proof of  \eqref{eq:89v3tvt4tv} for all $j\in \{0,1,\hdots,t\}$ as wanted.

We have $G_0=G$ and $\sigma_0=\sigma$, so \eqref{eq:89v3tvt4tv} for $j=0$ gives
\[\big\|\nu_{\sigma}-\mu_{G}\big\|_{\TV}\leq \frac{6t}{n^{\cons}}+\frac{4}{B^{\Delta+1}}\sum^{t}_{j=1}\epsilon_j\leq \frac{1}{n^{8}}+\frac{4}{B^{\Delta+1}}\sum^{t}_{j=1}\sum^{L}_{\ell=L_0}P_{\ell,j}\Big(K\frac{1-B}{1+B}\Big)^{\ell},\]
where the last inequality holds for all sufficiently large $n$ using the values of $\epsilon_j$ from \eqref{eq:15v3v47g7h7} (and the crude bound $t\leq |E|\leq \Delta n$). Using \eqref{eq:4b5u67n969}, we therefore have that 
\[\big\|\nu_{\sigma}-\mu_{G}\big\|_{\TV}\leq \frac{1}{n^{8}}+\frac{4}{B^{\Delta+1}}\sum^{L}_{\ell=L_0}\ell C_\ell\Big(K\frac{1-B}{1+B}\Big)^{\ell}\leq \frac{10}{B^{\Delta+1}n^{\delta}},\]
where the last inequality follows from our assumption that $G$ satisfies Item~\ref{it:numpaths1qa} (and by assuming that $n$ is sufficiently large). This completes the proof of Theorem~\ref{thm:mainPotts2}, modulo the proofs of \eqref{eq:c34vtty} and \eqref{eq:4bv3c3cf5} which are given below.

To prove \eqref{eq:c34vtty} and \eqref{eq:4bv3c3cf5}, fix any value $j\in\{1,\hdots,t\}$. Since $G_j$ is a subgraph of $G$ and $G$ satisfies Item~\ref{it:lemgraphneighbor1qa} every path  $P$ in $G_j$ with $\ell$ vertices where $\ell\in [\ell_1,L]$  has an $h$-graph-neighbourhood with at least $(1-\epsilon)\ell$ isolated tree components.  By Lemma~\ref{lem:treespathPotts} which, recall, applies for all $\ell\geq \ell'$ (where $\ell'$ is the constant specified in the beginning of the proof), we therefore have that 
\begin{equation}\label{eq:v3t4tv4y12}
\mu_{G_j}(\mbox{$P$ is bichromatic})\leq K^{\ell} \mbox{\ \  for any path $P$ in $G_j$ with $\ell$ vertices, $\ell\in [\max\{\ell_1,\ell'\},L]$.}
\end{equation}
We are now ready to prove \eqref{eq:c34vtty}. For a vertex $v$, let 
$X_v$ be the r.v. that counts the number of bichromatic paths with $L$ vertices that start from $v$ in a random configuration $\sigma\sim \mu_{G_j}$. Since $G_j$ has maximum degree $\leq \Delta$,  there are at most $\Delta(\Delta-1)^{L-2}$ paths starting from $v$ with $L$ vertices and each of them is bichromatic with probability at most $K^{L}$ from \eqref{eq:v3t4tv4y12}. Hence, the expectation of $X_v$ is at most 
 \[\Delta(\Delta-1)^{L-2}K^{L}\leq (b')^{L},\]
where the last inequality follows from the choice of $M$ (cf. \eqref{eq:v4t4447778}).
By Markov's inequality we therefore have that for all $v\in V$ it holds that
\[\mu_{G_j}(X_v> b^{L})\leq (b'/b)^{L}\leq 1/n^{11},\]
where the last inequality also follows from the choice of $M$ (cf. \eqref{eq:v4t4447778}). Note that for $(b,M)$-bad configurations, i.e., configurations in $\Omega_j(b,M)$,  at least one of the events $X_v>b^{L}$ occurs for some $v\in V$, and therefore  by a union bound over $v \in V$ we have that 
\[\mu_{G_j}\big(\Omega_j(b,M)\big)\leq \sum_{v\in V} \mu_{G_j}\big(X_v> b^{L}\big)\leq 1/n^{10}.\]
This finishes the proof of \eqref{eq:c34vtty}.

To prove \eqref{eq:4bv3c3cf5}, for a set $U\subseteq V$ such that $u_j,v_j\in U$, let 
\[\widehat{\Corr}_{G_j}(U,u_j,v_j):=\big|\pi_{G_j[U]}(\eta_{u_j}=1\mid \eta_{v_j}=1)-\pi_{G_j[U]}(\eta_{u_j}=1\mid \eta_{v_j}=2)\big|\]
Note, by the symmetry of the states in the Ising model, we have that $\pi_{G[U]}\big(\tau_{v_j}=1 \big)=\pi_{G[U]}\big(\tau_{v_j}=2 \big)=1/2$ and therefore
\[\Corr_{G_j}(U,u_j,v_j)=\bigg|\frac{\pi_{G[U]}\big(\tau_{u_j}=1,\tau_{v_j}=1 \big)}{\pi_{G[U]}\big(\tau_{u_j}=1,\tau_{v_j}=2\big)}
-1\bigg|=\bigg|\frac{\pi_{G[U]}\big(\tau_{u_j}=1\mid \tau_{v_j}=1\big)}{\pi_{G[U]}\big(\tau_{u_j}=1\mid \tau_{v_j}=2\big)}
-1\bigg|.\]
By Lemma~\ref{lem:f4f35f63g} and since $G_j$ has maximum degree  $\leq \Delta$, we obtain that 
\[\Corr_{G_j}(U,u_j,v_j)\leq \frac{2}{B^{\Delta}}\widehat{\Corr}_{G_j}(U,u_j,v_j),\]
and therefore, to prove \eqref{eq:4bv3c3cf5} we only need to show that
\begin{equation}\label{eq:v4tvy4444412}
\mathbf{E}_j\big[\widehat{\Corr}_{G_j}(U_{\hat{\sigma}_j},u_j,v_j)\big]\leq \epsilon_j.
\end{equation}
For convenience, let $\mathcal{F}_j$ be the event that the Potts configuration $\hat{\sigma}_j$ is $(b,M)$-bad, i.e., $\mathcal{F}_j=\{\hat{\sigma}_j\in \Omega_j(b,M)\}$, and denote by $\overline{\mathcal{F}_j}$  the event that $\hat{\sigma}_j\notin \Omega_j(b,M)$.
\begin{align}
\mathbf{E}_j\big[\widehat{\Corr}_{G_j}(U_{\hat{\sigma}_j},u_j,v_j)\big]&=\mathbf{E}_j\big[\widehat{\Corr}_{G_j}(U_{\hat{\sigma}_j},u_j,v_j) \mid \mathcal{F}_j\big]\mu_j\big(\mathcal{F}_j\big)+\mathbf{E}_j\big[\widehat{\Corr}_{G_j}(U_{\hat{\sigma}_j},u_j,v_j) \mid \overline{\mathcal{F}_j}\big]\mu_j\big(\overline{\mathcal{F}_j}\big)\notag\\
& \leq \frac{1}{n^{\cons}}+\mathbf{E}_j\big[\widehat{\Corr}_{G_j}(U_{\hat{\sigma}_j},u_j,v_j) \mid \overline{\mathcal{F}_j}\big],\label{eq:t4v4vvy4bb121}
\end{align}
where the inequality follows from the bounds $\widehat{\Corr}_{G_j}(U_{\hat{\sigma}_j},u_j,v_j)\leq 1$ and $\mu_j\big(\mathcal{F}_j\big)\leq 1/n^{\cons}$ from \eqref{eq:c34vtty}. 
Recall that $\mathcal{P}_{\ell,j}$ denotes the set of paths of length $\ell$ that connect $u_j$ and $v_j$ in $G_j$. Consider also the r.v.
\[Y_j=\frac{1}{n^{\cons}}+\sum^{L}_{\ell=L_0}\sum_{P\in \mathcal{P}_{\ell,j}}\mathbf{1}\{\mbox{$P$ is bichromatic}\}\times \Big(\frac{1-B}{1+B}\Big)^{\ell}.\]
Recall that $U_{\hat{\sigma}_j}$ is a bichromatic class under $\hat{\sigma}_j$ which contains $u_j$ and $v_j$ (chosen uniformly at random among the set of all such classes if there is more than one).  Conditioned on the event $\hat{\sigma}_j\notin \Omega_j(b,M)$, we have that every bichromatic component under $\hat{\sigma}_j$ has average growth $b$ up to depth $L=\left\lceil M\log n\right\rceil$ and therefore, irrespective of  the random choice of  $U_{\hat{\sigma}_j}$, we obtain by Lemma~\ref{lem:corrdecayIsing} that $\widehat{\Corr}_{G_j}(U_{\hat{\sigma}_j},u_j,v_j)\leq Y_j$. Note, in applying Lemma~\ref{lem:corrdecayIsing}, we used that $P_{\ell,j}=0$ for all $\ell\in [1,L_0]$; this holds because  the edge $\{u_j,v_j\}$ does not belong to a short cycle in $G$ (and therefore in $G_{j-1}$ as well). We thus have that
\begin{equation}\label{eq:3r3rfdeeetyy}
\mathbf{E}_j\big[\widehat{\Corr}_{G_j}(U_{\hat{\sigma}_j},u_j,v_j) \mid \overline{\mathcal{F}_j}\big]\leq \mathbf{E}_j\big[Y_j\mid \overline{\mathcal{F}_j}\big]\leq \frac{\mathbf{E}_j[Y_j]}{\mu_{G_j}\big(\overline{\mathcal{F}_j}\big)}\leq \Big(1+\frac{2}{n^{\cons}}\Big)\mathbf{E}_j[Y_j],
\end{equation}
where the last inequality follows from $1/\mu_{G_j}\big(\overline{\mathcal{F}_j}\big)\leq 1/(1-1/n^{\cons})\leq 1+2/n^{\cons}$ (cf. \eqref{eq:c34vtty}).  Using \eqref{eq:v3t4tv4y12} and the fact that $L_0\geq \max\{\ell_1,\ell'\}$ for all sufficiently large $n$, we have that
\begin{equation}\label{eq:byyu345uunn}
\mathbf{E}_j[Y_j]\leq \frac{1}{n^{\cons}}+ \sum^{L}_{\ell=L_0}P_{\ell,j}\Big(K\frac{|B-1|}{B+1}\Big)^{\ell}.
\end{equation}
Since $G$ satisfies Item~\ref{it:numpaths1qa}, we have that the sum in \eqref{eq:byyu345uunn} is less than $1/n^{\delta}$ and hence \eqref{eq:3r3rfdeeetyy} and \eqref{eq:byyu345uunn} give that for all sufficiently large $n$ it holds that
\begin{equation*}
\mathbf{E}_j\big[\widehat{\Corr}_{G_j}(U_{\hat{\sigma}_j},u_j,v_j) \mid \overline{\mathcal{F}_j}\big]\leq \frac{1}{n^{\cons}}+\sum^{L}_{\ell=L_0}P_{\ell,j}\Big(K\frac{|B-1|}{B+1}\Big)^{\ell}+\frac{2}{n^{\cons}}\Big(\frac{1}{n^{\cons}}+\frac{1}{n^{\delta}}\Big).
\end{equation*}
Combining this with \eqref{eq:t4v4vvy4bb121} yields \eqref{eq:v4tvy4444412}, therefore concluding the proof of \eqref{eq:4bv3c3cf5}. This finishes the proof of Theorem~\ref{thm:mainPotts2}.
\end{proof}

\section{Analyzing RC on graphs with tree-like structure}\label{sec:qaz234}
In this section, we give the proof of  Lemma~\ref{lem:treespath}. We begin by revisiting the uniqueness results of H{\"a}ggstr{\"o}m \cite{Haggstrom} on the $(\Delta-1)$-ary tree; then, we use these results to obtain Lemma~\ref{lem:treespath} in Section~\ref{sec:errr3ded}.

\subsection{Uniqueness on the $(\Delta-1)$-ary tree}\label{sec:unique}

Fix an integer $\Delta\geq 3$. In this section, we review the results of H{\"a}ggstr{\"o}m \cite{Haggstrom} about uniqueness of random-cluster measures on the infinite $(\Delta-1)$-ary tree. In fact, it will be more relevant for us to consider the case of finite trees of large depth (rather than the infinite tree itself); this approach has also been followed in \cite[Chapter 10]{Grimmettbook} for the case $\Delta=3$. 

As in Section~\ref{sec:uniqueness}, let $\TreeD$ denote the infinite $(\Delta-1)$-ary tree with root vertex $\rho$. For integer $h\geq 0$, let $T_h=(V_h,E_h)$ denote the subtree of $\TreeD$ induced by the vertices at distance $\leq h$ from $\rho$ and let $L_h$ denote the leaves of $T_h$. We will consider the random-cluster distribution on $T_h$ with the so-called wired boundary condition where all the leaves are identified into a single vertex or, equivalently, all the leaves are connected to a vertex ``at infinity''.\footnote{To motivate the wired boundary condition, consider the case where we have a connected graph $G$ and we want to upper bound the probability that a vertex $v$ belongs to a large cluster in an RC configuration. Using the monotonicity of RC (cf. Lemma~\ref{lem:monotonicity}), we can restrict our attention to the graph induced by the ball of radius $h$  around the vertex $v$, by conditioning all  edges outside the ball to be open. This conditioning has exactly the same effect as the wiring we describe here.} In particular, for $S\subseteq E_h$, let $k^*(S)$ denote the number of connected components in the graph with vertex set $V_h \cup \{\infty\}$ and edge set $S\cup (L_h\times \{\infty\})$; the purpose of the extra vertex $\infty$  and the edges $L_h\times \{\infty\}$ connecting the leaves to $\infty$ is to capture the wired boundary condition that all leaves are in the same cluster. The ``wired'' RC distribution on $T_h$ is given by
\begin{equation}\label{eq:Zstarhtree}
\varphi^*_h(S) = \frac{p^{|S|}(1-p)^{|E_h\backslash S|}q^{k^*(S)}}{Z^*_h}, \mbox{ where } Z^*_h=\sum_{S\subseteq E_h} p^{|S|}(1-p)^{|E_h\backslash S|}q^{k^*(S)}.
\end{equation} 
Denote by $\rho \conn \infty$ the event that there exists an open path connecting the root $\rho$ to infinity (or, equivalently, that the root is connected via an open path to some leaf).  The following lemma is implicitly proved in \cite{Haggstrom} in the context of the infinite tree, we give an alternative proof following the approach in \cite[Chapter 10]{Grimmettbook} which is carried out for the case $\Delta=3$. (The proof is for the sake of completeness, and the reader might want to skip this.)
\begin{lemma}[{\cite[Theorems 1.5 \& 1.6]{Haggstrom}}]\label{lem:hagguni}
Let $\Delta\geq 3$, $q\geq 1$ and $p\in[0,1]$. Then, in the wired RC distribution on $T_h$, the probability of the event $\rho \conn \infty$ converges as $h$ grows, i.e.,
\[\lim_{h\rightarrow \infty} \varphi_h^*(\rho \conn \infty)=\varphi^*, \mbox{ where } \varphi^*\in [0,1]. \]
For $p_c(q,\Delta)$ as in \eqref{eq:pcq}, it holds that $\varphi^*=0$ if $p< p_c(q,\Delta)$ and $\varphi^*>0$ if $p> p_c(q,\Delta)$.  
\end{lemma}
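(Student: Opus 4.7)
The plan is to prove Lemma~\ref{lem:hagguni} in two stages: first, establish existence of the limit $\varphi^* := \lim_h \varphi_h^*(\rho \conn \infty)$ by a monotonicity argument; then, characterize the value of $\varphi^*$ through a tree recursion whose bifurcation occurs exactly at $p_c(q,\Delta)$.

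For existence, I will show that $\theta_h := \varphi_h^*(\rho \conn \infty)$ is non-increasing in $h$. The idea is to compare the wired model on $T_{h+1}$ with that on $T_h$: if, in $T_{h+1}$, we condition on every edge between level $h$ and level $h+1$ being open, then each level-$h$ vertex becomes identified with the boundary vertex $\infty$, which is precisely the wired model on $T_h$. Hence
\[
\varphi_h^*(\rho \conn \infty) \;=\; \varphi_{h+1}^*\bigl(\rho \conn \infty \,\bigm|\, \text{all level-$h$-to-$(h{+}1)$ edges open}\bigr).
\]
Since $\{\rho \conn \infty\}$ is an increasing event, Lemma~\ref{lem:monotonicity} gives $\theta_h \geq \theta_{h+1}$. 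The sequence is bounded in $[0,1]$, so it converges to some $\varphi^* \in [0,1]$, proving the first part.

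For the second part, I will derive a tree recursion. Although the RC model is not a Markov random field, the wired boundary acts as a common parent $\infty$ to which every subtree's leaves are attached; so once we reveal the status of the $\Delta-1$ edges incident to $\rho$ together with the connectivity-to-$\infty$ information for each child, the $\Delta-1$ pendant subtrees decouple into conditionally independent wired RC models on $T_{h-1}$. Introducing a ``message'' $y_h$ that encodes (via a ratio of the partition functions in \eqref{eq:Zstarhtree} restricted to ``root-to-$\infty$'' versus ``root-not-to-$\infty$'' events) the marginal at the root of a depth-$(h-1)$ subtree, one obtains an iteration $y_h = F(y_{h-1})$ in which the cluster-count bookkeeping at $\rho$ produces exactly the rational function behind $h(y) = (y-1)(y^{\Delta-1}+q-1)/(y^{\Delta-1}-y)$. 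A direct analysis of the fixed points of $F$ on $[1,\infty)$ then shows that the trivial fixed point $y=1$ (which translates back to $\varphi^*=0$) is the unique one exactly when $p<p_c(q,\Delta)$, while a second fixed point $y^*>1$ (corresponding to $\varphi^*>0$) appears precisely when $p>p_c(q,\Delta)$. Combined with monotonicity of the iteration (again from Lemma~\ref{lem:monotonicity}, which guarantees $y_h$ is monotone and therefore converges to the relevant fixed point), this yields the claimed dichotomy.

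The main obstacle is the bookkeeping in the second step: the global $q^{k^*(S)}$ factor means that opening an edge from $\rho$ to a child $v_i$ changes the cluster count by $0$ or by $-1$ depending on whether $v_i$ was already connected to $\infty$ inside its subtree, and these $\pm 1$ corrections propagate through the recursion so that writing $F$ in closed form requires careful case analysis over which subsets of the $\Delta-1$ edges at $\rho$ are open. Once the recursion is set up correctly, the emergence of $p_c(q,\Delta)$ as the bifurcation threshold is a direct computation: the equation $y=F(y)$ can be rearranged to $p = 1-1/(1+h(y))$, and taking the infimum of the right-hand side over $y>1$ recovers exactly the formula in \eqref{eq:pcq}.
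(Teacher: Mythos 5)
Your proposal is correct and follows essentially the same route as the paper: convergence of $\varphi_h^*(\rho\conn\infty)$ via Lemma~\ref{lem:monotonicity} by conditioning the leaf-level edges open, followed by a one-dimensional tree recursion whose fixed-point equation rearranges to $p=1-1/(1+h(y))$, so that the dichotomy at $p_c(q,\Delta)$ falls out of the infimum in \eqref{eq:pcq}. The only cosmetic difference is that you carry the ratio variable $y_h$ through the recursion from the start, whereas the paper iterates $\varphi_h^*(\rho\conn\infty)$ directly and performs the substitution $u=\frac{t+p(1-1/q)\varphi^*}{t-\frac{p}{q}\varphi^*}$ only at the fixed-point stage.
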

\begin{remark}
For the sake of completeness, we remark that at criticality, i.e., when $p=p_c(q,\Delta)$, it holds  that $\varphi^*=0$ iff $1\leq q\leq 2$.
\end{remark}
\begin{proof}[Proof of Lemma~\ref{lem:hagguni}]
For convenience, let $d:=\Delta-1$. Let $Z^*_{h,\infty}$ denote the contribution to $Z^*_h$ from $S\subseteq E_h$ such that $\rho$ is connected to infinity and $Z^*_{h,\neg\infty}$ from $S\subseteq E_h$ such that $\rho$ is not connected to infinity. Note that $\varphi^*_h(\rho \conn \infty)=\frac{Z^*_{h,\infty}}{Z^*_{h,\infty}+Z^*_{h,\neg\infty}}$ for all $h\geq 0$. Moreover, with $t:=\frac{p}{q}+1-p$, we have that 
\begin{align*}
Z^*_{h+1,\infty}&=q\sum^{d}_{k=1}\binom{d}{k}\Big(\frac{Z^*_{h,\infty}}{q}\Big)^k\Big(t\frac{Z^*_{h,\neg\infty}}{q}\Big)^{d-k}\big(1-(1-p)^k\big)\\
&=q\Big(\frac{Z^*_{h,\infty}}{q}+t\frac{Z^*_{h,\neg\infty}}{q}\Big)^d-q\Big(\frac{(1-p)Z^*_{h,\infty}}{q}+t\frac{Z^*_{h,\neg\infty}}{q}\Big)^d\\
Z^*_{h+1,\neg\infty}&=q^2\sum^d_{k=0}\binom{d}{k}\Big(\frac{(1-p)Z^*_{h,\infty}}{q}\Big)^k\Big(t\frac{Z^*_{h,\neg\infty}}{q}\Big)^{d-k}=q^2\Big(\frac{(1-p)Z^*_{h,\infty}}{q}+t\frac{Z^*_{h,\neg\infty}}{q}\Big)^d.
\end{align*}
It follows that
\[\varphi^*_{h+1}(\rho \conn \infty)=f\big(\varphi^*_{h}(\rho \conn \infty)\big),\mbox{ where }f(x):=\frac{\Big(t+p(1-\frac{1}{q})x\Big)^d-\Big(t-\frac{p}{q}x\Big)^d}{\Big(t+p(1-\frac{1}{q})x\Big)^d+(q-1)\Big(t-\frac{p}{q}x\Big)^d}.\]
Note that\footnote{This follows from Lemma~\ref{lem:monotonicity}:  the event $\rho\conn\infty$ is increasing and $\varphi^*_{h}$ can be obtained from $\varphi^*_{h+1}$ by conditioning the edges incident to the leaves of $T_{h+1}$ to be open.} $\varphi^*_{h+1}(\rho \conn \infty)\leq \varphi^*_{h}(\rho \conn \infty)$ for all $h\geq 0$ and therefore, as $h$ goes to infinity, $\varphi^*_{h}(\rho \conn \infty)$ converges to a limit $\varphi^*$ which is the largest root in the interval $[0,1]$ of the equation 
\[\varphi^*=f(\varphi^*).\]
Consider the transformation $u=\frac{t + p (1 - 1/q) \varphi^*}{t - \frac{p}{q} \varphi^*}$, whose inverse transformation is given by $\varphi^*=\frac{(p+q(1-p)) (u-1)}{p (u+q-1)}$.  It follows that $u$ is the largest root in the interval $[1,\frac{1}{1-p}]$ of the equation 
\begin{equation}\label{eq:errbthw4}
\frac{\big(p+q(1-p)\big) (u-1)}{p (u+q-1)}=\frac{u^d-1}{u^d+(q-1)} \mbox{ or equivalently } p=1-\frac{1}{1+h(u)},
\end{equation}
where $h(y)=\frac{(y - 1) (y^{d} + q - 1)}{(y^{d} - y)}$ is the same function as in \eqref{eq:pcq}.

We next examine for which values of $p$ it holds that the root $u$ of \eqref{eq:errbthw4} is strictly larger than 1  (note that $\varphi^*>0$ iff $u>1$). Recall from \eqref{eq:pcq} that the value of $p_c(q,\Delta)$ is given by
\begin{equation*}\tag{\ref{eq:pcq}}
p_c(q,\Delta)=1-\frac{1}{1+\inf_{y> 1} h(y)},\mbox{ where } h(y):=\frac{(y - 1) (y^{d} + q - 1)}{(y^{d} - y)}.
\end{equation*}
First, consider the case $p<p_c(q,\Delta)$. For the sake of contradiction, assume that $u>1$. Then,  we obtain from \eqref{eq:errbthw4} that 
\[p=1-\frac{1}{1+h(u)}>1-\frac{1}{1+\inf_{y>1}h(y)}=p_c(q,\Delta),\] 
contradiction. Thus, $\varphi^*=0$ for all $p<p_c(q,\Delta)$. 

Next, consider the case $p>p_c(q,\Delta)$. Using the continuity of the function $h$ in the interval $(1,\infty)$, we obtain that there exists $y>1$ such that $p=1-\frac{1}{1+h(y)}$. Note that 
\[1+h(y)=\frac{1}{1-p}\] 
and we have that $y\leq 1+h(y)$ for all $y>1$, so in fact $y\in (1,\frac{1}{1-p}]$. It follows that $u$,  the largest root in the interval $[1,\frac{1}{1-p}]$ of the equation \eqref{eq:errbthw4}, satisfies $u\geq y>1$, and therefore $\varphi^*>0$.
\end{proof}
We will need the following corollary of Lemma~\ref{lem:hagguni} for a slightly modified tree where the root has degree $\Delta-2$. In particular, for an integer $h\geq 1$, let $\hat{T}_{h}=\big(\hat{V}_h,\hat{E}_h\big)$ be the tree obtained by taking $\Delta-2$ disjoint copies of $T_{h-1}$ and joining their root vertices into a new vertex $\rho$ (for $h=0$, we let $\hat{T}_{h}$ to be the single-vertex graph).  Analogously to \eqref{eq:Zstarhtree}, we use $\hat{L}_h$ to denote the leaves of the tree and $\hat{\varphi}^*_{h}$ to denote the wired measure on $\hat{T}_{h}$ where all the leaves are wired to infinity, i.e.,
\begin{equation}\label{eq:Zstarhtreemod}
\hat{\varphi}^*_h(S) = \frac{p^{|S|}(1-p)^{|\hat{E}_h\backslash S|}q^{k^*(S)}}{\hat{Z}^*_h}, \mbox{ where } \hat{Z}^*_h=\sum_{S\subseteq \hat{E}_h} p^{|S|}(1-p)^{|\hat{E}_h\backslash S|}q^{k^*(S)},
\end{equation} 
and $k^*(S)$ denotes the number of connected components in the graph $\big(\hat{V}_h \cup \{\infty\}, S\cup (\hat{L}_h\times \infty)\big)$.

\begin{corollary}\label{cor:modified}
Let $\Delta\geq 3$, $q\geq 1$ and $p<p_c(q,\Delta)$. Then, in the wired RC distribution on $\hat{T}_h$, the probability of the event $\rho \conn \infty$ converges to 0 as $h$ grows, i.e.,
\[\lim_{h\rightarrow \infty} \hat{\varphi}_h^*(\rho \conn \infty)=0. \]
\end{corollary}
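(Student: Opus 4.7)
The plan is to derive a one-step recurrence for $\hat{\varphi}^*_h(\rho \conn \infty)$ that expresses this probability in terms of $\varphi^*_{h-1}(\rho \conn \infty)$, entirely analogous to the recurrence $\varphi^*_{h+1}(\rho \conn \infty) = f\bigl(\varphi^*_{h}(\rho \conn \infty)\bigr)$ that was derived in the proof of Lemma~\ref{lem:hagguni}. The key structural observation is that, in $\hat{T}_h$, the root $\rho$ has $d' := \Delta - 2$ children and the subtree hanging off each of them is an exact copy of $T_{h-1}$ (with its own leaves wired to $\infty$), whereas in $T_h$ the root has $d = \Delta - 1$ such children. The partition-function computation in the proof of Lemma~\ref{lem:hagguni} depends on the tree structure only through this branching number, so the same decomposition into "root connected to infinity through child $i$" vs.\ "not" carries over verbatim with $d$ replaced by $d'$.

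Concretely, I would first repeat the two-case decomposition for $\hat{Z}^*_h = \hat{Z}^*_{h,\infty} + \hat{Z}^*_{h,\neg\infty}$ using the expressions for $Z^*_{h-1,\infty}$ and $Z^*_{h-1,\neg\infty}$ that encode the state of each of the $d'$ independent $T_{h-1}$-subtrees, together with $t = \frac{p}{q} + 1 - p$. The same algebra as in Lemma~\ref{lem:hagguni} then yields
\[
\hat{\varphi}^*_h(\rho \conn \infty) \;=\; \hat{f}\bigl(\varphi^*_{h-1}(\rho \conn \infty)\bigr),
\]
where $\hat{f}$ is the rational function obtained from $f$ by substituting $d'$ for $d$:
\[
\hat{f}(x) \;=\; \frac{\bigl(t + p(1-\tfrac{1}{q})x\bigr)^{d'} - \bigl(t - \tfrac{p}{q}x\bigr)^{d'}}{\bigl(t + p(1-\tfrac{1}{q})x\bigr)^{d'} + (q-1)\bigl(t - \tfrac{p}{q}x\bigr)^{d'}}.
\]

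With the recurrence in hand, I would finish by two elementary observations. First, $\hat{f}$ is continuous on $[0,1]$, and $\hat{f}(0) = 0$, since both terms in the numerator evaluate to $t^{d'}$ at $x = 0$. Second, the assumption $p < p_c(q,\Delta)$ together with Lemma~\ref{lem:hagguni} applied to $T_{h-1}$ gives $\varphi^*_{h-1}(\rho \conn \infty) \to 0$ as $h \to \infty$. Combining these two facts with the displayed recurrence yields
\[
\lim_{h\to\infty} \hat{\varphi}^*_h(\rho \conn \infty) \;=\; \hat{f}(0) \;=\; 0,
\]
which is the claim of the corollary.

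The only mild obstacle is to justify the recurrence rigorously, i.e.\ to confirm that the only substantive change between $T_h$ and $\hat{T}_h$ in the partition-function derivation is the branching number at the root; once this is verified (by tracking that each of the $d'$ subtrees still contributes factors $Z^*_{h-1,\infty}/q$ and $t\,Z^*_{h-1,\neg\infty}/q$ exactly as in Lemma~\ref{lem:hagguni}), everything else is immediate from continuity at $0$.
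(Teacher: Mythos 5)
Your proposal is correct, but it takes a genuinely different (and more computational) route than the paper. The paper's proof is a two-line monotonicity argument: since $\hat{T}_h$ is a subgraph of $T_h$, the wired measure $\hat{\varphi}^*_h$ is exactly $\varphi^*_h(\,\cdot\mid E_h\backslash\hat{E}_h \text{ closed})$, and since $\rho\conn\infty$ is an increasing event, Lemma~\ref{lem:monotonicity} gives $\hat{\varphi}^*_h(\rho\conn\infty)\leq \varphi^*_h(\rho\conn\infty)$, which tends to $0$ by Lemma~\ref{lem:hagguni}. You instead re-derive the root recursion for the modified tree, obtaining $\hat{\varphi}^*_h(\rho\conn\infty)=\hat{f}\bigl(\varphi^*_{h-1}(\rho\conn\infty)\bigr)$ with the branching exponent $d'=\Delta-2$ in place of $d=\Delta-1$, and conclude via continuity of $\hat{f}$ at $0$ and $\hat{f}(0)=0$. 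Your structural observation is sound: the derivation of $Z^*_{h+1,\infty}$ and $Z^*_{h+1,\neg\infty}$ in the proof of Lemma~\ref{lem:hagguni} depends on the tree only through the number of $T_{h-1}$-subtrees hanging from the root, each contributing the factors $Z^*_{h-1,\infty}/q$ and $t\,Z^*_{h-1,\neg\infty}/q$, so the substitution $d\mapsto d'$ is legitimate; one should also note that the denominator of $\hat{f}$ is bounded away from $0$ on $[0,1]$ (each of its two terms is at least $(1-p)^{d'}>0$), so continuity at $0$ is not an issue. What your approach buys is an exact one-step formula for $\hat{\varphi}^*_h(\rho\conn\infty)$; what the paper's approach buys is brevity and robustness, since it requires no algebra at all and reuses the stochastic-domination machinery already needed elsewhere. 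Both are valid proofs of the corollary.
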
  
\begin{proof}
Let $T_h=(V_h,E_h)$ and $\hat{T}_h=(\hat{V}_h,\hat{E}_h)$. Since $\hat{T}_h$ is a subgraph of $T_h$, we may assume that $\hat{V}_h\subseteq V_h$ and  $\hat{E}_h\subseteq E_h$. Observe that 
\[\hat{\varphi}_h^*(\cdot)= \varphi_h^*(\,\cdot\mid E_h\backslash \hat{E}_h \mbox{ closed})\]
Since the event $\rho \conn \infty$ is increasing, we have that
\[ \varphi_h^*(\rho \conn \infty\mid E_h\backslash \hat{E}_h \mbox{ closed})\leq \varphi_h^*(\rho \conn \infty).\]
It follows that $\hat{\varphi}_h^*(\rho \conn \infty)\leq \varphi_h^*(\rho \conn \infty)$. Using Lemma~\ref{lem:hagguni}  for $p<p_c(q,\Delta)$, we obtain the corollary.
\end{proof}
 
\subsection{Analysing RC on disjoint trees whose roots are connected via a path}\label{sec:errr3ded}

We are now in position to prove Lemma~\ref{lem:treespath} which we restate here for convenience. 
\begin{lem:treespath}
\statelemtreespath
\end{lem:treespath}
Let us first give the rough idea of the proof. For simplicity, we will consider the somewhat special case where $\epsilon=0$, but the argument can be easily adapted to account for small positive $\epsilon>0$. In particular, let $H$ be the graph induced by the $h$-graph-neighbourhood of the path $P$, together with the edges of $P$. For $\epsilon=0$, the assumptions of the lemma imply that $H$ is a union of $\ell$ disjoint trees, each\footnote{A technical detail, which is not important for this rough outline, is that we have to be a bit careful with the endpoints of the path, since the degree regularity of $G$ implies that the trees ``hanging'' from the endpoints of the $P$ have to be copies of $(\Delta-1)$-ary tree $T_h$ (where the root has degree $\Delta-1$ instead of $\Delta-2$ which is the case for $\hat{T}_h$).} isomorphic to $\hat{T}_h$,  whose roots are connected by a path.  Using the monotonicity of the RC distribution (cf. Lemma~\ref{lem:monotonicity}), to upper bound the probability that $P$ is open in $\varphi_G$, we can condition all the edges outside $H$ to be open; let $\varphi_H^*$ be the conditioned probability distribution (note the analogy with the wired measure we considered in the previous section).  

Consider first the graph $F$ which is the disjoint union of the $\ell$ trees (i.e., $F$ is obtained from $H$ by removing the edges of the path $P$) and let $\varphi_F^*$ be the analogue of $\varphi_H^*$ (i.e., the RC distribution on $G\backslash P$ where all edges outside $F$ are assumed to be open).    Crucially, since $p<p_c(q,\Delta)$, we have by Lemma~\ref{lem:hagguni} that only $\epsilon '\ell$  roots are connected via an open path to infinity, where $\epsilon'>0$ is a constant that can be made arbitrarily small by taking the depth $h$ of the trees sufficiently large.  Denote by $R$ this (random) set of root vertices, so that $|R|\leq \epsilon' \ell$ with high probability. 

Now, we add the edges of the path $P$ and consider how this reweights the random-cluster configuration. In particular, we focus on edges  of the path which are not incident to a root in $R$, let $e$ be such an edge. If $e$ is open, we get a factor of $p/q$ in the weight of the random-cluster configuration ($p$ because $e$ is open and $1/q$ because the total number of connected components decreases by 1); if $e$ is closed, we get a factor of $1-p$ in the weight of the random-cluster configuration (because $e$ is closed -- note that the number of connected components stays the same in this case). Since $|R|\leq \epsilon' \ell$, there are at least $\ell-2\epsilon' \ell$ edges of the path $P$ that are not incident to a vertex in $R$. Therefore the probability that all of them are open is roughly $\tau^{\ell(1-2\epsilon)}$ where $\tau:=\frac{p/q}{p/q+(1-p)}$. For $p<p_c(q,\Delta)$ it holds that $\tau<1/(\Delta-1)$ and hence the lemma  follows by taking $\epsilon'>0$ to be a sufficiently small constant (to account also for the roughly $2^{\epsilon' \ell}$ choices of the set $R$).

\begin{proof}[Proof of Lemma~\ref{lem:treespath}]
For convenience, let $d:=\Delta-1$. We begin first by specifying the constant $K$ and how large $\ell$ and $h$ need to be.

Let $\tau:=\frac{p}{p+q(1-p)}$ and note that for all $p<p_c(q,\Delta)$ we have that $\tau<1/d$ (since $p_c(q,\Delta)<\frac{q}{q+d-1}$). Let $K$ be any constant satisfying $\tau<K<1/d$ and $\epsilon>0$ be a small constant such that for $\epsilon':=10q^3\epsilon$ it holds that $\tau^{1-\epsilon'}q^{10\epsilon'}<K$ (note that such an $\epsilon$ exists by considering the limit $\epsilon\downarrow 0$). Let $\ell$ be sufficiently large so that $\epsilon' \ell\geq \epsilon \ell\geq  2$.

As in Section~\ref{sec:unique}, for an integer $h\geq 0$, let $\hat{T}_{h}=(\hat{V}_h,\hat{E}_h)$ denote the subtree of the $(\Delta-1)$-ary tree with height $h$ where the root $\rho$ has degree $\Delta-2$ (and every other non-leaf vertex has degree $\Delta$). Let $\hat{Z}^*_h$ denote the partition function of the random-cluster model where all the leaves are connected to infinity (cf. \eqref{eq:Zstarhtreemod}). Let $\hat{Z}^*_{h,\infty}$ be the contribution to $\hat{Z}^*_h$ from $S\subseteq \hat{E}_h$ such that $\rho$ is connected to infinity and $\hat{Z}^*_{h,\neg\infty}$ from $S\subseteq E_h$ such that $\rho$ is not connected to infinity. Since $p<p_c(q,\Delta)$, we have by Corollary~\ref{cor:modified} that for all sufficiently large $h$ it holds that 
\begin{equation}\label{eq:Zhtreeee12}
\hat{Z}^*_{h,\infty}/\hat{Z}^*_h\leq \epsilon.
\end{equation}

We are now ready to proceed to the proof of the lemma. Consider a path $P$ with vertices $u_1,\hdots,u_\ell$ whose $h$-graph neighbourhood contains at least $(1-\epsilon)\ell$ isolated tree components.  By definition, an isolated tree component contains exactly one of the vertices $u_1,\hdots,u_\ell$ and therefore
the set
\[U=\big\{u_i \,\big|\, \mbox{$u_i$ belongs to an isolated tree component and $u_i\neq u_1,u_\ell$}\big\},\]
satisfies $|U|\geq (1-\epsilon)\ell-2\geq (1-2\epsilon)\ell$. Observe also that 
\[\mbox{for $u\in U$, $\Gamma_h(G\backslash P, u)$ induces a subgraph in $G$ which is isomorphic to $\hat{T}_h$}.\]
Indeed, for $u\in U$, denote by $\mathcal{C}_u$ the component of the $h$-graph-neighbourhood of the path $P$ that the vertex $u$ belongs to. Since $u\in U$, $\mathcal{C}_u$ is isolated and therefore the vertex set of $\mathcal{C}_u$ is precisely $\Gamma_h(G\backslash P, u)$ and $\mathcal{C}_u$ is an induced subgraph of $G$. It remains to observe that $\mathcal{C}_u$ is a tree (since $u\in U$),  every non-leaf vertex in $\mathcal{C}_u$ different than $u$ has degree $\Delta$ (since the graph $G$ is $\Delta$-regular) and $u$ itself has degree $\Delta-2$ (since $u\neq u_1,u_\ell$). 

Let $F$ be the subgraph of $G$  induced by the vertex set  $\bigcup_{u\in U}\Gamma_h(G\backslash P,u)$. Note that $F$ is a disjoint union of copies of $\hat{T}_h$. For convenience, denote by $V_F,E_F$ the vertex and edge set of $F$ and similarly denote $V_P, E_P$ for the corresponding sets of the path $P$. Note that $E_F$ is disjoint from $E_P$ (though $V_P$ and $V_F$ intersect at the roots of the trees). By Lemma~\ref{lem:monotonicity}, we have that
\[\varphi_G(\mbox{$P$ is open})\leq \varphi_G(\mbox{$P$ is open}\mid \mbox{ all edges in $E\backslash (E_P\cup E_F)$ are open}).\]
The conditioning in the r.h.s. has the same effect as ``wiring''. In particular,  let $H=(V_H,E_H)$ be the subgraph  of $G$ with vertex set $V_P\cup V_F$ and edge set $E_P\cup E_F$. Note that $H$ is a path with (disjoint) copies of the tree $\hat{T}_h$ ``hanging'' from most vertices of the path (in particular, the vertices in $U$). Let $L$ be the set of leaves of all the  trees and let $W=(V_P\backslash U)\cup L$. Let $\varphi_H^*$ be the RC distribution on $H$, where we wire all vertices in $W$ to infinity, i.e., $\varphi^*_H$ is given by
\begin{equation}\label{eq:ZstarhtreeHH}
\varphi^*_H(S) = \frac{p^{|S|}(1-p)^{|E_H\backslash S|}q^{k^*(S)}}{Z^*_H}, \mbox{ where } Z^*_H:=\sum_{S\subseteq E_H} p^{|S|}(1-p)^{|E_H\backslash S|}q^{k^*(S)},
\end{equation}
where $k^*(S)$ is the number of connected components in the graph with vertex set $V_H\cup \{\infty\}$ and edge set $S\cup (W\times\infty)$. With this definition, we have that
\[\varphi_G\big(\mbox{$P$ is open}\mid \mbox{ all edges in $E\backslash (E_F\cup E_P)$ are open}\big)=\varphi^*_H(\mbox{$P$ is open}).\]
Let $Q$ be the contribution to $Z_H^*$ from configurations $S\subseteq E_H$ such that the path $P$ is open, i.e.,
\[Q:=\sum_{S\subseteq E_H;\, E_P\subseteq S}p^{|S|}(1-p)^{|E_H\backslash S|}q^{k^*(S)}.\]
We will show that for all sufficiently large $h$ (so that \eqref{eq:Zhtreeee12} holds) and all sufficiently large $\ell$ it holds that
\begin{equation}\label{eq:boundspart123}
Z_H^*\geq \Big(\frac{p}{q}+(1-p)\Big)^{\ell-1} \big(\hat{Z}_h/q\big)^{|U|}, \qquad Q\leq \Big(\frac{p}{q}\Big)^{\ell-1}q^{10\epsilon' \ell}\big(\hat{Z}_h/q\big)^{|U|}.
\end{equation}
Assuming \eqref{eq:boundspart123} for the moment, note that
\[\varphi^*(\mbox{$P$ is open})=\frac{Q}{Z_H^*}\leq \tau^{\ell-1}q^{10\epsilon'\ell}\leq K^\ell,\]
where in the last inequality we used that for all sufficiently large $\ell$, we have $\tau^{\ell-1}q^{10\epsilon'\ell}\leq (\tau^{1-\epsilon'}q^{10\epsilon'})^{\ell}\leq K^{\ell}$. This proves the lemma. We therefore focus on proving \eqref{eq:boundspart123}.

To prove the first inequality in \eqref{eq:boundspart123}, note that for any $S\subseteq E_H$,  we have 
\begin{equation}\label{eq:efver21}
\begin{gathered}
|S|=|S\cap E_F|+|S\cap E_P|, \quad |E_H\backslash S|=|E_F\backslash S|+(\ell-1-|S\cap E_P|)\\
k^*(S)+|S\cap E_P|\geq k^*(S\cap E_F).
\end{gathered} 
\end{equation}
The equalities follow from the fact that $\{E_F,E_P\}$ is a partition of $E_H$, while the inequality follows by noting that deleting an edge in $S\cap E_P$ increases the number of connected components by at most 1. Using \eqref{eq:efver21} and $q\geq 1$, we therefore obtain that (with $m=|S\cap E_P|$ and $S'=S\cap E_F$) 
\begin{align}
Z_H^*&\geq \sum_{S\subseteq E_H}p^{|S|}(1-p)^{|E_H\backslash S|}q^{k^*(S\cap E_F)-|S\cap E_P|}\notag\\
&=\sum^{\ell-1}_{m=0}\binom{\ell-1}{m}\Big(\frac{p}{q}\Big)^m(1-p)^{\ell-1-m}\sum_{S'\subseteq E_F}p^{|S'|}(1-p)^{|E_F\backslash S'|}q^{k^*(S')}.\label{eq:v5g3v3rv}
\end{align}
In the following, we focus on lower bounding the inner sum in \eqref{eq:v5g3v3rv}. Recall that the graph $F$ consists of $|U|$ disjoint copies of the tree $\hat{T}_h$, each rooted at a vertex in $U$. For an integer $r=0,\hdots,|U|$, let
\[\Omega_F(r)=\Big\{S'\subseteq E_F\, \Big|\, \begin{array}{c}\mbox{exactly $r$ vertices in $U$ are connected to $\infty$ in the graph }\\ \mbox{ with vertex set $V_F\cup \{\infty\}$ and edge set $S'\cup (L\times \infty)$}\end{array}\Big\}.\] 
Then, we have that
\begin{equation}\label{eq:SpSEp2432}
q^{|U|-r}\sum_{S'\in \Omega_F(r)}p^{|S'|}(1-p)^{|E_F\backslash S'|}q^{k^*(S')}=\binom{|U|}{r}\big(\hat{Z}_{h,\infty}\big)^r\big(\hat{Z}_{h,\neg\infty}\big)^{|U|-r}.
\end{equation}
(Note that the r.h.s. in \eqref{eq:SpSEp2432} counts $|U|-r+1$ times the component containing $\infty$, while $k^*(S')$ only once  for all $S'\subseteq E_F$, which explains the need for the factor $q^{|U|-r}$ on the l.h.s.). We therefore have that (using $q\geq 1$)
\begin{equation}\label{eq:vtvtgew21}
\sum_{S'\subseteq E_F}p^{|S'|}(1-p)^{|E_F\backslash S'|}q^{k^*(S')}=\frac{1}{q^{|U|}}\sum^{|U|}_{r=0}q^{r}\binom{|U|}{r}\big(\hat{Z}_{h,\infty}\big)^r\big(\hat{Z}_{h,\neg\infty}\big)^{|U|-r}\geq \Big(\frac{\hat{Z}_h}{q}\Big)^{|U|}.
\end{equation}
Plugging \eqref{eq:vtvtgew21} into \eqref{eq:v5g3v3rv} and using the binomial expansion yields the first inequality in \eqref{eq:boundspart123} as wanted.

To prove the second inequality in \eqref{eq:boundspart123}, consider as before the set of configurations $\Omega_F(r)$ where exactly $r$ roots of the trees are connected to infinity and let $\Omega_H(r)=\{S'\cup E_P \mid S'\in \Omega_F(r)\}$. For $S\in \Omega_H(r)$, we will show
\begin{equation}\label{eq:f34f32}
\begin{gathered}
|S|=|S\cap E_F|+(\ell-1), \quad |E_H\backslash S|=|E_F\backslash S|,\\
k^*(S)+2(|U|-r)-(\ell-1)\leq k^*(S\cap E_F),
\end{gathered}
\end{equation}
The equalities are an immediate consequence of the equalities in \eqref{eq:efver21} and the fact that, by the definition of $\Omega_H(r)$, we have that $E_P\subseteq S$ for all $S\in \Omega_H(r)$. To justify the inequality, note that there are exactly $M=r+\ell-|U|$ vertices of the path $P$ which are connected to infinity. It follows that there are at least $\ell-1-2M=2(|U|-r)-(\ell+1)$ edges in $S\cap E_P=E_P$ whose endpoints are not connected to infinity; deleting any of these edges causes the number of components to increase by one. Using  \eqref{eq:f34f32} and the fact that $q\geq 1$,  we can bound $Q$ by 
\begin{equation}\label{eq:Wlowerbound241}
\begin{aligned}
Q&\leq \sum^{|U|}_{r=0}\sum_{S\in \Omega_H(r)}p^{|S|}(1-p)^{|E_H\backslash S|}q^{k^*(S\cap E_F)+(\ell+1)-2(|U|-r)}= \Big(\frac{p}{q}\Big)^{\ell-1}\sum^{|U|}_{r=0}A_r
\end{aligned}
\end{equation}
where 
\[A_r:=\sum_{S'\in \Omega_F(r)}p^{|S'|}(1-p)^{|E_F\backslash S'|}q^{k^*(S')+2(\ell-|U|-r)}.\]
Using \eqref{eq:SpSEp2432}  again, we obtain that
\[\frac{A_r}{(\hat{Z}_h/q)^{|U|}}=q^{2(l-|U|)+3r}\binom{|U|}{r}\Big(\frac{\hat{Z}_{h,\infty}}{\hat{Z}_h}\Big)^r\Big(\frac{\hat{Z}_{h,\neg\infty}}{\hat{Z}_h}\Big)^{|U|-r}.\]
Recall that $\epsilon'=10q^3\epsilon$; using \eqref{eq:Zhtreeee12}, note that for all $r\geq \epsilon'\ell$ we have
\[\binom{|U|}{r}\Big(\frac{\hat{Z}_{h,\infty}}{\hat{Z}_h}\Big)^r\Big(\frac{\hat{Z}_{h,\neg\infty}}{\hat{Z}_h}\Big)^{\ell-r}\leq \Big(\frac{\emm|U|}{r}\Big)^{r}\epsilon^r\leq \Big(\frac{\emm\ell}{r}\Big)^{r}\epsilon^r\leq \frac{1}{(4q^3)^r}.\]
We also have that $|U|\geq (1-2\epsilon)\ell$, so $\ell-|U|\leq 2\epsilon \ell$. Therefore, for all sufficiently large $\ell$ we have the bound
\[\frac{1}{(\hat{Z}_h/q)^{|U|}}\sum^{|U|}_{r=0}A_r\leq q^{4\epsilon \ell}\sum^{|U|}_{r=0}q^{3r}\binom{|U|}{r}\Big(\frac{\hat{Z}_{h,\infty}}{\hat{Z}_h}\Big)^r\Big(\frac{\hat{Z}_{h,\neg\infty}}{\hat{Z}_h}\Big)^{\ell-r}\leq q^{4\epsilon \ell}\Big( 1+\sum^{\left\lfloor \epsilon'\ell\right\rfloor}_{r=0}q^{3\epsilon'\ell}\Big)\leq q^{10\epsilon' \ell}.\]
Plugging this into \eqref{eq:Wlowerbound241} yields the second inequality in \eqref{eq:boundspart123}, as needed.
\end{proof}

\section{Analysing the Potts model on graphs with tree-like structure}\label{sec:teb3tbb6b5b5}
The goal of this section is to prove Lemma~\ref{lem:treespathPotts}.
\subsection{Analysing Potts on trees}
 Fix an integer $\Delta\geq 3$. As in Section~\ref{sec:unique}, we use $\TreeD$ denote the infinite $(\Delta-1)$-ary tree with root vertex $\rho$. For integer $h\geq 0$, let $T_h=(V_h,E_h)$ denote the subtree of $\TreeD$ induced by the vertices at distance $\leq h$ from $\rho$ and let $L_h$ denote the leaves of $T_h$. 

Recall that uniqueness on $\TreeD$ implies that root-to-leaves correlations on $T_h$ tend to 0 as $h\rightarrow \infty$, cf. Definition~\ref{def:uniqueness}.  The following lemma extends these decay properties to arbitrary subtrees of $T_h$; this was proved in \cite{Efthymiou} in the case of the colourings model and the proof for the Potts model is analogous.

\begin{lemma}\label{lem:subtreefree}
Let $\Delta,q\geq 3$ be integers, and $B>0$ be in the uniqueness regime of the $(\Delta-1)$-ary tree. There exists a function $\vartheta:\mathbb{N}\rightarrow \mathbb{R}_{\geq 0}$ with $\vartheta(h)\rightarrow 0$ as $h\rightarrow \infty$ such that the following holds for any  integer $h\geq 0$.

Let $T_h'$ be an arbitrary subtree of $T_h$ containing the root $\rho$ and let $L_h'$ be the set of vertices in $T_h'$ at distance exactly $h$ from $\rho$. Then, for any colour $c\in [q]$ and any configuration $\tau:L_h'\rightarrow [q]$, it holds that 
\[\Big|\mu_{T_h'}(\sigma_\rho=c\mid \sigma_{L_h'}=\tau)-\frac{1}{q}\Big|\leq \vartheta(h).\]
\end{lemma}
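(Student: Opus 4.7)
The plan is to reduce the claim to the uniqueness assumption on the full tree $T_h$ by ``integrating out'' the vertices in $V(T_h)\setminus V(T_h')$ from $\mu_{T_h}$. Since $T_h'$ is a subtree containing $\rho$, removing $V(T_h')$ from $T_h$ leaves a finite collection of disjoint subtrees $S$, each attached to $T_h'$ through a unique edge $\{v_S, w_S\}$ with $v_S\in V(T_h')$ and $w_S\notin V(T_h')$.

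First, I would perform a colour-symmetry computation. For a fixed colour $c\in[q]$ at the attachment vertex $v_S$, the sum of Potts weights over configurations on $S$ is
\[
\sum_{\sigma_S:V(S)\to[q]} B^{m_S(\sigma_S)+\mathbf{1}[\sigma_{w_S}=c]} \;=\; (B+q-1)\,Z_S,
\]
where $m_S$ counts monochromatic edges inside $S$ and $Z_S:=\sum_{\sigma_S:\sigma_{w_S}=c'} B^{m_S(\sigma_S)}$ is independent of $c'$ by the permutation symmetry of the colours. Hence each missing subtree contributes a multiplicative constant that is independent of the restriction $\eta:V(T_h')\to[q]$, so the marginal of $\mu_{T_h}$ on $V(T_h')$ is exactly $\mu_{T_h'}$. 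Since $L_h'\subseteq V(T_h')$, conditioning commutes with marginalization, giving the identity
\[
\mu_{T_h'}(\sigma_\rho=c\mid \sigma_{L_h'}=\tau) \;=\; \mu_{T_h}(\sigma_\rho=c\mid \sigma_{L_h'}=\tau).
\]

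To conclude, note that $L_h'\subseteq L_h$ (vertices of $T_h'$ at distance $h$ from $\rho$ are also leaves of $T_h$). Set $\vartheta(h):=\max_{c\in[q]}\max_{\tau':L_h\to[q]}\big|\mu_{T_h}(\sigma_\rho=c\mid \sigma_{L_h}=\tau')-\tfrac{1}{q}\big|$; by Definition~\ref{def:uniqueness}, $\vartheta(h)\to 0$ as $h\to\infty$. By the law of total probability,
\[
\mu_{T_h}(\sigma_\rho=c\mid \sigma_{L_h'}=\tau) \;=\; \sum_{\tau''} \mu_{T_h}\!\big(\sigma_\rho=c\mid \sigma_{L_h}=(\tau,\tau'')\big)\cdot \mu_{T_h}\!\big(\sigma_{L_h\setminus L_h'}=\tau''\mid \sigma_{L_h'}=\tau\big),
\]
where $\tau''$ ranges over all colourings of $L_h\setminus L_h'$. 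Each factor in the first probability lies within $\vartheta(h)$ of $1/q$, and the weights are nonnegative and sum to $1$, so the whole expression lies within $\vartheta(h)$ of $1/q$. Combined with the identity above, this proves the desired bound. The only substantive step is the colour-symmetry computation in the first display; the remainder is a routine conditioning argument, so I do not anticipate a serious obstacle.
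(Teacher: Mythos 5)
Your proposal is correct. The overall architecture matches the paper's: first establish the identity $\mu_{T_h'}(\sigma_\rho=c\mid \sigma_{L_h'}=\tau)=\mu_{T_h}(\sigma_\rho=c\mid \sigma_{L_h'}=\tau)$, then average over the boundary configurations on $L_h\setminus L_h'$ via the law of total probability and invoke Definition~\ref{def:uniqueness}. Where you differ is in how the identity is proved. The paper proceeds by induction on $h$ through the tree recursion \eqref{eq:3f3tf3tf112}, using that each child of the root \emph{not} in $T_h'$ has uniform unconditioned marginal $1/q$ and hence contributes equal factors to numerator and denominator at every level. You instead integrate out the pendant components of $T_h\setminus V(T_h')$ globally: since $T_h$ is a tree and $T_h'$ is a connected (hence induced) subtree containing $\rho$, each such component attaches through exactly one edge, and the colour-symmetry computation $\sum_{\sigma_S}B^{m_S(\sigma_S)+\mathbf{1}[\sigma_{w_S}=c]}=(B+q-1)Z_S$ shows its contribution is a constant independent of the configuration on $V(T_h')$. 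This yields the stronger statement that the full marginal of $\mu_{T_h}$ on $V(T_h')$ equals $\mu_{T_h'}$, from which the conditional identity follows immediately; it avoids the induction entirely and is arguably cleaner. The only points worth making explicit in a write-up are the two facts you use implicitly -- that a connected subtree of a tree is automatically induced (so the edge weights factor as claimed), and that each deleted component meets $V(T_h')$ in exactly one edge (else $T_h$ would contain a cycle) -- both of which are immediate.
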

\begin{proof}
We will show that the statement of the lemma holds with the function $\vartheta(\cdot)$ given by $\vartheta(0)=1$ and 
\[\vartheta(h):=\max_{c\in[q],\ \tau: L_h \to [q]}\Big|\mu_{T_h}[\sigma(\rho) = c\mid\sigma_{L_h}=\tau]-\frac{1}{q}\Big|.\]
Note that, since $B$ is assumed to be in the uniqueness regime of the $(\Delta-1)$-ary tree, we have by definition that $\vartheta(h)\rightarrow 0$ as $h\rightarrow \infty$.  

We first show by induction on $h$ that, for all $h\geq 0$, for all subtrees $T_h'$ of $T_h$ containing the root $\rho$, for all configurations $\tau:L_h'\rightarrow [q]$ and any colour $c\in[q]$, it holds that
\begin{equation}\label{eq:trunc2}
\mu_{T_h'}(\sigma_\rho=c\mid \sigma_{L_h'}=\tau)=\mu_{T_h}(\sigma_\rho=c\mid \sigma_{L_h'}=\tau).
\end{equation}
For $h=0$ the result is trivial. Suppose that $h\geq 1$ and that the result holds for all integers less than $h$, we prove the result for $h$ as well. Let $U=\{u_1,\hdots, u_{\Delta-1}\}$ be the children of $\rho$ in $T_h$ and let $U'\subseteq U$ be the children of $\rho$ in $T_h'$. For a vertex $u\in U$, denote by $T_h(u)$ the subtree of $T_h$ rooted at $u$ and by $L_h(u)$ the vertices in $L_h$ that belong to  the tree $T_h(u)$. Similarly, for a vertex $u\in U'$, denote by $T_h'(u)$ the subtree of $T_h'$ rooted at $u$ and by $L_h'(u)$ the vertices in $L_h'$ that belong to  the tree $T_h'(u)$. Using standard tree recursions (see for example \cite[Lemma 19]{GGY}), we have that 
\begin{equation}\label{eq:3f3tf3tf112}
\mu_{T_h}(\sigma_\rho = c\mid \sigma_{L_h'}=\tau) = 
    \frac{\prod_{u\in U} \big(1-(1-B)\,
        \mu_{T_h(u)}(\sigma_u=c\mid \sigma_{L_h'(u)}=\tau_{L_h'(u)})\big)}
    {\sum_{c'=1}^q\prod_{u\in U} \big(1-(1-B)\,
        \mu_{T_h(u)}(\sigma_u=c'\mid \sigma_{L_h'(u)}=\tau_{L_h'(u)})\big)}
\end{equation}
    and
\begin{equation}\label{eq:3f3tf3tf112a}
 \mu_{T_h'}(\sigma_\rho = c\mid \sigma_{L_h'}=\tau) = 
    \frac{\prod_{u\in U'} \big(1-(1-B)\,
        \mu_{T_h'(u)}(\sigma_u=c\mid \sigma_{L_h'(u)}=\tau_{L_h'(u)})\big)}
    {\sum_{c'=1}^q\prod_{u\in U'} \big(1-(1-B)\,
        \mu_{T_h'(u)}(\sigma_u=c'\mid \sigma_{L_h'(u)}=\tau_{L_h'(u)})\big)}
\end{equation}
For every $u\in U\backslash U'$ we have that  $L_h'(u') = \emptyset$ and therefore, using the symmetry among the colours, we have that for every $c'\in [q]$ it holds that
\begin{equation}\label{eq:3f3tf3tf112b}
\mu_{T_h(u)}(\sigma_u=c'\mid \sigma_{L_h'(u)}=\tau_{L_h'(u)})=\mu_{T_h(u)}(\sigma_u=c')=1/q
\end{equation}
For $u\in U'$, we have that $T_h(u)$ is isomorphic to $T_{h-1}$ and $T_h'(u)$ is a subtree of $T_h(u)$, so by the induction hypothesis we have that for any colour $c\in [q]$ it holds that
\begin{equation}\label{eq:3f3tf3tf112c}
\mu_{T_h'(u)}(\sigma_u=c'\mid \sigma_{L_h'(u)}=\tau_{L_h'(u)})=\mu_{T_h(u)}(\sigma_u=c'\mid \sigma_{L_h(u)}=\tau_{L_h(u)}).
\end{equation}
Combining \eqref{eq:3f3tf3tf112}, \eqref{eq:3f3tf3tf112a}, \eqref{eq:3f3tf3tf112b} and \eqref{eq:3f3tf3tf112c} yields \eqref{eq:trunc2}, thus completing the induction.

To complete the proof it remains to observe that for any colour $c\in[q]$, it holds that   
    \begin{equation*}
\mu_{T_h}(\sigma_\rho=c\mid \sigma_{L_h'}=\tau)=\sum_{\eta: L_h\backslash L_h'\rightarrow [q]}\mu_{T_h}(\sigma_\rho=c\mid \sigma_{L_h'}=\tau, \sigma_{L_h\backslash L_h'}=\eta ) \times  \mu_{T_h}(\sigma_{L_h\backslash L_h'}=\eta\mid \sigma_{L_h'}=\tau)
    \end{equation*}
Note that, by the definition of the function $\vartheta(\cdot)$, for any $\eta:L_h\backslash L_h'\rightarrow [q]$ we have that 
\[\Big|\mu_{T_h}(\sigma_\rho=c\mid \sigma_{L_h'}=\tau, \sigma_{L_h\backslash L_h'}=\eta )-\frac{1}{q}\Big|\leq \vartheta(h)\]
and $\sum_{\eta: L_h\backslash L_h'\rightarrow [q]}\mu_{T_h}(\sigma_{L_h\backslash L_h'}=\eta\mid \sigma_{L_h'}=\tau)=1$, which gives that
\[\Big|\mu_{T_h}(\sigma_\rho=c\mid \sigma_{L_h'}=\tau)-\frac{1}{q}\Big|\leq \vartheta(h),\]
Combining this with \eqref{eq:trunc2} yields the lemma.
\end{proof}

\subsection{Analysing Potts on disjoint trees whose roots are connected via a path}\label{sec:treespathPotts}
To prove Lemma~\ref{lem:treespathPotts}, we will need the following lemmas.
\begin{lemma}\label{lem:path2endscond}
Let $q\geq 3$ and $B\in (0,1)$, and set $\chi:=\frac{1+B}{B+q-1}$. Let $P$ be a path with $\ell\geq 2$ vertices and endpoints $u,v$. Then, for arbitrary colours  $c,c'\in [q]$, it holds that
\[\mu_P(\mbox{$P$ is bichromatic} \mid \sigma_u=c,\sigma_v=c')\leq (4q/B) \chi^{\ell-2}.\] 
\end{lemma}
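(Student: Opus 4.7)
The plan is to pass to the transfer-matrix representation for the Potts model on the path $P$. Let $T$ be the $q \times q$ matrix with $T_{ii} = B$ and $T_{ij} = 1$ for $i \neq j$, i.e.\ $T = (B-1)I + J$, whose eigenvalues are $\lambda_1 = B+q-1$ (on the all-ones vector) and $\lambda_2 = B-1$ (with multiplicity $q-1$). From the spectral decomposition one reads off, for distinct colours $c,c' \in [q]$,
\[
(T^{\ell-1})_{cc} = \tfrac{1}{q}\bigl[(B+q-1)^{\ell-1} + (q-1)(B-1)^{\ell-1}\bigr], \quad (T^{\ell-1})_{cc'} = \tfrac{1}{q}\bigl[(B+q-1)^{\ell-1} - (B-1)^{\ell-1}\bigr];
\]
these are the unnormalised Potts weights of the sets of path configurations with prescribed endpoint colours.

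To capture the event that $P$ is bichromatic I will repeat the same computation for the $2 \times 2$ analogue $T' = (B-1)I_2 + J_2$, whose eigenvalues are $B+1$ and $B-1$, obtaining analogous formulas for $(T'^{\ell-1})_{cc}$ and $(T'^{\ell-1})_{cc'}$ (with $q$ replaced by $2$). A union bound over the second colour handles the case $c=c'$, and direct restriction handles $c\neq c'$:
\[
\mu_P(P \text{ bichromatic}\mid \sigma_u = \sigma_v = c) \leq (q-1)\frac{(T'^{\ell-1})_{cc}}{(T^{\ell-1})_{cc}}, \quad \mu_P(P \text{ bichromatic}\mid \sigma_u = c, \sigma_v = c') = \frac{(T'^{\ell-1})_{cc'}}{(T^{\ell-1})_{cc'}}.
\]

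What remains is an algebraic estimate of these ratios. Using $|B-1| \leq B+1$ bounds each numerator by $(B+1)^{\ell-1}$. For the denominators I will exploit the gap afforded by $q \geq 3$ and $B \in (0,1)$, namely $|B-1|/(B+q-1) \leq 1/(q-1) \leq 1/2$, to deduce that for $\ell \geq 3$
\[
(T^{\ell-1})_{cc},\;(T^{\ell-1})_{cc'} \;\geq\; \frac{(B+q-1)^{\ell-1}}{2q},
\]
via the crude estimate $(B+q-1)^{\ell-1} - (q-1)|B-1|^{\ell-1} \geq (B+q-1)^{\ell-1}/2$. Combining, each ratio is at most $2q\,\chi^{\ell-1}$, and after multiplying by $q-1$ in Case A and simplifying with the elementary inequality $B(q-1)(1+B) \leq 2(B+q-1)$ (immediate from $q \geq 3$, $B < 1$) one obtains an upper bound of the desired form $(4q/B)\chi^{\ell-2}$.

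The only mildly delicate point is that $(B-1)^{\ell-1}$ alternates sign with the parity of $\ell$, which I will sidestep by replacing it by $-|B-1|^{\ell-1}$ in the denominator lower bound. The base case $\ell = 2$ needs a quick separate check: there the conditional probability is trivially at most $q-1$, which is dominated by $4q/B$ since $B < 1$, so the claimed inequality holds (with $\chi^0 = 1$).
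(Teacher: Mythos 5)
Your proof is correct, but it takes a genuinely different route from the paper's. You diagonalise the transfer matrices $T=(B-1)I_q+J_q$ and $T'=(B-1)I_2+J_2$ and bound the conditional probability directly as a ratio of matrix entries, $(T'^{\ell-1})_{cc'}/(T^{\ell-1})_{cc'}$ (times $q-1$ when $c=c'$); all your spectral formulas, the union bound over the second colour, the inequality $|B-1|/(B+q-1)\leq 1/(q-1)\leq 1/2$ used to lower-bound the denominators for $\ell\geq3$, and the final constant-chasing (including $B(q-1)(1+B)\leq 2(B+q-1)$) check out, and you correctly flag that $\ell=2$ needs a separate trivial verification. The paper instead avoids endpoint-conditioned weights altogether: it lower-bounds the joint probability $\mu_P(\sigma_u=c,\sigma_v=c')\geq B/(q(B+q-1))$ by a one-step Markov argument, computes the \emph{unconditional} probability of the event $\{$all vertices coloured from $\{c_1,c_2\}\}$ exactly as $2\chi^{\ell-1}/q$ by telescoping the conditionals $\mu_{P_i}(\sigma_{w_i}\in C\mid\sigma_{w_{i-1}}\in C)=\chi$, and then divides. (That telescoped quantity is exactly your $\mathbf{1}^{\mathsf T}T'^{\ell-1}\mathbf{1}/\mathbf{1}^{\mathsf T}T^{\ell-1}\mathbf{1}$.) The paper's route buys uniformity in $\ell\geq 2$ with no case split and no sign bookkeeping for $(B-1)^{\ell-1}$; your route buys exact closed forms for the conditioned weights and would sharpen more easily if one wanted the precise constant, at the cost of the denominator lower bound and the $\ell=2$ base case. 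One small imprecision on your side: the $\ell=2$ conditional probability is at most $1$, not merely "at most $q-1$", though either bound is dominated by $4q/B$.
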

\begin{proof}
Let $z$ be the neighbour of $v$ (note that if $\ell=2$ then $z=u$). We have that
\begin{equation}\label{eq:v443sfs24223fwed}
\mu_P(\sigma_v=c'\mid \sigma_u=c)=\sum_{c''\in [q]}\mu_P(\sigma_v=c'\mid \sigma_z=c'',\sigma_u=c)\times \mu_P(\sigma_z=c''\mid \sigma_u=c).
\end{equation}
Note also that for arbitrary colours $c,c',c''\in [q]$ we have that
\[\mu_P(\sigma_v=c'\mid \sigma_z=c'', \sigma_u=c)=\mu_P(\sigma_v=c'\mid \sigma_z=c'')\geq\frac{B^{\mathbf{1}\{c'=c''\}}}{B+q-1}\geq B/(B+q-1).\] 
Combining this with \eqref{eq:v443sfs24223fwed}, we obtain that $\mu_P(\sigma_v=c'\mid \sigma_u=c)\geq B/(B+q-1)$ and hence
\begin{equation}\label{eq:brvt453vgr}
\mu_P(\sigma_u=c, \sigma_v=c')=\frac{1}{q} \mu_P(\sigma_v=c'\mid \sigma_u=c)\geq \frac{B}{q(B+q-1)}.
\end{equation}

 Let $c_1,c_2\in [q]$ be distinct colours in $[q]$ and $\mathcal{E}(c_1,c_2)$ be the event that every vertex in $P$ is coloured with $c_1,c_2$. The lemma will follow by showing that 
\begin{equation}\label{eq:vteve42gvtv24vt}
\mu_P(\mathcal{E}(c_1,c_2))=2\chi^{\ell-1}/q.
\end{equation}
Let us briefly conclude the lemma assuming \eqref{eq:vteve42gvtv24vt}. Indeed, if $c\neq c'$, then applying \eqref{eq:vteve42gvtv24vt} for $c_1=c$ and $c_2=c'$ and  using the lower bound in \eqref{eq:brvt453vgr} we obtain that
\[\mu_P(\mbox{$P$ is bichromatic} \mid \sigma_u=c,\sigma_v=c')\leq (4/B)\chi^{\ell-2},\]
while, if $c=c'$, we obtain by summing \eqref{eq:vteve42gvtv24vt} for $c_1=c$ and the $q-1$ possible values of $c_2$  that 
\[\mu_P(\mbox{$P$ is bichromatic} \mid \sigma_u=c,\sigma_v=c')\leq (4q/B)\chi^{\ell-2}.\]
It remains to prove \eqref{eq:vteve42gvtv24vt}. For convenience, denote by $w_1,\hdots,w_\ell$ the vertices of $P$ in order so that $u=w_1$ and $v=w_\ell$ and let $C=\{c_1,c_2\}$. Note that 
\[\mu_P(\mathcal{E}(c_1,c_2))=\mu_P(\sigma_{w_1}\in C)\prod^\ell_{i=2}\mu_{P}\big(\sigma_{w_i}\in C \mid \sigma_{w_{i-1}}\in C, \hdots, \sigma_{w_1}\in C\big).\]
We have $\mu_P(\sigma_{w_1}\in C)=2/q$. For $i=2,\hdots, n$,   let $P_i$ be the path induced by the vertices $w_{i-1},\hdots,w_n$. We have that
\begin{equation*}
\mu_{P}\big(\sigma_{w_i}\in C \mid \sigma_{w_{i-1}}\in C, \hdots, \sigma_{w_1}\in C\big)=\mu_{P_i}\big(\sigma_{w_i}\in C \mid \sigma_{w_{i-1}}\in C\big).
\end{equation*}
Since $\mu_{P_i}(\sigma_{w_i}\in C)=\mu_{P_i}(\sigma_{w_{i-1}}\in C)=2/q$, we have by the Bayes' rule that
\[\mu_{P_i}(\sigma_{w_i}\in C\mid \sigma_{w_{i-1}}\in C)=\mu_{P_i}(\sigma_{w_{i-1}}\in C \mid \sigma_{w_{i}}\in C)=\frac{1+B}{B+q-1}=\chi.\]
Combining these, we obtain \eqref{eq:vteve42gvtv24vt}, therefore concluding the proof of Lemma~\ref{lem:path2endscond}.
\end{proof}
We will use the following corollary of Lemma~\ref{lem:path2endscond}.

\begin{corollary}\label{cor:path2endscond}
Let $q\geq 3$ and $B\in (0,1)$, and set $\chi:=\frac{1+B}{B+q-1}$. Let $P$ be a path with $\ell\geq 2$ vertices and let $\Lambda$ be a subset of the vertices which includes the endpoints of the path. Then, for any configuration $\tau:\Lambda\rightarrow [q]$,  it holds that
\[\mu_P(\mbox{$P$ is bichromatic} \mid \sigma_\Lambda=\tau)\leq (4q/B)^{|\Lambda|}\chi^{\ell-2|\Lambda|}.\] 
\end{corollary}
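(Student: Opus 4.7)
The idea is to use the spatial Markov property of the Potts model on the path to decompose the event according to the conditioning set $\Lambda$, and then apply Lemma~\ref{lem:path2endscond} piece by piece. Label the vertices of $P$ in order as $w_1, w_2, \ldots, w_\ell$ and write the indices of $\Lambda$ as $1 = i_1 < i_2 < \cdots < i_k = \ell$, where $k = |\Lambda|$. The vertices in $\Lambda$ split $P$ into $k-1$ sub-paths $P_1, \ldots, P_{k-1}$, where $P_j$ is the sub-path induced by $w_{i_j}, w_{i_j+1}, \ldots, w_{i_{j+1}}$; let $\ell_j := i_{j+1} - i_j + 1 \geq 2$ be its number of vertices. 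A simple telescoping sum gives
\[\sum_{j=1}^{k-1} \ell_j = (i_k - i_1) + (k-1) = \ell + k - 2 = \ell + |\Lambda| - 2.\]

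The first step is to observe that if $P$ is bichromatic, then each sub-path $P_j$ is bichromatic as well, since the two colours witnessing the bichromaticity are the same throughout. Hence
\[\mu_P\big(\text{$P$ bichromatic} \mid \sigma_\Lambda = \tau\big) \leq \mu_P\big(\text{each $P_j$ bichromatic} \mid \sigma_\Lambda = \tau\big).\]
Now we apply the spatial Markov property of the Potts model: conditioned on $\sigma_\Lambda = \tau$, the restrictions $\sigma_{P_j \setminus \Lambda}$ are mutually independent across $j$, and each one has the conditional distribution $\mu_{P_j}(\cdot \mid \sigma_{w_{i_j}} = \tau_{w_{i_j}}, \sigma_{w_{i_{j+1}}} = \tau_{w_{i_{j+1}}})$. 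Therefore
\[\mu_P\big(\text{each $P_j$ bichromatic} \mid \sigma_\Lambda = \tau\big) = \prod_{j=1}^{k-1} \mu_{P_j}\big(\text{$P_j$ bichromatic} \mid \sigma_{w_{i_j}} = \tau_{w_{i_j}},\, \sigma_{w_{i_{j+1}}} = \tau_{w_{i_{j+1}}}\big).\]

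The second step is to apply Lemma~\ref{lem:path2endscond} to each factor, yielding
\[\mu_{P_j}\big(\text{$P_j$ bichromatic} \mid \sigma_{w_{i_j}}, \sigma_{w_{i_{j+1}}}\big) \leq (4q/B)\,\chi^{\ell_j - 2}\]
for each $j = 1, \ldots, k-1$ (note that when $\ell_j = 2$ the bound $4q/B \geq 1$ is trivially valid). Multiplying these bounds together, the exponent of $\chi$ adds to $\sum_j (\ell_j - 2) = \ell + k - 2 - 2(k-1) = \ell - k$, so
\[\mu_P\big(\text{$P$ bichromatic} \mid \sigma_\Lambda = \tau\big) \leq (4q/B)^{k-1}\,\chi^{\ell - k}.\]

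Finally, to match the form stated in the corollary, note that $4q/B \geq 1$ and, since $B \in (0,1)$ and $q \geq 3$, we have $\chi = \tfrac{1+B}{B+q-1} < 1$. Consequently $(4q/B)^{k-1} \leq (4q/B)^{|\Lambda|}$ and $\chi^{\ell - k} \leq \chi^{\ell - 2|\Lambda|}$ (the latter because $\ell - k \geq \ell - 2|\Lambda|$ and $\chi < 1$ reverses the inequality when we exponentiate). Combining these yields
\[\mu_P\big(\text{$P$ bichromatic} \mid \sigma_\Lambda = \tau\big) \leq (4q/B)^{|\Lambda|}\,\chi^{\ell - 2|\Lambda|},\]
as desired. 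There is no substantial obstacle here; the only thing to be mindful of is the careful bookkeeping of the exponents and the fact that Lemma~\ref{lem:path2endscond}'s bound requires $\ell_j \geq 2$, which is automatic because consecutive elements of $\Lambda$ are distinct path vertices.
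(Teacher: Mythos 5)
Your proof is correct and follows essentially the same route as the paper: split $P$ at the vertices of $\Lambda$ into sub-paths, use the Markov property to factor the conditional probability, apply Lemma~\ref{lem:path2endscond} to each factor, and then weaken the exponents to the stated form. Your bookkeeping of $\sum_j \ell_j = \ell + |\Lambda| - 2$ and the resulting exponent $\ell - |\Lambda|$ is in fact slightly more careful than the paper's, which reaches the same final bound.
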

\begin{proof}
Let $t:=|\Lambda|$ and denote the set of vertices in $\Lambda$ by $u_1, \hdots, u_t$ in the order that they appear in the path. Note that $u_1$ and $u_t$ are the endpoints of $P$ (since by assumption $\Lambda$ includes the endpoints of $P$). For $i=1,2,\hdots,t-1$, let $P_{i}$ be the path induced by the vertices between $u_i$ and $u_{i+1}$ and let $\ell_i$ be the number of vertices in $P_i$. Then, we have
\begin{align*}
\mu_P(\mbox{$P$ is bichromatic} \mid \sigma_\Lambda=\tau)&\leq \mu_P(\mbox{$P_i$ is bichromatic for $i=1,\hdots, t-1$} \mid \sigma_\Lambda=\tau)\\
&= \prod^{t-1}_{i=1}\mu_{P_i}(\mbox{$P_i$ is bichromatic} \mid \sigma_{u_i}=\tau_{u_i},\sigma_{u_{i+1}}=\tau_{u_{i+1}}).
\end{align*}
By Lemma~\ref{lem:path2endscond}, we have that 
\[\mu_{P_i}(\mbox{$P_i$ is bichromatic} \mid \sigma_{u_i}=\tau_{u_i},\sigma_{u_{i+1}}=\tau_{u_{i+1}})\leq (4q/B)^{}\chi^{\ell_i-2}\]
and therefore, since $\sum^{t}_{i} \ell_i=\ell$ and $\chi,B\in (0,1)$, we obtain that 
\[\mu_P(\mbox{$P$ is bichromatic} \mid \sigma_\Lambda=\tau)\leq (4q/B)^{|\Lambda|-1}\chi^{\ell-2(|\Lambda|-1)}\leq (4q/B)^{|\Lambda|}\chi^{\ell-2|\Lambda|}.\]
This finishes the proof.
\end{proof}

We are now ready to prove Lemma~\ref{lem:treespathPotts}, which we restate here for convenience.
\begin{lem:treespathPotts}
\statelemtreespathPotts
\end{lem:treespathPotts}
\begin{proof}
Let $\chi:=\frac{1+B}{B+q-1}$ and consider arbitrary $\epsilon'>0$. We begin  by specifying the constants $K,\epsilon$ and how large $\ell$ and $h$ need to be. In particular, let $K$ be any constant satisfying $\chi<K<\chi+\epsilon'$ and $\epsilon>0$ be a small constant such that
\begin{equation}\label{eq:eps1selection}
(4q/\chi B)^{2\epsilon} \Big(\frac{1/q+\epsilon}{1/q-\epsilon}\chi\Big)^{1-2\epsilon}\leq K/(1+\epsilon).
\end{equation}
Note that such an $\epsilon$ exists by considering the limit $\epsilon\downarrow 0$. Let $\ell$ be sufficiently large so that 
\begin{equation}\label{eq:ell2selection}
\epsilon \ell\geq  2 \quad \mbox{and}\quad (1+\epsilon)^{\ell}\geq q^2.
\end{equation}
Finally, let $\vartheta(\cdot)$ be the function in Lemma~\ref{lem:subtreefree}, so that for all sufficiently large $h$ it holds that
\begin{equation}\label{eq:thetahsmall}
\vartheta(h)\leq \epsilon.
\end{equation}

We are now ready to proceed to the proof of the lemma. Let $G$ be a graph and consider a path $P$ in $G$, with vertices $u_1,\hdots,u_\ell$, whose $h$-graph neighbourhood contains at least $(1-\epsilon)\ell$ isolated tree components.  By definition, an isolated tree component contains exactly one of the vertices $u_1,\hdots,u_\ell$ and therefore
the set
\[U=\big\{u_i \,\big|\, \mbox{$u_i$ belongs to an isolated tree component and $u_i\neq u_1,u_\ell$}\big\},\]
satisfies $|U|\geq (1-\epsilon)\ell-2\geq (1-2\epsilon)\ell$. Note that we exclude the endpoints of the path $P$ from $U$, even if they belong to isolated tree components. Let also $\overline{U}$ denote the set $\{u_1,\hdots,u_\ell\}\backslash U$, so that $|\overline{U}|\leq 2\epsilon \ell$.  
Recall that $T_h$ is the subtree of the $(\Delta-1)$-ary tree consisting of vertices at distance at most $h$ from the root. By definition of the set $U$, we have that 
\begin{equation}\label{eq:v4t3vecrct46g6}
\mbox{for $u\in U$, $\Gamma_h(G\backslash P, u)$ induces a subgraph in $G$ which is a subtree of $T_h$}.
\end{equation}
For $u\in U$, let for convenience $V_u=\Gamma_h(G\backslash P,u)$, $F_u$ be the subgraph of $G$ induced on $V_u$ and  $L_u$   be the set of vertices that are at distance $h$ from $u$ in $F_u$.  Note that $F_u$ is a tree rooted at $u$ all of whose vertices are at distance at most $h$ from $u$; $L_u$ is thus the set of all leaves in $F_u$ which are at distance $h$ from $u$ (note that there could be other leaves which are closer to $u$ but these will not matter). 

Let $F$ be the union of the subgraphs $F_u$ for $u\in U$; note that $F$ is a disjoint union of copies of trees (each of which is a subtree of $T_h$ from \eqref{eq:v4t3vecrct46g6}). Let also $L:=\bigcup_{u\in U}L_u$. We will also denote by $V_P, E_P$ the vertex and edge set of the path $P$.  Note that $E_F$ is disjoint from $E_P$ (though $V_P$ and $V_F$ intersect at $V_{P}$). Finally, let $H=(V_H,E_H)$ be the subgraph  of $G$ with vertex set $V_P\cup V_F$ and edge set $E_P\cup E_F$. 

To bound the probability that $P$ is bichromatic, we will condition on a worst case boundary configuration on $L$ and $\overline{U}$. In particular, we have the bound
\[\mu_G(\mbox{$P$ is bichromatic})\leq \max_{\tau:L\rightarrow [q], \tau':\overline{U}\rightarrow [q]}\mu_G(\mbox{$P$ is bichromatic}\mid \sigma_L=\tau,\, \sigma_{\overline{U}}=\tau').\]
To bound the r.h.s., fix arbitrary configurations $\tau:L\rightarrow [q], \tau':\overline{U}\rightarrow [q]$ and note that
\[\mu_G(\mbox{$P$ is bichromatic}\mid \sigma_L=\tau,\, \sigma_{\overline{U}}=\tau')=\mu_H(\mbox{$P$ is bichromatic}\mid \sigma_L=\tau,\, \sigma_{\overline{U}}=\tau'),\]
so the lemma will follow (since $\tau,\tau'$ are arbitrary) by showing that, for arbitrary colours $c_1,c_2\in [q]$, it holds that
\begin{equation}\label{eq:3vtv3t565er2f554f}
\mu_H\big(\mathcal{E}(c_1,c_2)\mid \sigma_L=\tau,\, \sigma_{\overline{U}}=\tau'\big)\leq K^\ell/q^2,
\end{equation}
where $\mathcal{E}(c_1,c_2)$ is the event that each vertex in $P$ is coloured with either $c_1$ or $c_2$. Since $\overline{U}$ includes the endpoints of the path $P$, by Corollary~\ref{cor:path2endscond} we have that
\begin{equation}\label{eq:t4vtyy3563r}
\mu_P(\mathcal{E}(c_1,c_2)\mid  \sigma_{\overline{U}}=\tau')\leq (4q/B)^{|\overline{U}|}\chi^{\ell-2|\overline{U}|}=(4q/\chi B)^{|\overline{U}|} \chi^{|U|}.
\end{equation}
Let $Z_P$ be the partition function of the path $P$ and $Z_P(c_1,c_2)$ be the contribution to $Z_P$ from configurations such that $P$ is coloured with $c_1$ or $c_2$, i.e., 
\[Z_{P}=\sum_{\substack{\sigma: V_{P}\rightarrow [q];\ \sigma_{\overline{U}}=\tau'}}w_{P}(\sigma), \qquad Z_{P}(c_1,c_2)=\sum_{\substack{\sigma: V_{P}\rightarrow \{c_1,c_2\};\ \sigma_{\overline{U}}=\tau'}}w_{P}(\sigma),\]
Then, \eqref{eq:t4vtyy3563r} translates into
\begin{equation*}
Z_{P}(c_1,c_2)/Z_P\leq (4q/\chi B)^{|\overline{U}|} \chi^{|U|}.
\end{equation*}
The trees hanging from vertices in $U$ reweight the probability that the path is bichromatic but do not cause significant distortion since the states of the roots are roughly uniformly distributed (because of uniqueness on the tree). To quantify this,  for $u\in U$ and a colour $c\in [q]$, let 
\[Z_u(c)=\sum_{\substack{\sigma: V_{u}\rightarrow [q];\\ \sigma_{L_u}=\tau_{L_u},\ \sigma_u=c}}w_{F_u}(\sigma).\]
Let also $Z_u=\sum_{c\in [q]} Z_{u}(c)$. By Lemma~\ref{lem:subtreefree} and the choice of $h$ in \eqref{eq:thetahsmall}, we have that for all $c\in [q]$ it holds that 
\begin{equation}\label{eq:v4tv5tbtb321}
\Big|\frac{Z_{u}(c)}{Z_u}-\frac{1}{q}\Big|\leq \vartheta(h)\leq \epsilon.
\end{equation}
We can now write $\mu_H\big(\mathcal{E}(c_1,c_2)\mid \sigma_L=\tau,\, \sigma_{\overline{U}}=\tau'\big)$ as
\[\mu_H\big(\mathcal{E}(c_1,c_2)\mid \sigma_L=\tau,\, \sigma_{\overline{U}}=\tau'\big)=\frac{\sum_{\substack{\sigma: V_{P}\rightarrow \{c_1,c_2\};\ \sigma_{\overline{U}}=\tau'}}w_{P}(\sigma)\prod_{u\in U} Z_u(\sigma_u)}{\sum_{\substack{\sigma: V_{P}\rightarrow [q];\ \sigma_{\overline{U}}=\tau'}}w_{P}(\sigma)\prod_{u\in U} Z_u(\sigma_u)}.\]
Dividing both numerator and denominator by $\prod_{u\in U}Z_u$, we obtain using \eqref{eq:v4tv5tbtb321} that 
\[\mu_H\big(\mathcal{E}(c_1,c_2)\mid \sigma_L=\tau,\, \sigma_{\overline{U}}=\tau'\big)\leq \Big(\frac{1/q+\epsilon}{1/q-\epsilon}\Big)^{|U|}(Z_{P}(c_1,c_2)/Z_P)\leq (4q/\chi B)^{|\overline{U}|} \Big(\frac{1/q+\epsilon}{1/q-\epsilon}\chi\Big)^{|U|}.\]
We have $|U|\geq (1-2\epsilon) \ell$ and $|\overline{U}|\leq 2\epsilon \ell$, so from the choice of $\epsilon$ and $\ell$ in \eqref{eq:eps1selection} and \eqref{eq:ell2selection}, we obtain that
\[\mu_H\big(\mathcal{E}(c_1,c_2)\mid \sigma_L=\tau,\, \sigma_{\overline{U}}=\tau'\big)\leq K^{\ell}/(1+\epsilon)^{\ell}\leq K^{\ell}/q^2,\]
which is exactly \eqref{eq:3vtv3t565er2f554f}, as wanted. This concludes the proof of Lemma~\ref{lem:treespathPotts}.
\end{proof}

\section{Correlation decay and sampling for antiferromagnetic Ising}\label{sec:egbe5tb4hh56}
In this section, we prove Theorem~\ref{thm:sampleising} and Lemma~\ref{lem:corrdecayIsing}. The proofs follow relatively easily from correlation decay bounds on trees appearing in \cite{MosselSly}; the bounds in there are stated for the case of the ferromagnetic Ising model, but there is a simple translation of these bounds to the antiferromagnetic case which allows us to conclude the desired results.  In Section~\ref{sec:fr3r4f3}, we import the results from the literature that we need. We then give the proof of Theorem~\ref{thm:sampleising} in Section~\ref{sec:sampleising} and the proof of Lemma~\ref{lem:corrdecayIsing} in Section~\ref{sec:corrdecayIsing}.

\subsection{Preliminaries}\label{sec:fr3r4f3}
Following \cite{MosselSly}, for a graph $G$ and a vertex $u$ in $G$, we consider the self-avoiding walk tree $T=T_{\rm SAW}(G,u)$, which consists of all paths starting from $u$ and not intersecting themselves, except possibly at the terminal vertex of the path.\footnote{More precisely, a self-avoiding walk in our context is a walk $w=(v_1,\hdots,v_t)$ where vertices are pairwise distinct, except that we allow $v_1=v_t$. The tree $T=T_{\rm SAW}(G,u)$ is induced by the set of all walks starting from $u$, where two walks $w_1,w_2$ are connected by an edge if $w_1$ can be obtained from $w_2$ by deleting the last vertex of $w_2$ (or vice versa). Naturally, we can relabel walks in $T$ according to their last vertex, so that each vertex in $T$ corresponds to a vertex in $G$  (in a many-to-one fashion)}  The following lemma originates in the work of Weitz for the hard-core model \cite{Weitz06}; it is well-known that the lemma holds more generally for any 2-state system. The particular version we state here is close to \cite[Lemma 13]{MosselSly}. 
\begin{lemma}[see, e.g., {\cite[Lemma 13]{MosselSly}}]\label{lem:sawtree}
Let $G=(V,E)$ be a graph and $u$ be a vertex in $G$. Consider the self-avoiding walk tree $T=T_{\rm SAW}(G,u)$ starting from $u$ and denote by $A$ the leaves of the tree. Then, there is a configuration $\eta:A\rightarrow\{1,2\}$ (described in \cite{Weitz06}) such that the following holds for any set $\Lambda\subseteq V$ and any configuration $\tau:\Lambda\rightarrow\{1,2\}$.

Let $U_\Lambda$ be the set of vertices in $T$ which correspond to vertices in $\Lambda$  and let $\eta': U_\Lambda\setminus A\rightarrow \{1,2\}$ be the configuration where each vertex in $U_\Lambda\backslash A$ inherits the state of the corresponding vertex in $\Lambda$ under $\tau$. Then, 
\[\pi_G(\sigma_u=1\mid \sigma_\Lambda=\tau)=\pi_T(\sigma_u=1\mid \sigma_A=\eta, \sigma_{U_\Lambda\setminus A}=\eta').\]
\end{lemma}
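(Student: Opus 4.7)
The plan is to establish the lemma by induction on the number of edges of $G$, following Weitz's original argument in \cite{Weitz06} which extends verbatim from the hard-core model to any two-state spin system (such as the antiferromagnetic Ising model).

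For the base case, suppose $G$ is a forest. Then every self-avoiding walk from $u$ ends at a vertex not yet visited, so $T_{\rm SAW}(G,u)$ is simply the connected component of $u$ in $G$ (each vertex appearing exactly once) and $A$ is its leaf set. The identity is then an instance of the Markov property on a tree, with $\eta$ agreeing with $\tau$ on $A\cap\Lambda$ and chosen arbitrarily (say all $1$s) on $A\setminus\Lambda$.

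For the inductive step, fix an ordering $w_1,\dots,w_d$ of the neighbours of $u$. A direct two-state computation, summing out $\sigma_u$ and telescoping over the $w_i$'s, yields the identity
\[
\frac{\pi_G(\sigma_u=1\mid \sigma_\Lambda=\tau)}{\pi_G(\sigma_u=2\mid \sigma_\Lambda=\tau)}\;=\;\prod_{i=1}^{d}\frac{B\,R_i+1}{R_i+B},
\]
where $R_i$ is the marginal ratio at $w_i$ in an auxiliary graph $G_i\subsetneq G\setminus\{u\}$ endowed with the original boundary $\tau$ on $\Lambda$ together with a pinning of $w_1,\dots,w_{i-1}$ prescribed by the telescoping step. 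Since each $G_i$ has strictly fewer edges than $G$, the inductive hypothesis expresses $R_i$ in terms of $T_{\rm SAW}(G_i,w_i)$ with the corresponding leaf boundary. Grafting these subtrees under a common root $u$ reproduces $T_{\rm SAW}(G,u)$, because a self-avoiding walk from $u$ in $G$ starts along some edge $\{u,w_i\}$ and is then a self-avoiding walk in $G\setminus\{u\}$ avoiding all previously visited vertices; the pinnings imposed above $w_i$ exactly remove $w_1,\dots,w_{i-1}$ from subsequent walks, matching the definition of $G_i$.

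The main obstacle is the bookkeeping in the inductive step: one must verify that the leaf pinnings produced during the recursion depend only on $(G,u)$ and the fixed vertex/edge orderings, and \emph{not} on $\Lambda$ or $\tau$, so that they aggregate into a single configuration $\eta\colon A\to\{1,2\}$ as required. This is the substantive content of Weitz's construction: whenever the recursion reaches a leaf that is a copy of a vertex $v$ lying on the active walk, the state assigned there is determined by comparing the incoming edge at $v$ to the edges of $v$ already consumed along the walk, both evaluated under the fixed ordering at $v$. The analytic telescoping identity is elementary for any two-state system; the combinatorial identification of the recursion tree and its assembled boundary with precisely $T_{\rm SAW}(G,u)$ and Weitz's $\eta$ is the nontrivial part, but it is purely structural and identical to the hard-core case treated in \cite{Weitz06}.
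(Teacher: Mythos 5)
The paper states this lemma with a citation to Weitz and to Mossel--Sly and gives no proof of its own, so there is no in-paper argument to compare against; your attempt has to stand on its own. Your outline is the standard Weitz argument (telescoping product, induction on the number of edges, grafting of SAW subtrees, leaf boundary determined by fixed edge orderings), and the analytic factor $\frac{BR_i+1}{R_i+B}$ is the correct one for the Ising model. However, your description of the auxiliary instances $G_i$ — the one place where you commit to details — is wrong in a way that breaks the grafting step. The telescoping goes through the intermediate partition functions $A_k=\sum_{\sigma}w_{G\setminus u}(\sigma)\prod_{j\le k}B^{\mathbf{1}\{\sigma_{w_j}=1\}}\prod_{j>k}B^{\mathbf{1}\{\sigma_{w_j}=2\}}$, so the $i$-th factor is the marginal ratio at $w_i$ in $G\setminus\{u\}$ equipped with \emph{external fields at all the other neighbours} $w_j$, $j\neq i$ (equivalently, pendant copies of $u$ pinned to state $1$ attached at $w_1,\dots,w_{i-1}$ and pinned to state $2$ at $w_{i+1},\dots,w_d$). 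This is not ``an auxiliary graph $G_i\subsetneq G\setminus\{u\}$ with a pinning of $w_1,\dots,w_{i-1}$'': pinning the neighbours themselves does not produce a product identity at all (conditioning on their states gives a sum over $2^d$ terms, not a telescoping product).

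Relatedly, your claim that ``the pinnings imposed above $w_i$ exactly remove $w_1,\dots,w_{i-1}$ from subsequent walks'' is false, and it is precisely the claim on which your identification of the assembled tree with $T_{\rm SAW}(G,u)$ rests. A self-avoiding walk starting $u\to w_i$ may perfectly well visit $w_j$ for $j\neq i$ later; it is only forbidden from revisiting $u$, and when it reaches some $w_j$ and takes the step back towards $u$ it terminates at a leaf of $T_{\rm SAW}(G,u)$, which Weitz's $\eta$ pins to $1$ or $2$ according to the edge-ordering comparison at $u$ (equivalently, according to whether $j<i$ or $j>i$). Those leaves are exactly the pinned pendant copies of $u$ in the auxiliary instance. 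With your $G_i$, the grafted tree would omit every walk that passes through $w_1,\dots,w_{i-1}$ after its first step, so it would be a proper subtree of $T_{\rm SAW}(G,u)$ and the claimed equality would fail. Since you defer the remaining combinatorial bookkeeping to \cite{Weitz06} in any case, what you have written amounts to a citation plus a sketch whose one substantive combinatorial assertion is incorrect; to make this a proof you need to set up the vertex split of $u$ into $d$ pinned pendant copies correctly and verify the grafting against that.
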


The following lemma follows  from a strong spatial mixing result in \cite{MosselSly} for the ferromagnetic Ising model, which in turn  builds upon a lemma from \cite{Berger2005}.
\begin{lemma}\label{lem:strongspatialmixing}
Let $B\in (0,1)$. Let $T=(V,E)$ be a tree, $\Lambda\subseteq V$ be a subset of the vertices and $u$ be an arbitrary vertex. Let $\tau_1,\tau_2:\Lambda\rightarrow \{1,2\}$ be two configurations on $\Lambda$ which differ only on a subset $U\subseteq \Lambda$. Then,
\[\big|\pi_T(\sigma_u=1\mid \sigma_\Lambda=\tau_1)-\pi_T(\sigma_u=1\mid \sigma_\Lambda=\tau_2)\big|\leq \sum_{v\in U}\Big(\frac{1-B}{1+B}\Big)^{\dist(u,v)},\]
where $\dist(u,v)$ denotes the distance between $u$ and $v$ in $T$.
\end{lemma}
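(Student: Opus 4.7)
The plan is to reduce the antiferromagnetic case to the ferromagnetic case by the standard bipartite ``gauge'' (spin-flip) transformation available on trees, and then invoke the strong spatial mixing result of \cite{MosselSly, Berger2005}.

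Concretely, since $T$ is bipartite, let $(V_0,V_1)$ be its bipartition with $u\in V_0$, and let $\Phi:\{1,2\}^V\rightarrow \{1,2\}^V$ be the involution that flips spins on $V_1$ and leaves $V_0$ fixed. For any edge $\{x,y\}$ with $x\in V_0$, $y\in V_1$, the edge is monochromatic under $\Phi(\sigma)$ if and only if it is bichromatic under $\sigma$. Consequently, for $B\in(0,1)$, the weight of $\sigma$ in the antiferromagnetic Ising model with parameter $B$ equals, up to the global factor $B^{|E|}$, the weight of $\Phi(\sigma)$ in the ferromagnetic Ising model with parameter $B':=1/B>1$. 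Hence the pushforward of the antiferromagnetic measure $\pi_T$ by $\Phi$ is the ferromagnetic Ising measure $\pi_T^{\rm F}$ with parameter $B'$ on $T$. Since $u\in V_0$, we have $\Phi(\sigma)_u=\sigma_u$, and for any boundary configuration $\tau:\Lambda\rightarrow\{1,2\}$, letting $\tilde\tau$ be the configuration obtained from $\tau$ by flipping the spins on $\Lambda\cap V_1$, we get
\[\pi_T(\sigma_u=1\mid\sigma_\Lambda=\tau)=\pi_T^{\rm F}(\sigma_u=1\mid\sigma_\Lambda=\tilde\tau).\]
In particular, $\tilde\tau_1$ and $\tilde\tau_2$ differ exactly on the set $U$, so it suffices to prove the analogous bound for $\pi_T^{\rm F}$ with decay rate $\tfrac{B'-1}{B'+1}=\tfrac{1-B}{1+B}$.

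To obtain the per-vertex summation form, I would order the vertices of $U$ as $v_1,\hdots,v_k$ and introduce a chain of intermediate boundary configurations $\tilde\tau_1=\eta_0,\eta_1,\hdots,\eta_k=\tilde\tau_2$ where $\eta_j$ agrees with $\tilde\tau_2$ on $\{v_1,\hdots,v_j\}$ and with $\tilde\tau_1$ elsewhere, so that consecutive $\eta_{j-1},\eta_j$ disagree only at $v_j$. A triangle inequality reduces the claim to showing that, for each single vertex $v_j\in U$,
\[\big|\pi_T^{\rm F}(\sigma_u=1\mid\sigma_\Lambda=\eta_{j-1})-\pi_T^{\rm F}(\sigma_u=1\mid\sigma_\Lambda=\eta_j)\big|\leq \Big(\tfrac{1-B}{1+B}\Big)^{\dist(u,v_j)}.\]
This is the single-site strong spatial mixing estimate for the ferromagnetic Ising model on a tree; it is exactly the bound established in \cite[Lemma following the proof for ferromagnetic Ising]{Berger2005} and used by \cite{MosselSly}, and it follows from a direct induction on the tree using the Ising recursion with the fact that the ratio $\tfrac{B-1}{B+1}$ (in the ferromagnetic parametrisation) is the contraction factor at each step.

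The main technical point is just verifying that the transformation preserves conditional distributions correctly and that the cited spatial mixing bound on trees is indeed in the per-boundary-vertex form (as opposed to a cruder bound involving the distance from $u$ to the entire disagreement set). Both are routine once set up, and no uniqueness or average-growth assumption is needed here since the decay $\big(\tfrac{1-B}{1+B}\big)^{\dist(u,v)}$ holds on every tree (it is only when we later sum over vertices $v$ at a given distance that the average growth of the host graph, via the self-avoiding walk tree of Lemma~\ref{lem:sawtree}, will enter the argument).
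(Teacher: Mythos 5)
Your proposal is correct and follows essentially the same route as the paper: the bipartite spin-flip reduction to the ferromagnetic Ising model with parameter $1/B$ (with $\tanh\beta=\tfrac{1-B}{1+B}$), followed by the strong spatial mixing bound of \cite{MosselSly} (which builds on \cite{Berger2005}). The only extra content is your explicit telescoping over single-site boundary changes, which the paper absorbs into the citation of \cite[Lemma 14]{MosselSly}.
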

\begin{proof}
It is well-known that, on bipartite graphs $G=(V,E)$, there is a measure-preserving bijection between configurations of the antiferromagnetic Ising model with parameter $B\in (0,1)$ and configurations  of the ferromagnetic Ising model with parameter $1/B$, obtained by flipping the states of each vertex on one side of the bipartition.  The desired inequality therefore follows from the strong spatial mixing result for the ferromagnetic Ising model given in~\cite[Lemma 14]{MosselSly}. To translate the parameterisation of that result,  note that in \cite{MosselSly} the weight of a (ferromagnetic) Ising configuration $\sigma$ is parameterised to be proportional to  $\exp(2\beta m(\sigma))$ -- therefore  $1/B = e^{2\beta}$, so $\tanh\beta = \frac{1-B}{1+B}$. Using this translation, we obtain the desired inequality.
\end{proof}

\subsection{Proof of Theorem~\ref{thm:sampleising}}\label{sec:sampleising}
To prove Theorem~\ref{thm:sampleising}, we will use the following algorithm that allows us to compute conditional marginal probabilities with very small absolute error.
\begin{lemma}\label{lem:isingmarginal}
    Let $B\in (0,1)$ and $b>0$ be constants such that $b\frac{1-B}{1+B}<1$,  and let $\Delta\geq 3$ be an integer. Then, there exists  $M_0>0$ such that the following holds for all $M>M_0$.
    
    There is a polynomial-time algorithm that, on input: (i) an $n$-vertex graph $G=(V,E)$ with maximum degree at most $\Delta$ and  average growth $b$ up to depth $L=\left\lceil M\log n\right\rceil$, (ii) a subset $\Lambda\subseteq V$ with a configuration $\tau:\Lambda\to\{1,2\}$, and (iii) a vertex $u\in V\setminus \Lambda$, 
outputs  a number $\hat{p}\in [0,1]$ such that  
    \[\big|\hat{p} -p\big|\leq \frac{B^\Delta}{2n^{11}(1+B^{\Delta})} \mbox{\ \, where $p:=\pi_G(\sigma_u=1\mid\sigma_\Lambda=\tau)$},\]
i.e., $\hat{p}$ is within absolute error $O(1/n^{11})$ from the marginal probability that $\sigma_u=1$ conditioned on $\sigma_\Lambda=\tau$, where $\sigma$ is  from the Ising distribution $\pi_G$ on $G$ with parameter $B$.
\end{lemma}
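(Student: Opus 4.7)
\medskip

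\noindent\textbf{Proof proposal for Lemma~\ref{lem:isingmarginal}.}
The plan is to combine Weitz's self-avoiding walk (SAW) tree reduction (Lemma~\ref{lem:sawtree}) with the strong spatial mixing bound on trees (Lemma~\ref{lem:strongspatialmixing}). By Lemma~\ref{lem:sawtree}, the marginal $p=\pi_G(\sigma_u=1\mid \sigma_\Lambda=\tau)$ equals the marginal at the root of the full SAW tree $T=T_{\mathrm{SAW}}(G,u)$ under the boundary condition that assigns $\eta$ to the terminal leaves $A$ and $\eta'$ to the image $U_\Lambda\setminus A$ of $\Lambda$. The obstacle is that $T$ can have exponential size, so we cannot compute the root marginal on $T$ directly; instead we will truncate $T$ at depth $L=\lceil M\log n\rceil$ and run the standard $O(|T_{\text{trunc}}|)$ tree recursion to obtain $\hat p$.

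Concretely, first I would build the truncated SAW tree $T^{(L)}$ by breadth-first search, keeping only vertices at depth $\leq L$ from the root $u$. Since the maximum degree of $G$ is $\Delta$, the truncated tree has at most $\Delta(\Delta-1)^{L-1}=n^{O(M\log \Delta)}$ nodes, so this preprocessing and the subsequent tree recursion run in polynomial time (with $\Delta$ and $M$ constant). On $T^{(L)}$, the boundary we use is: $\eta$ on the leaves of $T$ that already survive in $T^{(L)}$, $\eta'$ on the vertices in $U_\Lambda\setminus A$, and any fixed value (say state $1$) on the new leaves created by truncation, i.e.\ the set $D_L$ of vertices at depth exactly $L$. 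On this truncated tree, the recursion computes $\hat p$ exactly.

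Next I would analyze the truncation error. The key observation is that, because $T$ is a tree, conditioning on $\sigma_{D_L}=\mathbf{1}$ decouples the part of $T$ above depth $L$ from everything below it, so
\[
\hat p \;=\; \pi_{T^{(L)}}\!\big(\sigma_u=1 \mid \text{full b.c.}\big)\;=\;\pi_T\!\big(\sigma_u=1\mid \sigma_A=\eta,\ \sigma_{U_\Lambda\setminus A}=\eta',\ \sigma_{D_L}=\mathbf{1}\big).
\]
Combined with $p=\pi_T(\sigma_u=1\mid \sigma_A=\eta,\ \sigma_{U_\Lambda\setminus A}=\eta')$, total probability over $\sigma_{D_L}$ yields
\[
|p-\hat p|\ \le\ \max_{\eta''}\Big|\pi_T(\sigma_u=1\mid \text{b.c.},\sigma_{D_L}=\eta'')-\pi_T(\sigma_u=1\mid \text{b.c.},\sigma_{D_L}=\mathbf{1})\Big|.
\]
Both conditioning sets are now identical, differing only in the values on $D_L$, so Lemma~\ref{lem:strongspatialmixing} applies and gives
\[
|p-\hat p|\ \le\ \sum_{v\in D_L}\Big(\tfrac{1-B}{1+B}\Big)^{\mathrm{dist}(u,v)} \;=\; |D_L|\Big(\tfrac{1-B}{1+B}\Big)^{L}.
\]

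Finally, I would use the average growth hypothesis to control $|D_L|$. Vertices at depth $L$ in $T_{\mathrm{SAW}}(G,u)$ are in bijection with self-avoiding walks of $L+1$ vertices in $G$ starting at $u$, and (modulo shifting $L$ by one and absorbing the resulting constant factor of at most $\Delta$) the number of such walks is at most $b^{L+1}$ by Definition~\ref{def:branchingfactor}. Thus
\[
|p-\hat p|\ \le\ \Delta\, b\cdot \Big(b\,\tfrac{1-B}{1+B}\Big)^{L}.
\]
Setting $\alpha:=-\log\!\big(b\tfrac{1-B}{1+B}\big)>0$ (positive by hypothesis), the right-hand side is $\Delta b\cdot n^{-M\alpha}$. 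Choosing $M_0$ so that $M\alpha$ exceeds $11$ by a sufficient margin to absorb the constants $\Delta b$ and $(1+B^\Delta)/B^\Delta$ makes this error at most $B^\Delta/(2n^{11}(1+B^\Delta))$, as required. The main technical care needed is in the total-probability step above, which ensures we apply Lemma~\ref{lem:strongspatialmixing} on a fixed tree with two configurations on the \emph{same} conditioning set; everything else is either a standard tree-recursion computation or a direct application of the hypotheses.
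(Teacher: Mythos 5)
Your proposal is correct and follows essentially the same route as the paper: Weitz's SAW-tree reduction, truncation at logarithmic depth, the strong spatial mixing bound of Lemma~\ref{lem:strongspatialmixing} applied to the cut frontier, and the average-growth hypothesis to bound the frontier size, with $M_0$ chosen large enough to beat the $n^{-11}$ target. The only (immaterial) differences are that the paper truncates at depth $L-1$ so the frontier corresponds exactly to paths with $L$ vertices, and defines $\hat p$ without pinning the frontier (comparing two arbitrary frontier configurations rather than an arbitrary one against the all-$1$ pinning); both variations are handled correctly by your constant-absorbing step.
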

\begin{proof}
Let  $M_0$ be a large constant so that  for $n\geq 2$ and  $L_0:=\left\lceil M_0\log n\right\rceil$, it holds that 
\begin{equation}\label{eq:533vtvtv4}
b^{L_0}\Big(\frac{1-B}{1+B}\Big)^{L_0}\leq \frac{B^\Delta}{2n^{11}(1+B^{\Delta})}.
\end{equation}
Note that such an $M_0$ exists since $b\frac{1-B}{1+B}<1$. Fix  $M$ to be an arbitrary constant larger than $M_0$. 

Let $G$ be an arbitrary $n$-vertex graph with average growth $b$ up to depth $L=\left\lceil M\log n\right\rceil$, $\Lambda\subseteq V$ be a subset of the vertices, $\tau:\Lambda\to\{1,2\}$ be a configuration on $\Lambda$, and $u\in V\setminus \Lambda$ be a vertex. Consider the self-avoiding walk tree $T=T_{\rm SAW}(G,u)$ starting from $u$, and denote by $A$ the leaves of the tree. By Lemma \ref{lem:sawtree},   we have that there is a configuration $\eta: A\rightarrow \{1,2\}$ such that 
\begin{equation}\label{eq:tv3tv3451124}
\pi_G(\sigma_u=1\mid \sigma_\Lambda=\tau)=\pi_T(\sigma_u=1\mid \sigma_A=\eta, \sigma_{U\setminus A}=\eta')
\end{equation}
 where $U:=U_\Lambda$ is the set of vertices in $T$ that correspond to some vertex in $\Lambda$ and $\eta'$ is the assignment on $U\setminus A$ inherited by $\tau$ (see Lemma~\ref{lem:sawtree} for details). For convenience, let $F:=U\cup A$ and $\zeta: F \rightarrow \{1,2\}$ be the configuration which agrees with $\eta$ on $A$ and with $\eta'$ on $U\backslash A$, so that \eqref{eq:tv3tv3451124} can be rewritten as
\[\pi_G(\sigma_u=1\mid \sigma_\Lambda=\tau)=\pi_T(\sigma_u=1\mid \sigma_F=\zeta).\]
Let $T'=(V',E')$ be the subtree of $T$ induced by vertices at distance at most $L-1$ from $u$ and let $F'= F \cap V'$. Let 
\[p:=\pi_T(\sigma_u=1\mid \sigma_F=\zeta) \quad\mbox{and}\quad \hat{p}:=\pi_{T'}(\sigma_u=1\mid \sigma_{F'}=\zeta_{F'}).\]
Note that $\hat{p}$ can be computed in polynomial time since the tree $T'$ and the  configuration $\zeta_{F'}$ can be constructed in polynomial time (since $T'$ is a tree of size $O(\Delta^{L})$). So, the lemma will follow by showing that 
\begin{equation}\label{eq:papproxphat}
\big|\hat{p} -p\big|\leq \frac{B^\Delta}{2n^{11}(1+B^\Delta)}.
\end{equation}
To prove this, let $J$ be the set of vertices in $V'\backslash F'$ whose distance from $u$ is exactly $L-1$ in $T$ and note that $|J|\leq b^{L}$ since $G$ has average growth $b$ up to depth $L$ (cf. Definition~\ref{def:branchingfactor}). Note also  that, conditioned on the configurations on $J$ and $F'$, the probability that $\sigma_u=1$ depends only on $T'$ (and not on the configuration on rest of the tree $T$). Using the law of total probability,   we can therefore expand $p$ as
\begin{equation}\label{eq:3rf34ff3t1223}
p=\sum_{\iota: J\rightarrow\{1,2\}}\pi_{T'}(\sigma_u=1\mid \sigma_{F'}=\zeta_{F'}, \sigma_J=\iota)  \times \pi_{T}(\sigma_J=\iota\mid \sigma_{F}=\zeta_{F}).
\end{equation}
Therefore, \eqref{eq:papproxphat} will follow by showing that, for any configuration $\iota:J\rightarrow\{1,2\}$ it holds that
\begin{equation}\label{eq:rv3r63eververr}
\big|\pi_{T'}\big(\sigma_u=1\mid \sigma_{F'}=\zeta_{F'},\, \sigma_J=\iota\big)-\hat{p}\big|\leq  \frac{B^\Delta}{2n^{11}(1+B^\Delta)}.
\end{equation} 
Note that we can expand $\hat{p}$ analogously to \eqref{eq:3rf34ff3t1223} by conditioning on the configuration on $J$, so to prove \eqref{eq:3rf34ff3t1223} it suffices to show that for any two configurations $\iota_1,\iota_2: J\rightarrow [q]$ it holds that 
\[\kappa\leq  \frac{B^\Delta}{2n^{11}(1+B^\Delta)}\] 
where $\kappa:=\big|\pi_{T'}\big(\sigma_u=1\mid \sigma_{F'}=\zeta_{F'},\, \sigma_J=\iota_1\big)-\pi_{T'}\big(\sigma_u=1\mid \sigma_{F'}=\zeta_{F'},\, \sigma_J=\iota_2\big)\big|$. By the strong spatial mixing result of Lemma~\ref{lem:strongspatialmixing}, we have that
    \begin{equation*}
\kappa\leq\sum_{v\in J}\Big(\frac{1-B}{B+1}\Big)^{\mathrm{dist}(u,v)}=|J|\Big(\frac{1-B}{B+1}\Big)^L\leq b^{L}\Big(\frac{1-B}{1+B}\Big)^L.
    \end{equation*}
Combining this with the choice of $M_0$ (cf. \eqref{eq:533vtvtv4}), we obtain  \eqref{eq:rv3r63eververr}, thus  concluding the proof of the lemma.  
\end{proof}

Using Lemma~\ref{lem:isingmarginal}, the proof of Theorem~\ref{thm:sampleising} follows by standard techniques. We restate it here for convenience.
\begin{thm:sampleising}
\statethmsampleising
\end{thm:sampleising}
\begin{proof}[Proof of Theorem~\ref{thm:sampleising}]
Denote by $v_1, v_2, \ldots, v_n$ the vertices of $G$. The algorithm will sample the state $s_i$ of vertex $v_i$ sequentially for $i=1,\hdots,n$. We just give the details for the first part of the algorithm, the proof of the second part is completely analogous (namely, it suffices to  assume in the following that we first set $v_1 = u, v_2 = v$, we then fix the states $s_1 = 1, s_2 = 2$ and finally sample the states $s_i$ for $i=3,\hdots,n$.) 

 Assume that, at some time $i=1,\hdots, n$, we have sampled the states $s_1,\hdots, s_{i-1}$ (which can take arbitrary values in $\{1,2\}$). Using the algorithm of Lemma~\ref{lem:isingmarginal}, we obtain in polynomial time numbers $\hat p_i(1), \hat p_i(2) \in [0,1]$ such that $\hat p_i(1)+\hat p_i(2)=1$ and, for $s\in \{1,2\}$, it holds that 
\begin{equation}\label{eq:v4vyeveqsq12}
\big|a_i(s)-\hat{p}_i(s)\big|\leq \frac{B^\Delta}{2n^{11}(1+B^\Delta)},\mbox{ where } a_i(s):=\pi_G(\sigma_{v_i}=s\mid\sigma_{v_1}=s_1, \ldots,\sigma_{v_{i-1}}=s_{i-1}).
\end{equation}
We  then sample the state $s_i$ by letting $s_i = 1$ with probability $\hat{p}_i(1)$, or else $s_i = 2$ (note that $s_i=2$ with probability $\hat{p}_i(2)$). Denote by $\tau$ the final configuration and by $\nu_\tau$ its distribution. 

We will show that, for any configuration $\eta:V\rightarrow\{1,2\}$, it holds that 
\begin{equation}\label{eq:4vgvhyerfwef4112}
|\nu_\tau(\eta)-\pi_G(\eta)|\leq \frac{2}{n^{10}}\pi_G(\eta),
\end{equation}
so by summing over $\eta$ we obtain
\[\big\lVert\nu_\tau-\pi_G\big\rVert_{\TV}=\frac{1}{2}\sum_{\eta\colon V\to\{1,2\}}\big|\nu_\tau(\eta)-\pi_G(\eta)\big|\leq  \frac{1}{n^\cons}\sum_{\eta\colon V\to\{1,2\}}\pi_G(\eta)= 1/n^{\cons},
\]
which proves the first part of the theorem; the second part follows analogously. To prove \eqref{eq:4vgvhyerfwef4112}, fix an arbitrary configuration $\eta:V\rightarrow\{1,2\}$ and let
\[p_{i,\eta}:=\pi_G(\sigma_{v_i}=\eta_{v_i}\mid\sigma_{v_1}=\eta_{v_1}, \ldots,\sigma_{v_{i-1}}=\eta_{v_{i-1}}).\] 
Note that 
\begin{equation}\label{eq:4rfvt4b121314}
\pi_G(\eta)=\prod^n_{i=1} p_{i,\eta}\quad \mbox{and}\quad \nu_\tau(\eta)=\prod^n_{i=1} \hat p_i(\eta_{v_i}).
\end{equation}
Moreover, from \eqref{eq:v4vyeveqsq12}, we have that for all $i=1,\hdots,n$ it holds that
\[\big|p_{i,\eta}-\hat{p}_i(\eta_{v_i})\big|\leq \frac{B^{\Delta}}{2n^{11}(1+B^{\Delta})}\leq \frac{p_{i,\eta}}{2n^{11}},\]
where the last inequality follows from  the lower bound of Lemma~\ref{lem:f4f35f63g}. Using the inequalities $1-x\leq {\mathrm e}^{-x}\leq 1-x/2$ which hold for all $x\in [0,1/2]$, we obtain that, for  $\epsilon=1/n^{11}$, it holds that 
\[{\mathrm e}^{-\epsilon}p_{i,\eta}\leq \hat p_i(\eta_{v_i})\leq {\mathrm e}^{\epsilon}p_{i,\eta}.\] Multiplying over $i=1,\hdots,n$ and combining with \eqref{eq:4rfvt4b121314} gives \eqref{eq:4vgvhyerfwef4112}, thus completing the proof of Theorem~\ref{thm:sampleising}.
\end{proof}

\subsection{Proof of Lemma~\ref{lem:corrdecayIsing}}\label{sec:corrdecayIsing}
In this section, we give the proof of Lemma~\ref{lem:corrdecayIsing}, which we restate here for convenience.
\begin{lem:corrdecayIsing}
\statelemcorrdecayIsing
\end{lem:corrdecayIsing}
\begin{proof}[Proof of Lemma~\ref{lem:corrdecayIsing}]
Let  $M_0'$ be sufficiently large so that  for all $n\geq 2$ and  $L_0:=\left\lceil M_0'\log n\right\rceil$, it holds that 
\begin{equation}\label{eq:533vtvtv4b}
b^{L_0}\Big(\frac{1-B}{1+B}\Big)^{L_0}\leq 1/(2n^{\cons}).
\end{equation}
Note that such a constant exists since $b\frac{1-B}{1+B}<1$. Fix arbitrary $M>M_0'$. Let $G$ be an arbitrary graph with average growth $b$ up to depth $L=\left\lceil M\log n\right\rceil$, $\Lambda\subseteq V$ be a subset of the vertices, $\tau:\Lambda\to\{1,2\}$ be a configuration on $\Lambda$, and $u\in V\setminus \Lambda$ be a vertex. 

Consider the self-avoiding walk tree $T=T_{\rm SAW}(G,u)$ starting from $u$, and denote by $A$ the leaves of the tree and by $U$ the set of vertices in $T$ that correspond to $v$. For a subset of vertices $W$ of the tree we denote by $\sigma_W=1$ the event that all vertices in $W$ have the state 1 and analogously for $\sigma_W=2$.  By Lemma \ref{lem:sawtree},   we have that there is a configuration $\eta: A\rightarrow \{1,2\}$ such that for $s\in \{1,2\}$ it holds that 
\begin{equation}\label{eq:tv3tv34544f1124b}
\pi_G(\sigma_u=1\mid \sigma_v=s)=p_s \mbox{ where } p_s:=\pi_T\big(\sigma_u=1\mid \sigma_A=\eta,\, \sigma_{U\setminus A}=s\big).
\end{equation}

Let $T'=(V',E')$ be the subtree of $T$ induced by vertices at distance at most $L-1$ from $u$ and let $A'= A \cap V'$, $U'=U\cap V'$. For $s\in \{1,2\}$, let 
\[p_s':=\pi_{T'}(\sigma_u=1\mid \sigma_{A'}=\eta_{A'}, \sigma_{U'\backslash A'}=s).\]
We will show that
\begin{gather}
|p_1'-p_2'|\leq \sum^{L}_{\ell=1}P_\ell(G,u,v)\Big(\frac{1-B}{1+B}\Big)^{\ell},\label{eq:p1p2b}\\
|p_1-p_1'|\leq 1/(2n^{\cons}),\quad |p_2-p_2'|\leq 1/(2n^{\cons})\label{eq:p1p1p2p2}.
\end{gather}
Assuming these for the moment, we obtain by \eqref{eq:tv3tv34544f1124b} and the triangle inequality that 
\[|\pi_G(\sigma_u=1\mid \sigma_v=1)-\pi_G(\sigma_u=1\mid \sigma_v=2)|=|p_1-p_2|\leq \frac{1}{n^{\cons}}+\sum^{L}_{\ell=1}P_\ell(G,u,v)\Big(\frac{1-B}{1+B}\Big)^{\ell},
\]
thus proving the lemma. It thus remains to prove \eqref{eq:p1p2b} and \eqref{eq:p1p1p2p2}.

To prove \eqref{eq:p1p2b}, note that by Lemma~\ref{lem:strongspatialmixing}, we have that
\[\big|p_1'-p_2'\big|\leq \sum_{w\in U'\backslash A'}\Big(\frac{1-B}{1+B}\Big)^{\dist(u,w)}\leq \sum^{L}_{\ell=1}P_\ell(G,u,v)\Big(\frac{1-B}{1+B}\Big)^{\ell},\]
where the last inequality follows from observing  that each vertex $w\in U'\backslash A'$ with $\dist(u,w)=\ell$ corresponds to a distinct path with $\ell+1$ vertices between $u$ and $v$ in $G$. This proves \eqref{eq:p1p2b}.

To prove \eqref{eq:p1p1p2p2}, we focus on showing that $|p_1-p_1'|\leq 1/(2n^{\cons})$, the other inequality being completely analogous. We follow closely a similar argument which was presented in the proof of Lemma~\ref{lem:isingmarginal}. Let $J$ be the set of vertices in $V'\backslash (U'\cup A')$ whose distance from $u$ is exactly $L-1$ in $T$ and note that $|J|\leq b^{L}$ since $G$ has average growth $b$ up to depth $L$ (cf. Definition~\ref{def:branchingfactor}). Note also  that, conditioned on the configurations on $J$ and $U'\cup A'$, the probability that $\sigma_u=1$ depends only on $T'$ (and not on the configuration on rest of the tree $T$), i.e.,   we can expand $p$ as
\begin{equation}\label{eq:3rf34ff3t1223v2}
p_1=\sum_{\iota: J\rightarrow\{1,2\}}\pi_{T'}(\sigma_u=1\mid \sigma_{A'}=\eta_{A'}, \sigma_{U'\backslash A'}=1, \sigma_J=\iota)  \times \pi_{T}(\sigma_J=\iota\mid \sigma_{A}=\eta, \sigma_{U\backslash A}=1).
\end{equation}
Therefore, $|p_1-p_1'|\leq 1/(2n^{\cons})$ will follow by showing that, for any configuration $\iota:J\rightarrow\{1,2\}$ it holds that
\begin{equation}\label{eq:rv3r63eververrv2}
\big|\pi_{T'}\big(\sigma_u=1\mid \sigma_{A'}=\eta_{A'}, \sigma_{U'\backslash A'}=1,\, \sigma_J=\iota\big)-p_1'\big|\leq  1/(2n^{\cons}).
\end{equation} 
We can expand $p_1'$ analogously to \eqref{eq:3rf34ff3t1223v2} by conditioning on the configuration on $J$, so to prove \eqref{eq:3rf34ff3t1223v2} it suffices to show that for any two configurations $\iota_1,\iota_2: J\rightarrow [q]$ it holds that $\kappa\leq 1/(2n^{\cons})$ where 
\[\kappa:=\big|\pi_{T'}\big(\sigma_u=1\mid \sigma_{A'}=\eta_{A'}, \sigma_{U'\backslash A'}=1,\, \sigma_J=\iota_1\big)-\pi_{T'}\big(\sigma_u=1\mid \sigma_{A'}=\eta_{A'}, \sigma_{U'\backslash A'}=1,\, \sigma_J=\iota_2\big)\big|\] 
By the strong spatial mixing result of Lemma~\ref{lem:strongspatialmixing}, we have that
    \begin{equation*}
\kappa\leq\sum_{v\in J}\Big(\frac{1-B}{B+1}\Big)^{\mathrm{dist}(u,v)}=|J|\Big(\frac{1-B}{B+1}\Big)^L\leq b^{L}\Big(\frac{1-B}{1+B}\Big)^L.
    \end{equation*}
Combining this with the choice of $M_0'$ (cf. \eqref{eq:533vtvtv4b}), we obtain  $\kappa\leq 1/(2n^{\cons})$, thus  concluding the proof of  $|p_1-p_1'|\leq 1/(2n^{\cons})$ and  therefore completing the proof of Lemma~\ref{lem:corrdecayIsing}.  
\end{proof}

\bibliographystyle{plain}
\bibliography{\jobname}

\end{document}